\newtheorem{thm}{Theorem}[section]
\newtheorem{lmm}{Lemma}[section]
\newcommand{\Pl}{\ell_\mathrm{p}} 
\newcommand{\sgn}{\mathrm{sgn}} 
\newcommand{\hil}{\mathcal{H}} 
\newcommand{\hmu}{\mathcal{H}_{\mathfrak b}}
\newcommand{\htau}{\mathcal{H}_{\mathfrak c}}
\newcommand{\dmu}{D_{\mathfrak b}}
\newcommand{\dd}{{\rm d}}
\newcommand{\h}{\mathfrak{h}}
\newcommand{\hbh}{\hil_{\rm BH}}
\renewcommand{\b}{\mathfrak b}
\renewcommand{\c}{\mathfrak c}
\newcommand{\hmum}{\hmu(m)}
\newcommand{\f}{\mathfrak{f}}
\newsavebox{\@brx}
\newcommand{\llangle}[1][]{\savebox{\@brx}{\(\m@th{#1\langle}\)}%
  \mathopen{\copy\@brx\kern-0.5\wd\@brx\usebox{\@brx}}}
\newcommand{\rrangle}[1][]{\savebox{\@brx}{\(\m@th{#1\rangle}\)}%
  \mathclose{\copy\@brx\kern-0.5\wd\@brx\usebox{\@brx}}}
\newcommand{\llb}[1][]{\savebox{\@brx}{\(\m@th{#1(}\)}%
  \mathopen{\copy\@brx\kern-0.5\wd\@brx\usebox{\@brx}}}
\newcommand{\rrb}[1][]{\savebox{\@brx}{\(\m@th{#1)}\)}%
  \mathclose{\copy\@brx\kern-0.5\wd\@brx\usebox{\@brx}}}
\newcommand{\hpb}{\hat{p}_b}
\newcommand{\hpc}{\hat{p}_c}
\newcommand{\hp}{\hat{\h}}
\newcommand{\hpm}{\hp^{(m)}}
\begin{document}

\title{ Loop quantum deparametrized Schwarzschild interior and discrete black hole mass
 }

\author{Cong Zhang\footnote{ czhang@fuw.edu.pl}}
\affiliation{Faculty of Physics, University of Warsaw, Pasteura 5, 02-093 Warsaw, Poland}
\author{Yongge Ma\footnote{ mayg@bnu.edu.cn}}
\affiliation{Department of Physics, Beijing Normal University, Beijing 100875, China}

\author{Shupeng Song\footnote{songsp@bit.edu.cn}}
\affiliation{School of Physics, Beijing Institute of Technology, Beijing 100081, China}

\author{Xiangdong Zhang\footnote{ scxdzhang@scut.edu.cn}}
\affiliation{Department of Physics, South China University of
Technology, Guangzhou 510641, China}

\begin{abstract}
We present the detailed analyses of a model of loop quantum Schwarzschild interior coupled to a massless scalar field and extend the results in our previous rapid communication \cite{zhang2020loop} to more general schemes.  It is shown that the spectrum of the black hole mass is discrete and does not contain zero. This supports the existence of a black hole remnant after Hawking evaporation due to loop quantum gravity effects. Besides to show the existence of  a non-vanishing minimal black hole mass in the vacuum case, the quantum dynamics for the non-vacuum case is also solved and  compared with the effective one.
\end{abstract}
\pacs{04.60.Pp, 04.70.Dy}


\maketitle
\section{Introduction}
As a prediction of general relativity (GR), the existence of black holes has a broad base of support from observations \cite{Akiyama:2019cqa}.  However, our understanding on black hole (BH) is still far from the end. Among those challenging topics on BH, its quantum nature is particularly interesting. By studying quantum BHs, one could not only solve puzzles originating from the classical theory, but also achieve more understandings on the theory of quantum gravity. 

As a background independent approach to quantum gravity, loop quantum gravity (LQG) has been widely studied in the past 30 years \cite{ashtekar1991lectures,rovelli2004quantum,ashtekar2004background,han2007fundamental,thiemann2008modern,abhay2017loop}. 
Although some important breakthroughs have been made in LQG \cite{rovelli1995discreteness,ashtekar1995quantization,ashtekar1997quantum,ashtekar1997quantumII,thiemann1998quantum,thiemann1998quantumII,ma2000qhat,lewandowski2005uniqueness,yang2016new}, its dynamics is still an open issue. The obstacle of LQG can be bypassed through applying the loop quantization techniques to the symmetry-reduced sectors of GR, where the  expression of the Hamiltonian constraint becomes much simpler than that in the full theory. The resulting quantum models are expected to reflect some quantum features of full LQG, in spite of the fact that they might not be equivalent to the direct symmetric sector of full LQG. An improved treatment of quantum-reduced LQG has also been proposed to study the symmetric sectors \cite{alesci2013quantumreduced}. These ideas were applied to study loop quantum Schwarzschild BH recently with different perspectives \cite{ashtekar2005quantum,modesto2006loop,boehmer2007loop,campiglia2008loop,chiou2008phenomenological,gambini2013loop,corichi2016loop,bojowald2018effective,ashtekar2018quantum,ashtekar2018quantumII,rovelli2018small,bianchi2018white,achour2018polymer,alesci2018quantum,alesci2019quantum,bodendorfer2019mass,bodendorfer2019effective,kelly2020black,han2020improved}. However, most of these studies focused on the effective dynamics, where one considered the Hamiltonian constraint with the holonomy correction and solved the effective Hamiton's equations.  This treatment resulted in several important achievements.  In particular, it resolves the singularity inside the Schwarzschild BH
and predicts certain extensions of the Schwarzschild interior beyond the singularity (see, for instance, \cite{ashtekar2018quantum,ashtekar2018quantumII,bodendorfer2019effective,kelly2020black,han2020improved}).  However, in the effective prescription one cannot see more intrinsic quantum natures of BH, such as the ground state of quantum BHs and the discreteness of the spectrum of Dirac observables. After all, it is necessary to consider the quantum dynamics. 

There are several crucial topics on the quantum dynamics of BH. One is the issue of the final state of BH evaporation which is related to the  constituent of dark matter and the puzzle of information loss.
According to the Hawking radiation \cite{hawking1974black}, the primordial mini BHs in the very early universe should be completely evaporated by now.  However, if the BH evaporation is halted at some stable state by some quantum gravity effect, which is called the BH remnant, these remnants would result in important cosmological consequences \cite{barrow1992cosmology,carr1994black,dalianis2019primordial}. Remarkably, the remnants originating from these primordial BHs could even comprise the entire dark matter in the universe \cite{barrow1992cosmology,carr1994black}. Moreover, thanks to the remnant, one could argue that the information fallen into a BH with matters could be stored in the remnant after its evaporation. This  provides a possible approach to solve the puzzle of information loss \cite{preskill1992black,chen2015black}. Furthermore, the distortion of the semiclassical Hawking spectrum resulted from certain discreteness of the BH mass  was studied \cite{bekenstein1997quantum,pranzetti2012radiation,barrau2015black,lochan2016discrete}. 
It was argued that in certain cases, the distortion could be observable even for macroscopic BHs \cite{bekenstein1997quantum}. Although these debates are crucial and long standing, there was no systematic study by quantum gravity to lay a solid theoretical foundation for the arguments until the prediction of a BH remnant in LQG models by \cite{zhang2020loop}. The purpose of this paper is to provide the detailed constructions in \cite{zhang2020loop} and extend the results to more general schemes. Moreover, the quantum dynamics of the model will be further studied in detail.

We study the model of loop quantum Schwarzschild interior coupled to a massless scalar field. The quantum Schwarzschild BH, as the vacuum case of this model, can be resulted by vanishing the scalar field. 
The phase space of this system contains three pairs of canonical variables, $(b,p_b)$ and $(c,p_c)$ for gravity, and $(\varphi,p_\varphi)$ for the scalar field. By deparametrizing this model, one gets the physical Hamiltonian $\sqrt{\h}$  of the relational evolution with respect to the scalar field $\varphi$ \cite{alesci2015hamiltonian,dapor2013relational,zhang2020quantum}. In the classical theory, the Poisson bracket between $\h$ and the ADM mass $M$ vanishes. This indicates a classical Dirac observable $m=c p_c$ proportional to $M$.  
However, this commutativity may no longer be kept by the corresponding operators $\hp$ and $\hat m$ in the quantum theory, 
which is relevant to the choice of schemes for the quantization of $\h$. 
We only focus on the schemes such that   $\hp$ and $\hat m$ are still commutative, since the commutativity means the existence of a Dirac observable $\hat m$. 
Note that 
a general class of schemes adopted for the loop quantization of the current model can meet our requirement. In particular, it is valid for the $\mu_o$-scheme \cite{ashtekar2005quantum,modesto2006loop} and the new scheme balancing the $\mu_o$ scheme and $\bar\mu$ scheme \cite{corichi2016loop,ashtekar2018quantum,ashtekar2018quantumII}. However, it cannot be met by the $\bar\mu$ scheme \cite{boehmer2007loop,chiou2008phenomenological}. 

We will first construct the Hamiltonian operator $\hp$ and study its properties analytically. Thanks to these analytical results, a  numerical method to diagonalize $\hp$ is proposed so that the dynamics is computable. 
Then the quantum dynamics of the model can be solved for the non-vacuum case, and it can be compared with the effective one. For the vacuum case, 
the Hilbert space consisting of the physical states of the Schwarzschild BH
is built up. In this Hilbert space $m$ is promoted to an operator  which has discrete spectrum $\overline{\sigma_\xi}$ with $0\notin \overline{\sigma_\xi}$. 
This result supports the existence of a stable BH remnant. 

This paper is organized as follows. In Sec \ref{section2}, the theory of loop quantum Schwarzschild BH interior coupled to a massless scalar field is briefly reviewed, including the deparametrization and the polymer quantization of this model.  In Sec. \ref{section4}, we construct the physical Hamiltonian operator and study its properties analytically. Then the quantum dynamics for both non-vacuum and vacuum cases are solved in Sec. \ref{section6}. Finally, in Sec. \ref{sec:conclusion}, our results are summarized and discussed. 

\section{Preliminaries}\label{section2}
\subsection{Deparametrization of the model}\label{sec:deparametrization}
Given a spatially homogeneous  3-manifold $\Sigma$ of topology $\mathbb R\times S^2$. Because of the homogeneity, $\Sigma$ is endowed with a fiducial metric
 \begin{equation}\label{eq:fiducialmetric}
\mathring{q}_{ab}\dd x^a\dd x^b=\dd x^2+r_o^2(\dd \theta^2+\sin^2\theta\dd\phi^2)
\end{equation}
where $(x,\theta,\phi)$ are the natural coordinates adapted to the topology, and $r_o$ is a constant with dimension of length. Since $\Sigma$ is non-compact in the $x$ direction, we introduce an elementary cell  $\mathcal{C}\cong (0,L_0)\times \mathbb{S}^2$ in $\Sigma$ and restrict all integrals to this elemental cell to avoid the divergence of integrations. 

The classical phase space of gravity coupled to a massless scalar field contains the Ashtekar-Barbero canonical conjugate pairs $(A_a^i(x),E^a_i(x))$ for gravity and $(\varphi(x),\pi(x))$ for scalar field. As far as the homogeneous states are concerned, the scalar filed $\varphi$ is reduced to a constant and the fields $A_a^i(x)$, $E^a_i(x)$ and $\pi_\varphi(x)$ take the forms \cite{zhang2020quantum}
\begin{equation}\label{eq:reducedAEPi}
\begin{aligned}
A_a^i\tau_i\dd x^a&=\frac{c}{L_0}\tau_3\dd x+b\tau_2\dd\theta-b\tau_1\sin\theta\dd\phi+\tau_3\cos\theta\dd\phi\\
E_i^a\tau^i\partial_a&=p_c\tau_3\sin\theta\partial_x+\frac{p_b}{L_0}\tau_2\sin\theta\partial_\theta-\frac{p_b}{L_0}\tau_1\partial_\phi\\
\pi_\varphi&=\frac{p_\varphi}{4\pi r_0^2}\sqrt{\mathring{q}}=\frac{p_\varphi}{4\pi }\sin\theta
\end{aligned}
\end{equation}
where $\tau_j=-i\sigma_j/2$ $(j=1,2,3)$ with $\sigma_j$ being the Pauli matrix and, $c,\ b,\ p_c,\ p_b$ and $p_\varphi$ are all constants. 
According to \eqref{eq:reducedAEPi}, the symmetry-reduced phase space is coordinatized by the pairs $(c,p_c)$, $(b,p_b)$ and $(\varphi,p_\varphi)$. The non-vanishing Poisson brackets read
\begin{equation}\label{eq:clapoisson}
\{c,p_c\}=2G\gamma,\ \{b,p_b\}=G\gamma,\ \{\varphi,p_\varphi\}=1.
\end{equation}
By the symmetry-reduced expression \eqref{eq:reducedAEPi}, the Gaussian and diffeomorphism constraints vanish automatically. 
The dynamics of this model is encoded in the Hamiltonian constraint. In the full theory of gravity coupled to a massless scalar field, the Hamiltonian constraint can be deparametrized as \cite{domagala2010gravity}
\begin{equation}\label{eq:Hphysical}
\begin{aligned}
C(x)=\pi_\varphi(x)\pm \sqrt{h(x)}=0,
\end{aligned}
\end{equation}
where $h(x)=-2\sqrt{|\det(E(x))|}C_{\rm gr}(x)$, with $C_{\rm gr}$ being the vacuum-gravity Hamiltonian constraint. The so-called physical Hamiltonian can be written as
\begin{equation}\label{eq:Hphysical1}
h_{\rm phy}=\frac{\sqrt{G}\gamma}{\sqrt{4\pi}}\int_{\Sigma}\dd^3 x\sqrt{h(x)}=\frac{\gamma}{4\sqrt{2}\pi}\int_{\Sigma}\dd^3 x\sqrt{E_i^aE_j^b\left(-F^k_{ab}\epsilon_{ijk}+2(1+\gamma^2)K_{[a}^iK_{b]}^j\right)}
\end{equation}
where $F^k_{ab}$ denotes the curvature of the connection $A_a^i$, $K_a^i$ is the extrinsic curvature and
the prefactor is adapted for convenience. It generates the relational evolution with respect to the scalar field. 
Substituting \eqref{eq:reducedAEPi} in to \eqref{eq:Hphysical} and integrating both sides in the elementary cell $\mathcal C$, one finally obtains  the Hamiltonian constraint of our model as
\begin{equation}\label{eq:classicalreducedH}
p_\varphi\pm \frac{\sqrt{4\pi}}{\sqrt{G} L_0\gamma}\sqrt{\h}=0
\end{equation}
where $\h$ is given by
\begin{equation}\label{eq:clh0}
\h:= p_b  \left(\left(b^2+\gamma ^2\right)p_b +2 b c p_c\right)
\end{equation}
In this work, we will also refer to the physical Hamiltonian as $\h$ though it is related to the true physical Hamiltonian $h_{\rm phy}$ by $\sqrt{\h}=h_{\rm phy}$. 

The quantum states of the model are described by vectors in the Hilbert space $\mathcal H_t=\mathcal H_{\rm mat}\otimes \mathcal H_{\rm gr}$, where $\mathcal H_{\rm mat}$ and $\mathcal H_{\rm gr}$ are the Hilbert spaces describing the matter and gravity respectively. The physical states $|\psi\rangle_{\rm phy}$ should satisfy the quantum version of the Hamiltonian constraint, i.e.,
\begin{equation}\label{eq:quantumreducedH}
\left(\hat p_\varphi \pm \frac{\sqrt{4\pi}}{L_0\gamma\sqrt{G}} \widehat{\sqrt{\h}}\right)|\psi\rangle_{\rm phy}=0.
\end{equation} 
It gives us a Sch\"odinger-like equation. Therefore, in the case that the operator $\widehat{\sqrt{\h}}$ is self-adjoint, the physical states, i.e., the solutions of \eqref{eq:quantumreducedH}, can be expressed as
\begin{equation}\label{eq:phystate}
|\psi\rangle_{\rm phy}=e^{\mp i\frac{\sqrt{4\pi G}}{L_0\gamma \Pl^2}\varphi \widehat{\sqrt{\h}}}|\psi\rangle_{\rm gr}
\end{equation}
where $|\psi\rangle_{\rm gr}\in\mathcal H_{\rm gr}$ is a state of pure gravity  and $\Pl:=\sqrt{G\hbar}$ is the Plank length. A Dirac observable $\mathcal O_{\rm phy}$ takes the form
\begin{equation}
\mathcal O_{\rm phy}=e^{\mp i\frac{\sqrt{4\pi G}}{L_0\gamma\Pl^2}\varphi  \widehat{\sqrt{\h}}}\mathcal O_{\rm gr}e^{\pm i\frac{\sqrt{4\pi G}}{L_0\gamma\Pl^2}\varphi  \widehat{\sqrt{\h}}}
\end{equation} 
with an operator $\mathcal O_{\rm gr}$ in $\mathcal H_{\rm gr}$. 

To carry out the above deparametrization procedure, in next subsection we will introduce the polymer quantization for the gravity and obtain its Hilbert space $\mathcal H$. Then a self-adjoint operator $\hp$ on $\mathcal{H}$ is proposed by the loop quantization procedure. Finally we restrict ourselves to the subspace consisting of non-negative spectra of $\hp$ to define  the Hilbert space
\begin{equation}\label{eq:hgr}
\mathcal H_{\rm gr}:=\hat P_{[0,\infty)}\mathcal H
\end{equation} 
where $\hat P_{[0,\infty)}:=\chi_{[0,\infty)}(\hp)$ is the projection operator with respect to the spectrum decomposition of $\hp$. 

\subsection{The polymer quantization }\label{sec:polymerquantization}
The polymer quantization of the gravity in this model leads to the Hilbert space \cite{ashtekar2005quantum}
\begin{equation}
\tilde{\mathcal H}=\tilde{\mathcal H}_{\b}\otimes\tilde{\mathcal H}_{\c}=L^2(\mathbb{R}_{\rm Bohr},\dd\mu_0)\otimes L^2(\mathbb{R}_{\rm Bohr},\dd\mu_0),
\end{equation}
where $\dd\mu_0$ is the Haar measure on the Bohr compactification $\mathbb{R}_{\rm Bohr}$ of the real line $\mathbb{R}$ \cite[Chapter 28]{thiemann2008modern}. The two spaces $\tilde{\mathcal H}_{\b}$ and $\tilde{\mathcal H}_{\c}$ correspond to the canonical conjugate pairs $(b,p_b)$ and $(c,p_c)$ respectively. The standard basis of these two Hilbert spaces are denoted by $|\mu\rangle\in\tilde{\mathcal H}_{\b}$ and $|\tau\rangle\in\tilde{\mathcal H}_{\c}$. Their inner products read
\begin{equation}\label{eq:basis}
\langle \mu'|\mu\rangle=\delta_{\mu',\mu},\ \langle\tau'|\tau\rangle=\delta_{\tau',\tau},
\end{equation}
where the Kronecker-$\delta$ symbol is employed. 

There are two types of basic operators in the Hilbert spaces. One is the momentum-variable operators $\hpb$ and $\hpc$, whose actions on $|\mu\rangle$ and $|\tau\rangle$ are given by
\begin{equation}
\begin{aligned}
\hpb\, |\mu\rangle&=\frac{\gamma \Pl^2}{2}\mu|\mu\rangle\\
\hpc\, |\tau\rangle&=\gamma \Pl^2\,\tau|\tau\rangle.
\end{aligned}
\end{equation}
The other is the configuration-variable operators $\widehat{e^{i\lambda b}}$ and $\widehat{e^{i\lambda c}}$, whose actions read 
\begin{equation}
\begin{aligned}
\widehat{e^{i\lambda b}}|\mu\rangle&=|\mu+2\lambda\rangle\\
\widehat{e^{i\lambda c}}|\tau\rangle&=|\tau+2\lambda\rangle.
\end{aligned}
\end{equation}
As in the model of loop quantum cosmology (LQC), the operators $\widehat{e^{i\lambda b}}$ and $\widehat{e^{i\lambda c}}$  correspond to the holonomies along the edges parallel to  the $\mathbb R$ direction and the equator (or the  longitude because of the homogeneity) of $\mathbb S^2$ respectively. 
  Moreover, $\widehat{e^{i\lambda b}}$ and $\widehat{e^{i\lambda c}}$ are not strongly continuous with respect to $\lambda$. Therefore there are no operators corresponding to the configuration variables $b$ and $c$ in our model. This respect to the fact that there does not exist operator corresponding to the connection itself in the full theory. 

\section{The physical Hamiltonian operator}\label{section4}
\subsection{A separable subspace $\mathcal H$ of $\tilde{\mathcal H}$}\label{sec:hilbertspaces}
According to \eqref{eq:basis} , the Hilbert space $\tilde{\mathcal H}$ possesses a non-countable orthonormal basis, and hence it is non-separable. The problem that the kinematical Hilbert space is non-separable has appeared in both LQG and LQC. In LQG, the non-separability is caused by the uncountability of the graphs based on which the Hilbert space is defined. In LQC, it is caused by the polymer quantization procedure which leads to an uncountable orthonormal basis as the case of the present model.  In LQG, one proposed to employ the diffeomorphism invariance to identify the diffeomorphism equivalent graphs in order to obtain a separable diffeomorphism invaraint Hilbert space \cite{ashtekar2004background}.
In LQC, the problem is tackled by the super-selection feature of the Hamiltonian constraint. More precisely, because the Hamiltonian constraint operator preserves some  separable Hilbert subspaces, one can confine the study in certain separable subspace \cite{ashtekar2003mathematical,ashtekar2006quantumnature}. The situation in the present model is very similar to that in LQC. As shown below, a separable subspace $\hil\subset\tilde\hil$ can be selected, which is preserved by the Hamiltonian operator. 

Classically, the physical Hamiltonian \eqref{eq:clh0}, re-denoted by  $\h_c$, can be rewritten as
\begin{equation}\label{eq:hamiltoniancl}
\h_c=2 p_b b  p_cc+p_b^2 b^2+\gamma ^2p_b^2,
\end{equation}
where $p_c c$ is a Dirac observable. Actually, for vacuum gravity, one has $p_c c=L_0\gamma G M$ with $M$ being the ADM mass of Schwarzschild BH \cite{zhang2020quantum}. Since in the Hilbert space $\tilde{\mathcal H}$ there are no operators corresponding to $b$ and $c$, the expression \eqref{eq:hamiltoniancl} cannot be promoted to an operator directly. We thus return to the integral expression \eqref{eq:Hphysical1}  of the full deparametrized theory and follow \cite{ashtekar2005quantum} to express $C_{\rm gr}$ in terms of $\mathring{F}=\dd (\gamma K)+[\gamma K,\gamma K]$  and the spatial curvature $\Omega=-\sin(\theta)\dd\theta\wedge\dd\phi\tau_3$, with 
\begin{equation}\label{eq:expressK}
K=\frac{1}{\gamma}\left(\frac{c}{L_0}\tau_3\dd x+b\tau_2\dd\theta-b\tau_1\sin\theta\dd\phi\right)
\end{equation}
denoting the extrinsic curvature.
Then the physical Hamiltonian $h_{\rm phy}$ is expressed by
\begin{equation}\label{eq:integralexpression}
h_{\rm phy}=\frac{1}{4\sqrt{2}\pi}\int_{\mathcal C} \dd^3 x\sqrt{E^a_i(x)E^b_j(x)(\mathring{F}^k_{ab}(x)-\gamma^2 \Omega_{ab}^k(x))}.
\end{equation}
Note that, in comparison with the expression \eqref{eq:reducedAEPi} of $A$, the expression \eqref{eq:expressK} does not contain the term $\tau_3\cos(\theta)\dd\phi$. Thus  the ``holonomy'' of $K$ along the $\phi$-direction is much simpler than that of $A$. Moreover, since $\Omega$ does not depend on dynamical variables, one needs not to regularize it by holomolies.
To regularize \eqref{eq:integralexpression}, one fixes three edges intersecting at a point $x_0\in \mathcal C$, where
 the first edge $e_1$ is along the $\mathbb R$ direction of $\Sigma$ taking length $\ell_1=\delta_c L_0$ and the other two edges $e_2$ and $e_3$ are along the equator and the longitude of $\mathbb S^2$ with the same radians $\ell_2=\ell_3=\delta_b$. By defining the ``holonomies'' 
 \begin{equation}
 h_i=\exp\left(\int_{e_i}K\right)
 \end{equation}
 one can regularize $\mathring{F}_{ab}^k$ as \cite{ashtekar2005quantum}
  \begin{equation}\label{eq:regularizedF0}
 \mathring{F}_{ab}^k(x_0)=-\sum_{i,j}\frac{2}{\ell_i\ell_j}\mathrm{tr}\left(h_i h_jh_i^{-1}h_j^{-1}\tau_k\right)\mathring{\omega}_a^i\mathring{\omega}_b^j+O(\sqrt{\ell_a\ell_b})
 \end{equation}
 where $\mathring{\omega}_a^i$ denotes the co-triad of the fiducial metric \eqref{eq:fiducialmetric} , adapting to the  three edges $e_i$. Substituting \eqref{eq:regularizedF0} and the expression \eqref{eq:reducedAEPi} of  $E^a_i$ into \eqref{eq:integralexpression}, we thus  get a regularized version of $h_{\rm phy}=\sqrt{\h}$ for our model, where
\begin{equation}\label{eq:clah}
\h=2 p_b\frac{ \sin(\delta_bb)}{\delta_b}  p_c\frac{\sin(\delta_c c)}{\delta_c}+p_b^2 \frac{\sin^2(\delta_b b)}{\delta_b^2}+\gamma ^2p_b^2.
\end{equation}
  The expression \eqref{eq:clah} returns to $\h_c$ in \eqref{eq:hamiltoniancl} at the limit of $\delta_b\to 0$ and $\delta_c\to 0$.

To quantize the regularized physical Hamiltonian \eqref{eq:clah} instead of \eqref{eq:hamiltoniancl}, 
various strategies have been proposed in choosing the two quantum parameters $\delta_b$ and $\delta_c$. Roughly speaking, the choices can be classified into three schemes. The first one is the $\mu_0$-scheme where $\delta_b$ and $\delta_c$ are chosen as constants \cite{ashtekar2005quantum,modesto2006loop,campiglia2008loop}. The second one is the $\bar \mu$-scheme which allows $\delta_b$ and $\delta_c$ to be any functions of $p_b$ and $p_c$ \cite{boehmer2007loop,chiou2008phenomenological}.  The third one, referred to as the modified scheme, is developed recently where $\delta_b$ and $\delta_c$ are phase space dependent only through Dirac observables \cite{corichi2016loop,ashtekar2018quantum,ashtekar2018quantumII}. Since the expression $p_c\sin(\delta_c c)/\delta_c$ in \eqref{eq:clah} corresponds to the classical Dirac observable $p_c c=L_0\gamma GM$ in \eqref{eq:hamiltoniancl}, we are motivated to assume that the quantum operator $\hat\h$ is composed of a Dirac observable $\widehat{p_c\sin(\delta_c c)/\delta_c}$, which is self-adjoint and commutes with $\hat \h$. 
This assumption rules out certain strategies such as the $\bar\mu$-scheme where $\delta_c$ depends on $\hat{p}_b$ and $\delta_b$ depends on $\hat p_c$, since the resulting operator would no longer commute with $\hat\h$. 
Furthermore, we assume that $\delta_b$ is a constant or any function of the Dirac observable $\widehat{p_c\sin(\delta_c c)/\delta_c}$ for following analysis.
Similar assumptions were adopted in $\mu_o$-scheme and the modified scheme. They are sufficient but not necessary to obtain a Dirac observable $\widehat{p_c\sin(\delta_c c)/\delta_c}$. In summary, the current paper focuses on the schemes such that 
\begin{itemize}
\item[(i)] a separable Hilbert subspace $\htau\subset\tilde{\mathcal H}_{\c}$ can be chosen to define an operator $\widehat{p_c\sin(\delta_c c)/\delta_c}$ corresponding to $p_c\sin(\delta_c c)/\delta_c$ which is self-adjoint, commutates with the physical Hamiltonian $\hat \h$. 
\item[(ii)] the quantum parameter $\delta_b$ is a constant or any function of $p_c\sin(\delta_c c)/\delta_c$.
\end{itemize}

 Denote $\sigma_{\mathfrak c}$ as the spectrum of the Dirac observable $\widehat{p_c\sin(\delta_c c)/\delta_c}$. Then $\htau$ is isometric to the Hilbert space $L^2(\sigma_{\mathfrak c},\dd\mu_{\mathfrak c}))$ with the spectral measure $\dd \mu_{\mathfrak c}$. The Hilbert space $\tilde\hmu\otimes\htau\subset\tilde\hil$ can be represented by $ L^2(\sigma_{\mathfrak c},\dd\mu_{\mathfrak c}; \tilde{\mathcal H}_{\mathfrak b})$ in which each element is a $\tilde{\mathcal H}_{\mathfrak b}$-valued function on $\sigma_{\mathfrak c}$. The representation is defined by 
\begin{equation}\label{eq:identity}
U:\tilde{\mathcal H}_{\mathfrak b}\otimes L^2(\sigma_{\mathfrak c},\dd\mu_{\mathfrak c})\ni \psi_b\otimes\psi_c\mapsto \psi_c(\cdot)\psi_b\in L^2(\sigma_{\mathfrak c},\dd\mu_{\mathfrak c};\tilde{\mathcal H}_{\mathfrak b}).
\end{equation}
The inner product in $L^2(\sigma_{\mathfrak c},\dd\mu;\tilde{\mathcal H}_{\mathfrak b})$ is given by
\begin{equation}\label{eq:innerH}
\left(\psi^{(1)},\psi^{(2)}\right)=\int_{\sigma_{\mathfrak c}}\dd \mu_{\mathfrak c}\langle \psi^{(1)}(x)|\psi^{(2)}(x)\rangle
\end{equation}
where $\langle \psi^{(1)}(x)|\psi^{(2)}(x)\rangle$ denotes the inner produce of $\psi^{(1)}(x),\psi^{(2)}(x)\in\tilde{\mathcal H}_{\mathfrak b}$. 

For convenience, the elements in the spectrum space can be denoted by $L_o\gamma m\in \sigma_{\mathfrak c}$ with $m\in \mathbb R$ in analogy with their classical correspondence $p_c c=L_o\gamma GM$. We also define
\begin{equation}
\hat\beta_{\lambda}:=\widehat{p_b\sin(\lambda b)}.
\end{equation}
For a state $\psi\in L^2(\sigma_{\mathfrak c},\dd\mu_{\mathfrak c}; \tilde{\mathcal H}_{\mathfrak b})$, $\psi(L_o\gamma m)\in \tilde{\mathcal H}_{\mathfrak b}$ is abbreviated to $\psi(m)$.
Then the action of $\hat \h$ on $\psi$ reads
\begin{equation}\label{eq:actionofh}
(\hat \h\psi)(m)=\left(\frac{2L_o\gamma m}{\delta_b^{(m)}}\hat\beta_{\delta_b^{(m)}}+\frac{1}{(\delta_b^{(m)})^2}\hat\beta_{\delta_b^{(m)}}^2+\gamma^2\hat p_b^2\right)\psi(m)
\end{equation}
where $\delta_b^{(m)}\equiv \delta_b(L_o\gamma m)$ due to the dependence of $\delta_b$  on $\widehat{p_c\sin(\delta_c c)/\delta_c}$. 
By \eqref{eq:actionofh}, the separable Hilbert subspace $\mathcal H\subset L^2(\sigma_{\mathfrak c},\dd\mu_{\mathfrak c}; \tilde{\mathcal H}_{\mathfrak b})$ preserved by $\hat\h$ is constructed as follows. Given $L_o\gamma m\in \sigma_{\mathfrak c}$, let $\hmu(m)\subset\tilde{\mathcal H}_{\mathfrak b}$ be a separable Hilbert space preserved by $\hat \beta_{\delta_b^{(m)}}$. Then the Hilbert space $\hil\subset L^2(\sigma_{\mathfrak c},\dd\mu_{\mathfrak c};\tilde{\mathcal H}_\b)$ contains the state $\psi$ such that $\psi(m)\in\hmum$ for all $L_o\gamma m\in\sigma_{\mathfrak c}$. For convenience, the Hilbert space $\mathcal H$ constructed by $\hmu(\cdot):m\mapsto \hmum$ through this procedure will be denoted by $L^2(\sigma_{\mathfrak c},\dd\mu_{\mathfrak c};\hmu(\cdot))$.

\subsection{The operators $\hat \beta_{\lambda}$}\label{sec:betaoperator}
According to \eqref{eq:actionofh}, the action of $\hat \h$ is determined by the property of $\hat \beta_{\delta_b^{(m)}}$. Since $\delta_b^{(m)}$ is constant for a given $m\in \mathbb R$, we will drop the superscript in $\delta_b^{(m)}$ and consider $\hat\beta_{\delta_b}$ with a constant $\delta_b$. 
Classically, one has
\begin{equation}\label{eq:classicalpb}
\begin{aligned}
p_b\sin(\delta_b b)&=\frac{1}{2i}\left(p_b e^{i\delta_b b}-p_b e^{-i\delta_b b}\right).
\end{aligned}
\end{equation}
A well-known ambiguity in the quantization procedure is caused by the operator ordering. By definition, one has
\begin{equation}
\widehat{e^{i\delta_b b}}\hat p_b=(\hat p_b-\delta_b \gamma\Pl^2)\widehat{e^{i\lambda b}}.
\end{equation}
We hereby introduce a parameter $\b$ to parameterize various operator-ordering strategies and define
\begin{equation}
\widehat{p_be^{i\delta_b b}}=(\hat p_b+ \gamma\Pl^2\b)\widehat{e^{i\delta_b b}}. 
\end{equation}
Then, we can define a symmetric operator corresponding to \eqref{eq:classicalpb} as
\begin{equation}\label{eq:oppsin}
\begin{aligned}
\hat \beta_{\delta_b}&:=\frac{1}{2i}\left((\hpb+\gamma\Pl^2\b)\widehat{e^{i\delta_b b}}-\widehat{e^{-i\delta_b b}}(\hpb+\gamma\Pl^2\b)\right).
\end{aligned}
\end{equation}
 Its action is given by 
\begin{equation}\label{eq:psin}
\begin{aligned}
\hat \beta_{\delta_b}|\mu\rangle&=\frac{1}{2i}\frac{\gamma\Pl^2}{2}\Big( (\mu+2\delta_b+2\b)|\mu+2\delta_b\rangle-(\mu+2\b)|\mu-2\delta_b\rangle\Big).
\end{aligned}
\end{equation}
Thus the separable Hilbert subspaces of $\tilde{\mathcal H}_{\b}$ preserved by $\hat\beta_{\delta_b}$   are given by
\begin{equation}\label{eq:hbc}
\begin{aligned}
\hmu^{(\varepsilon_b)}&=\overline{\{\psi\in\tilde{\mathcal H}_{\b},\psi(\mu)\neq 0 \text{ only for }\mu=\varepsilon_b+2 n\delta_b, n\in\mathbb Z \}}
\end{aligned}
\end{equation}
for some constant $\varepsilon_b\in [0,2\delta_b)$. We will show in below that the separable Hilbert subspace preserved by the physical Hamiltonian $\hat \h$ can be constructed with  $\hmu^{\varepsilon_b}$. 

Denote the restriction of $\hat\beta_{\delta_b}$ on $\hmu^{(\varepsilon_b)}$ by $\hat\beta_{\delta_b}\restriction\hmu^{(\varepsilon_b)}$. 
Given $\hmu^{\varepsilon_b}$ and $\hmu^{\tilde \varepsilon_b}$, a natural isomorphism between them can be defined by 
\begin{equation}\label{eq:identityep}
i:\hmu^{(\varepsilon_b)}\ni |\varepsilon_b+2 n\delta_b\rangle\mapsto |\tilde \varepsilon_b+2 n\delta_b\rangle\in \hmu^{(\tilde \varepsilon_b)}.
\end{equation}
It introduces an operator $i(\hat \beta_{\delta_b}\restriction\hmu^{(\varepsilon_b)} )i^{-1}$ in $\hmu^{(\varepsilon_b)}$ whose action reads
\begin{equation}\label{eq:ibetai}
i(\hat \beta_{\delta_b}\restriction\hmu^{(\varepsilon_b)})i^{-1}|\tilde\varepsilon_b+2 n\delta_b\rangle=\frac{1}{2i}\frac{\gamma\Pl^2}{2}\Big( (2n\delta_b+\tilde\varepsilon_b+2\delta_b+2\tilde\b)|\tilde\varepsilon_b+2 n\delta_b+2\delta_b\rangle-(2n\delta_b+\tilde\varepsilon_b+2\tilde\b)|\tilde\varepsilon_b+2 n\delta_b-2\delta_b\rangle\Big),
\end{equation}
where $\tilde \b:=\b+(\varepsilon_b-\tilde\varepsilon_b)/2$. For clarity, let us use $\hat \beta_{\delta_b}^{(x)}$  to denote the operator $\hat \beta_{\delta_b}$ with respect to the constant $\b=x$. According to \eqref{eq:ibetai}, to study the operator $\hat\beta_{\delta_b}^{(x)}\restriction\hmu^{(\varepsilon_b)}$ with respect to $x\neq 0$, one can use the operator $i(\hat \beta_{\delta_b}^{(x)}\restriction\hmu^{(\varepsilon_b)} )i^{-1}$ on $\hmu^{\tilde\varepsilon_b}$ with $\tilde\varepsilon_b=\varepsilon_b+2x$, and thus has 
\begin{equation}\label{eq:ibetabe0}
i(\hat \beta_{\delta_b}^{(x)}\restriction\hmu^{(\varepsilon_b)} )i^{-1}=\hat \beta_{\delta_b}^{(0)}\restriction\hmu^{(\tilde\varepsilon_b)}.
\end{equation}
Therefore, without loss of generality, we can set $\b=0$ and study properties of $\hat \beta_{\delta_b}\restriction\hmu^{(\varepsilon_b)}$  for any given $\varepsilon_b$.

From now on, let us refer to the restriction $\hat \beta_{\delta_b}\restriction\hmu^{(\varepsilon_b)}$ for some $\hmu^{(\varepsilon_b)}$ as $\hat \beta_{\delta_b}$ unless being specially noted. Because $\hat \beta_{\delta_b}$ is unbounded, one has to assign certain domain to complete its definition. A natural choice of the domain $D(\hat\beta_{\delta_b})$ reads
\begin{equation}\label{eq:domainbeta}
D(\hat\beta_{\delta_b})=\{\psi\in\hmu^{(\varepsilon_b)}, |\mathrm{supp}(\psi)|<\infty\}
\end{equation}
where $|\mathrm{supp}(\psi)|$ denotes the cardinality of the support of $\psi$. This implies that $D(\hat\beta_{\delta_b})$ only consists of finite linear combination of the basis $|\mu\rangle \in \hmu^{(\varepsilon_b)}$. 
The essential self-adjointness of $\hat\beta_{\delta_b}$ with the domain $D(\hat\beta_{\delta_b})$ can be proven. Define a self-adjoint operator $\hat{N}(\geq 1)$ as $\hat N|\mu\rangle=(1+\mu^2)|\mu\rangle$. Then it is straight-forward to verify that there exist numbers $c,d\in\mathbb R$ such that
\begin{equation}
\begin{aligned}
\|\hat\beta_{\delta_b}\psi\|&\leq c\|N\psi\|, \forall \psi\in D\\
\left|\langle\hat\beta_{\delta_b}\psi,\hat N\psi\rangle-\langle\hat N\psi,\hat\beta_{\delta_b}\psi\rangle \right|&\leq d\|\hat N^{1/2}\psi\|^2,\ \forall \psi\in D.
\end{aligned}
\end{equation}
where $\langle\cdot,\cdot\rangle$ denotes the inner product in $\hmu^{(\varepsilon_b)}$. Therefore $\hat\beta_{\delta_b}$ is essentially self-adjoint with the domain $D(\hat\beta_{\delta_b})$ according to Theorem X.37 in\cite{reed2003methods}. 

By expanding the eigen-equation $\hat\beta_{\delta_b}|\psi\rangle=\omega|\psi\rangle$ with the basis $|\mu\rangle$, one has
\begin{equation}
\omega \psi(\mu)=\frac{1}{2i}\frac{\gamma\Pl^2}{2}\Big( (\mu+2\delta_b)\psi(\mu+2\delta_b)-\mu\psi(\mu-2\delta_b)\Big).
\end{equation}
As shown in appendix \ref{ap:assyptoticbeta}, $\psi(\mu)$ behaves asymptotically as
\begin{equation}\label{eq:asspbeta}
\psi(\mu)=\left\{
\begin{aligned}
&\frac{1}{\sqrt{|\mu|}}\left(\chi_+e^{ik\ln(|\mu|)}+\chi_-e^{-ik\ln(|\mu|)}\right)+O(|\mu|^{-1}),\ \mu=\varepsilon_b+4n\delta_b\\
&\frac{1}{\sqrt{|\mu|}}\left(\chi_+e^{ik\ln(|\mu|)}-\chi_-e^{-ik\ln(|\mu|)}\right)+O(|\mu|^{-1}),\ \mu=\varepsilon_b+(4n+2)\delta_b
\end{aligned}
\right.
\end{equation}
where $k=\frac{\omega}{\gamma\Pl^2\delta_b}$ and $n\in\mathbb Z$. This implies that
$\int|\psi(\mu)|^2\dd\mu=\infty.$ and hence $$\sum_{\mu=\varepsilon_b+2n}|\psi(\mu)|^2=\infty.$$
Thus $\psi(\mu)$ is unnormalizable. 
Moreover, since $\psi(\mu)$ behaves asymptoticly as a plane wave of $\ln(|\mu|)$, we can show by using the same techniques as in \cite{zhang2018towards} and the Weyl's criterion \cite{reed2003methods},  that the spectrum of $\widehat{p_b\sin(\delta_b b)}$ is the entire real line $\mathbb R$.
Furthermore, the two leading-order functions in \eqref{eq:asspbeta}, denoted by
\begin{equation*}
\psi_0^\pm(\mu):=\frac{1}{\sqrt{|\mu|}}\left(\chi_+e^{ik\ln(|\mu|)}\pm \chi_-e^{-ik\ln(|\mu|)}\right),
\end{equation*}
satisfy
\begin{equation}
-i\gamma\Pl^2\delta_b\sgn(\mu)\sqrt{|\mu|}\frac{\dd}{\dd\mu}\sqrt{|\mu|}\psi_0^\pm(\mu)= \omega \psi_0^\mp(\mu),
\end{equation}
where $\sgn(\mu)$ denotes the sign function of $\mu$. This implies that the operator $\hat\beta_{\delta_b}$,  in the large $\mu$ limit, returns to the Sch\"odinger quantization of $ p_b b$,
\begin{equation}
 \widehat{p_b b}=-i\gamma\Pl^2\sgn(\mu)\sqrt{|\mu|}\frac{\dd}{\dd\mu}\sqrt{|\mu|}.
\end{equation}
 Therefore, the classical limit of $\hat \beta_{\delta_b}$ is correct. This finishes the quantization procedure of $p_b\sin(\delta_b b)$ and the study of the properties of the corresponding operator $\hat \beta_{\delta_b}$. The same discussion can be transported analogously to $p_c\sin(\delta_c c)$ for constant $\delta_c$.  The resulting operator  $\widehat{p_c\sin(\delta_c c)/\delta_c}$ is self-adjoint and possesses the entire line as its spectrum. These properties of $\widehat{p_c\sin(\delta_c c)/\delta_c}$ are compatible with our general assumptions.

\subsection{The operator $\hp$}
Let us denote the operator in the RHS of \eqref{eq:actionofh} as
\begin{equation}\label{eq:321}
\hp^{(m)}:=\frac{2L_o\gamma m}{\delta_b}\hat\beta_{\delta_b}+\frac{1}{\delta_b^2}\hat\beta_{\delta_b}^2+\gamma^2\hat p_b^2.
\end{equation}
It should be noted that the operator $\hat\beta_{\delta_b}$ in \eqref{eq:321} is the original one defined by \eqref{eq:psin} with non-vanishing $\b$. However, 
as discussed below \eqref{eq:ibetai}, the identity map $i$ from $\hmu^{(\varepsilon_b)}$  to the specific Hilbert space $\hmu^{(\tilde\varepsilon_b)}$ with $\tilde\varepsilon_b=\varepsilon_b+2\b$ ensures that we can set $\b=0$. Thus, by this treatment we define an operator in $\hmu^{(\tilde\varepsilon_b)}$ corresponding to $\hp^{(m)}$ as
\begin{equation}\label{eq:ihmi000}
i(\hp^{(m)}\restriction \hmu^{(\varepsilon_b)}) i^{-1}=\frac{2L_o\gamma m}{\delta_b}i(\hat\beta_{\delta_b}\restriction \hmu^{(\varepsilon_b)}) i^{-1}+\frac{1}{\delta_b^2} i(\hat\beta_{\delta_b}^2\restriction \hmu^{(\varepsilon_b)}) i^{-1}+\gamma^2i(\hat p_b^2\restriction \hmu^{(\varepsilon_b)}) i^{-1}.
\end{equation}
where the map $i$, defined by \eqref{eq:identityep}, identifies the Hilbert spaces $\hmu^{(\varepsilon_b)}$ and $\hmu^{(\tilde\varepsilon_b)}$ with $\tilde\varepsilon_b=\varepsilon_b+2\b$. For the first two terms in this operator, one has 
\begin{equation}
i((\hat\beta_{\delta_b})^n\restriction \hmu^{(\varepsilon_b)}) i^{-1}=(\hat\beta_{\delta_b}^{(0)})^n\restriction \hmu^{(\tilde \varepsilon_b)}.
\end{equation}
where $\hat\beta_{\delta_b}^{(0)}$ is defined by \eqref{eq:ibetabe0}. For the last term, we have
\begin{equation}\label{eq:ipb2i}
i(\hat p_b^2\restriction \hmu^{(\varepsilon_b)}) i^{-1}|\tilde\varepsilon_b+2n\delta_b\rangle=[(\hat p_b-\gamma\Pl^2\b)^2\restriction \hmu^{(\tilde\varepsilon_b)}]|\tilde\varepsilon_b+2n\delta_b\rangle.
\end{equation}
Therefore, \eqref{eq:ihmi000} can be expressed as 
\begin{equation}\label{eq:hmrestricted}
i(\hp^{(m)}\restriction \hmu^{(\varepsilon_b)}) i^{-1}=\frac{2L_o\gamma m}{\delta_b}(\hat\beta_{\delta_b}^{(0)}\restriction \hmu^{(\tilde\varepsilon_b)}) +\frac{1}{\delta_b^2} [(\hat\beta_{\delta_b}^{(0)})^2\restriction \hmu^{(\tilde\varepsilon_b)}]+\gamma^2[(\hat p_b-\gamma\Pl^2\b)^2\restriction \hmu^{(\tilde\varepsilon_b)}].
\end{equation}
That is, by setting $\b=0$ for the $\hat\beta_{\delta_b}$ in $\hpm$, $\hat p_b$ has to be replaced by $\hat p_b-\gamma\Pl^2\b$. Thus we re-define $\hpm$ as
\begin{equation}\label{eq:hpmp}
\begin{aligned}
\hp^{(m)}=&\frac{2L_o\gamma m}{\delta_b}\hat\beta_{\delta_b}+\frac{1}{\delta_b^2} \hat\beta_{\delta_b}^2+\gamma^2(\hat p_b-\gamma\Pl^2\b)^2,
\end{aligned}
\end{equation}
where  $\hat\beta_{\delta_b}$ is defined by \eqref{eq:psin} with $\b=0$. As a consequence, the operator $\hp$ is changed correspondingly to
\begin{equation}
(\hp\psi)(m)=\hp^{(m)}\psi(m)
\end{equation}  
with the re-defined $\hp^{(m)}$.

Now let us construct the separable Hilbert $L^2(\sigma_\c,\dd\mu_\c;\hmu(\cdot))$ as mentioned below \eqref{eq:actionofh}. To do this, we only need to assign to each $m$ the Hilbert space $\hmu^{(\varepsilon_b(m))}$, i.e., to define $\hmu(\cdot)$ as
\begin{equation}\label{eq:hbm0}
\hmu(\cdot):m\mapsto \hmum:=\hmu^{(\varepsilon_b(m))},
\end{equation} 
where $\varepsilon_b:m\mapsto \varepsilon_b(m)\in [0,2\delta_b)$ is assumed to be some sufficiently well-behaved function. Then the resulting Hilbert space $L^2(\sigma_\c,\dd\mu_\c;\hmu(\cdot))$ is supposed to be acted by $\hp$. To define the domain of the unbounded operator $\hp$, we first define the domain $\dmu(m)$ of $\hp^{(m)}\restriction \hmu^{(\varepsilon_b(m))}$ for each $m$ by
\begin{equation}\label{eq:Dbm}
\dmu(m):=\{\psi\in \hmu^{(\varepsilon_b(m))},\langle \mu|\psi\rangle\neq 0 \text{ for finite numbers of } \mu\}.
\end{equation}
Then the domain of $\hp$, denoted by $D(\hp)$, reads
\begin{equation}
D(\hp)=\{\psi\in L^2(\sigma_\c,\dd\mu_\c;\hmu(\cdot)),\psi(m)\in \dmu(m)\ \forall m\}.
\end{equation}
We now complete the rigorous definition of the operator $\hp$. Since $\hp$ is identical to the operator-valued function $\hp^{(\cdot)}:m\mapsto\hpm$, instead of investigating $\hp$ itself, it is sufficient to study $\hp^{(m)}$ for all $m$. 
In the following, to simplify our notions, we will abbreviate $\varepsilon_b(m)$ to $\varepsilon_b$ if the dependence of $\varepsilon_b$ on $m$ does not matter. Moreover, the restricted operator  $\hp^{(m)}\restriction \hmu^{(\varepsilon_b)}$ with the domain $\dmu(m)$ is denoted simply by $\hp^{(m)}$.  Furthermore, when $\hp$ is mentioned, it refers to the operator $\hp$ with domain $D(\hp)$.

The operator $\hp$ and $\hp^{(m)}$ are necessary to be self-adjoint to govern a well-defined dynamics. By using the Kato-Rellich Theorem \cite{reed2003methodsii}, we can prove that both $\hp$ and $\hpm$ are essentially self-adjoint (see Appendix \ref{app:selfadjoint} for details). Because of the essential self-adjointness of $\hp$ and $\hp^{(m)}$, their closure, denoted by $\overline{\hpm}$ and $\overline{\hp}$ respectively, can be regarded as the Hamiltonian operator of the current model with the desired properties. We use $\overline{\dmu(m)}$ and $\overline{D(\hp)}$ to denote the domains of  $\overline{\hpm}$ and $\overline{\hp}$ respectively. 

\section{Dynamics govern by $\overline{\hp}$}\label{se:propertiesh}
 
To solve the dynamics governed by $\overline{\hp}$, one needs to diagonalize $\overline{\hp}$, or equivalently, to diagonalize $\overline{\hp^{(m)}}$ for all $m$. To begin with, we first study the discreteness of the spectrum of $\overline{\hp^{(m)}}$.  
By definition, $\hpm$ can be re-expressed as
\begin{equation}
\hpm=-L_0^2m^2\gamma^2+\left(L_0m\gamma+\frac{1}{\delta_b}\hat \beta_{\delta_b} \right)^2+\gamma^2(\hat p_b-\gamma\Pl^2\b)^2.
\end{equation}
Thus, one has
\begin{equation}\label{eq:hpmg}
\langle\psi|\overline{\hpm}|\psi\rangle\geq \gamma^2\langle\psi|(\hat p_b-\gamma\Pl^2\b)^2| \psi\rangle-L_0^2m^2\gamma^2\langle\psi|\psi\rangle,\ \forall |\psi\rangle\in \overline{\dmu(m)}.
\end{equation}
Hence $\overline{\hpm}$ is bounded from below. By \eqref{eq:hpmg}, we can obtain
\begin{equation}\label{eq:lowerbounded}
\mu_n(\overline{\hpm})\geq -L_0^2m^2\gamma^2+\gamma^2\mu_n((\hat p_b-\gamma\Pl^2\b)^2)
\end{equation} 
where, for an operator $\hat A$, $\mu_n(\hat A)$ denotes
\begin{equation}
\mu_n(\hat A):=\sup_{\varphi_1,\cdots,\varphi_{n-1}\in \hmu^{\varepsilon_b}}\inf_{\substack{\psi\in \overline{\dmu(m)};\|\psi\|=1;\\ \langle\varphi_i,\psi\rangle=0,\forall i=1,\cdots,n-1}}\langle \psi,\hat A\psi\rangle.
\end{equation}
Note that  $(\hat p_b-\gamma\Pl^2\b)^2$ can be defined on $\overline{\dmu(m)}$ by definition.  Hence $\mu_n((\hat p_b-\gamma\Pl^2\b)^2)$ is well-defined.
Since $\mu_n((\hat p_b-\gamma\Pl^2\b)^2)\to \infty$ as $n\to \infty$, one has that 
\begin{equation}
\lim_{n\to \infty}\mu_n(\overline{\hpm})=\infty.
\end{equation}
Hence, according to the min-max principle (see, e.g., Theorem XIII.1 in \cite{reed2003methodsiv}), $\overline{\hpm}$ has purely discrete spectrum. In other words, each element in the spectrum of $\overline{\hpm}$, denoted by $\sigma(\overline{\hpm})$,  is an eigenvalue of $\overline{\hpm}$ with finite multiplicity. 

Given the significance of $\overline{\hp^{(m)}}$, it is desirable to understand the properties of $\sigma(\hpm)$. In particular, one may ask whether the eigenvalue $\omega(m)$ as a function of $m$ is analytic or not. This issue is closely related to the analyticity of $\overline{\hpm}$ on $m$ in the sense of Kato \cite{reed2003methodsiv,kato2013perturbation}.
To overcome the technical difficulty that  $\overline{\hp^{(m)}}$ for different $m$ are defined in different Hilbert spaces,  we employ the following unitary map for a given $m_o$,
\begin{equation}
\mathfrak i_{m}: \hmu^{(\varepsilon_b(m))}\ni |\varepsilon_b(m)+2n\delta_b^{(m)}\rangle\mapsto |\varepsilon_b(m_o)+2n\delta_b^{(m_o)}\rangle\in \hmu^{(\varepsilon_b(m_o))}
\end{equation}
where  the dependence of $\varepsilon_b$ and $\delta_b$ on $m$ is written explicitly. The issue on the analyticity of $\omega(m)$ can be equivalently discussed by that of $\mathfrak i_{m}\overline{\hp^{(m)}}\mathfrak i_{m}^{-1}$ which is defined on $\mathfrak i_{m}\overline{\dmu(m)}=\overline{\dmu(m_o)}$. For simplicity, let us use $\delta_b^o$ and  $\varepsilon_b^o$ to denote $\delta_b^{(m_o)}$ and $\varepsilon_b(m_o)$ respectively.  

An  sesquilinear form $T_m(\cdot,\cdot)$ associated to $\mathfrak i_{m}\hp^{(m)}\mathfrak i_{m}^{-1}$ can be defined on $\dmu(m_o)$ by
\begin{equation}
T_m(\psi,\phi)=\langle\psi|\mathfrak i_{m}\hp^{(m)}\mathfrak i_{m}^{-1}|\phi\rangle,\ \psi,\phi\in \dmu(m_o).
\end{equation}
Then \eqref{eq:hpmg} implies that $T_m$ is semibounded, i.e.,
\begin{equation}
T_m(\psi,\psi)\geq -L_0^2m^2\gamma^2\|\psi\|^2,\ \forall \psi\in \dmu(m_o). 
\end{equation}
Hence, $T_m$ is closable. Its closure, denoted by $\overline{T_m}$, is the extension of $T_m$ on the closure of $\dmu(m_o)$ with respect to the norm
\begin{equation}
\|\psi\|_{m}=\sqrt{T_m(\psi,\psi)+(L_0^2m^2\gamma^2+1)\langle\psi|\psi\rangle}.
\end{equation} 
Since the norm $\|\psi\|_{m}$ depends on $m$, the resulting closures $Q(m)$ of $\dmu(m_o)$ could depend on $m$ in general. Suppose that $\delta_b^{(m)}$ and $\varepsilon_b(m)$ are analytic functions of $m$ at $m_o$ and $\delta_b^{(m_o)}\neq 0$.
It turns out that $\overline{T_m}$ is an analytic family of type (a) in the sense of Kato (see section VII.4 in \cite{kato2013perturbation}). This, by definition, is ensured by that i) $Q(m)=Q(m_o)$ for all $m$ sufficiently close to $m_o$, and ii) $\overline{T_m}(\psi,\psi)$ is analytic on $m$ for all $\psi\in Q(m_o)$. The detailed proof of this conclusion is presented in Appendix \ref{app:analyticity}. 

Because $\overline{T_m}$ is symmetric and closed, there is a self-adjoint operator $\hat{\mathfrak t}_m$ associated to it, which is indeed the Friedrichs extension of $\hpm$. By the analyticity of $\overline{T_m}$, $\hat{\mathfrak t}_m$ forms an analytic family of type (B) in the sense of Kato (see Section VII.4 in \cite{kato2013perturbation}). As a consequence, $\mathfrak i_m\overline{\hpm} \mathfrak i_m^{-1}$ and thus  $\overline{\hpm}$ carry the same property because of $\hat{\mathfrak t}_m=\mathfrak i_m\overline{\hpm} \mathfrak i_m^{-1}$, which can be proven as follows. Firstly, the domain $\overline{\dmu(m_o)}$ of $\mathfrak i_m\overline{\hpm} \mathfrak i_m^{-1}$ is the closure of $\dmu(m_o)$ with respect to the graph norm $$\|\psi\|_g:=\sqrt{\|\mathfrak i_m\hpm \mathfrak i_m^{-1}\psi\|^2+\|\psi\|^2}.$$
Secondly, applying the same techniques in the proof of Lemma \ref{lmm:independentofm} in Appendix \ref{app:analyticity}, one can show straight-forwardly that $\|\cdot\|_g$ is equivalent to the norm $\|\cdot\|_{g}^\prime$ defined by
$$
\|\psi\|_{g}^\prime:=\sqrt{\langle\psi|\hat p_b ^4|\psi\rangle+\langle\psi|\psi\rangle}.
$$
Finally, because of
$
\|\psi\|_+\leq \|\psi\|_{g}'$ for all $ \psi\in \dmu(m_o) $, one concludes  $\overline{\dmu(m_o)}\subset Q(m)$, which results in $\hat{\mathfrak t}_m=\mathfrak i_m\overline{\hpm} \mathfrak i_m^{-1}$ according to the uniqueness of the Friedrichs extension. 

Due to the analyticity of  $\overline{\hpm}$, the analyticity of $\omega(m)$ can obtained directly (see e.g. Chapter XII of \cite{reed2003methodsiv} and Theorem VII.1.8 in \cite{kato2013perturbation}), which is summarized precisely as the following theorem.  
\begin{thm}\label{thm:analyticeigen}
Suppose that $\delta_b^{(m)}$ and $\varepsilon_b(m)$ are analytic functions of $m$ at $\mathring{m}$ and $\delta_b^{(\mathring{m})}\neq 0$.
Given an eigenvalue $\mathring{\omega}$ of $\overline{\hp^{(\mathring{m})}}$ which is of an algebraic multiplicity $k$, for each $m$  which is sufficiently close to $\mathring{m}$,
$\overline{\hp^{(m)}}$ has exactly $k$ eigenvalues (counting multiplicity) near $\mathring{\omega}$. These eigenvalues are given by $p\ (\leq k)$ distinct, single-valued and analytic functions $\omega_1(m),\cdots,\omega_p(m)$.
\end{thm}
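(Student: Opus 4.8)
The plan is to recast the eigenvalue problem as a statement about an analytic family of self-adjoint operators and then apply Kato's analytic perturbation theory. The first obstacle, already flagged in the text, is that the operators $\overline{\hpm}$ for different $m$ act on different Hilbert spaces $\hmu^{(\varepsilon_b(m))}$, so there is no ambient space in which to perturb. I would therefore transport all of them to the single reference space $\hmu^{(\varepsilon_b^o)}$ via the unitary relabelling maps $\mathfrak i_m$, and observe that unitary equivalence preserves the spectrum together with multiplicities. Hence the eigenvalues $\omega(m)$ of $\overline{\hpm}$ are analytic in $m$ exactly when those of $\mathfrak i_m\overline{\hpm}\mathfrak i_m^{-1}$ are, reducing everything to a genuine perturbation problem inside the fixed Hilbert space $\hmu^{(\varepsilon_b^o)}$.

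The core of the argument is to show that $m\mapsto \mathfrak i_m\overline{\hpm}\mathfrak i_m^{-1}$ is an analytic family of type (B) in Kato's sense, which I would do at the level of the associated sesquilinear forms $T_m$ set up before the statement. Transporting the lower bound \eqref{eq:hpmg} gives semiboundedness of $T_m$ and hence closability, and the closure $\overline{T_m}$ carries a form domain $Q(m)$. To be an analytic family of type (a) one must verify: (i) $Q(m)=Q(m_o)$ for all $m$ near $m_o$, and (ii) $m\mapsto \overline{T_m}(\psi,\psi)$ is analytic for each fixed $\psi\in Q(m_o)$. Property (ii) is direct: after transport, each matrix element of $\mathfrak i_m\hpm\mathfrak i_m^{-1}$ in the reference basis $|\varepsilon_b^o+2n\delta_b^o\rangle$ is a function of $\varepsilon_b(m)$, $\delta_b^{(m)}$, the integer $n$, and the explicit linear factor $2L_o\gamma m/\delta_b^{(m)}$ appearing in \eqref{eq:hpmp}, all of which are analytic in $m$ by hypothesis.

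Property (i), the $m$-independence of the form domain, is the genuinely technical step and the main obstacle; it is precisely what Lemma \ref{lmm:independentofm} in Appendix \ref{app:analyticity} supplies. I would establish it through a norm-equivalence argument, showing that the $m$-dependent form norms $\|\cdot\|_m$ are mutually equivalent on $\dmu(m_o)$ for $m$ close to $m_o$, so that their completions coincide. This is where the hypotheses enter decisively: analyticity of $\delta_b^{(m)}$ and $\varepsilon_b(m)$ controls the variation of the matrix elements, while $\delta_b^{(\mathring m)}\neq 0$ prevents the kinetic term $\gamma^2(\hpb-\gamma\Pl^2\b)^2$ from degenerating and keeps every $\|\cdot\|_m$ uniformly comparable to the $m$-free norm $(\langle\psi|\hpb^4|\psi\rangle+\langle\psi|\psi\rangle)^{1/2}$. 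Once (i) and (ii) hold, $\overline{T_m}$ is type (a) and its associated self-adjoint operator $\hat{\mathfrak t}_m$ (the Friedrichs extension) is an analytic family of type (B); since $\hpm$ is essentially self-adjoint by the Kato--Rellich argument of Appendix \ref{app:selfadjoint}, uniqueness of the self-adjoint extension forces $\hat{\mathfrak t}_m=\mathfrak i_m\overline{\hpm}\mathfrak i_m^{-1}$, so the transported closures inherit the type (B) structure.

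Finally I would invoke the perturbation theory of isolated eigenvalues. Because $\overline{\hp^{(\mathring m)}}$ has purely discrete spectrum, as shown above via the min-max principle, the eigenvalue $\mathring\omega$ is isolated and of finite algebraic multiplicity $k$. For an analytic family of type (B) the standard result (Theorem VII.1.8 of \cite{kato2013perturbation}, or Chapter XII of \cite{reed2003methodsiv}) then yields exactly the assertion: using the analytic Riesz projection $\oint(\hat{\mathfrak t}_m-z)^{-1}\,\dd z$ over a small contour enclosing $\mathring\omega$, the total multiplicity inside the contour remains $k$ for $m$ near $\mathring m$, and the $k$ eigenvalues split into $p\le k$ single-valued analytic branches $\omega_1(m),\dots,\omega_p(m)$. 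Transporting back by $\mathfrak i_m^{-1}$ gives the corresponding eigenvalues of $\overline{\hpm}$ with the stated analyticity, completing the proof.
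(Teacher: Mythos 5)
Your proposal is correct and follows essentially the same route as the paper: unitary transport by $\mathfrak i_m$ to the fixed space $\hmu^{(\varepsilon_b(m_o))}$, type (a) analyticity of the forms $\overline{T_m}$ (constancy of the form domain via norm equivalence, Lemma \ref{lmm:independentofm}, plus analyticity of the diagonal form values, Lemma \ref{lmm:analyticity}), type (B) analyticity of the associated self-adjoint operators identified with $\mathfrak i_m\overline{\hpm}\mathfrak i_m^{-1}$, and finally Kato's perturbation theorem for isolated eigenvalues of finite multiplicity. One small slip in naming the comparison norm: the form norm $\|\cdot\|_m$ is equivalent to the $\hat p_b^2$-norm $\|\psi\|_+$ of \eqref{eq:normnew}, whereas the $\hat p_b^4$-norm you quote is what the paper shows equivalent to the \emph{graph} norm in order to conclude $\overline{\dmu(m_o)}\subset Q(m)$; apart from that, your identification of the Friedrichs extension with the closure via essential self-adjointness is a valid (and slightly more direct) version of the paper's argument.
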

While this theorem is valid for the general cases that the eigenvalue $\omega$ possesses the algebraic multiplicity $k\geq 1$, it can be seen from the numerical results in the next section that each eigenspace of $\hp^{(m)}$ for all $m$  is exactly one-dimensional.

\subsection{Numerical approach to compute the eigenvalues and eigenstates}
Since $\overline{\hpm}$ has only discrete spectrum, the eigenvector $|\psi\rangle$ associated to each $\omega\in\sigma(\overline{\hpm})$ is  normalizable.  Hence the function $\psi(\mu):=\langle\mu|\psi\rangle$ decreases rapidly for sufficiently large values of $|\mu|$. This fact motivates us to use the finite-dimensional cut-off approximation to collect eigenvalues of $\overline{\hpm}$.  More precisely, we consider a finitely dimensional  subspace, denoted by $\hmu^{(\varepsilon_b,k)}$,  spanned by $|\mu_\eta\rangle=|\varepsilon_b+2\eta\delta_b\rangle\in \hmu^{(\varepsilon_b)}$ with $|\eta|\leq k$ for some large $k$. 
 Let $\hat P^{(k)}$ be the projection operator to $\hmu^{(\varepsilon_b,k)}$ such that
 \begin{equation}
 \hat P^{(k)}|\mu_\eta\rangle=\left\{
 \begin{array}{cc}
 |\mu_\eta\rangle,&|\eta|\leq k,\\
 0,&\text{otherwise}.
 \end{array}
 \right.
 \end{equation}
 Given an eigenvector $|\psi\rangle$ of $\overline{\hpm}$ with respect to an eigenvalue $\omega$, we have
 \begin{equation}\label{eq:phpmh}
 \begin{aligned}
  &(\hat P^{(k)}\overline{\hpm}\hat P^{(k)}-\omega)|\psi\rangle= (\hat P^{(k)}\overline{\hpm}\hat P^{(k)}-\overline{\hpm})|\psi\rangle\\
  =&-(1-\hat P^{(k)})\overline{\hpm}\hat P^{(k)}|\psi\rangle-\hat P^{(k)}\overline{\hpm}(1-\hat P^{(k)})|\psi\rangle-(1-\hat P^{(k)})\overline{\hpm}(1-\hat P^{(k)})|\psi\rangle.
 \end{aligned}
 \end{equation} 
 Since $\psi(\mu):=\langle\mu|\psi\rangle$  rapidly decreases for sufficiently large values of $|\mu|$, the term $1-\hat P^{(k)}$ in the RHS of \eqref{eq:phpmh} indicates that $(\hat P^{(k)}\overline{\hpm}\hat P^{(k)}-\omega)|\psi\rangle$ should be very small for large $k$. Thus it is reasonable to expect that  $\omega$ and $|\psi\rangle$ can be approximated by certain eigenvalue and its corresponding eigenvector of $\hat P^{(k)}\overline{\hpm}\hat P^{(k)}$ respectively for large $k$.  This is the reason for our finite-dimensional  cut-off approximation method. To apply this method, we need to identify those 
 eigenvalues of $\hat P^{(k)}\overline{\hpm}\hat P^{(k)}$ suitable for approximating the eigenvalues of $\hpm$, and check whether all eigenvalues of $\hpm$ can be approximated by this way.  

Since $\hat P^{(k)}\overline{\hpm}\hat P^{(k)}$ is a $(2k+1)$ dimensional symmetric matrix under the basis $|\mu\rangle$, it has $(2k+1)$ eigenvalues. We denote them as $\lambda_i^{(k)}$ with $\lambda_1^{(k)}\leq \lambda_2^{(k)}\leq\cdots\leq \lambda_{2k+1}^{(k)}$, where the subscript $i$ denoted that $\lambda_i^{(k)}$ is the $i$th eigenvalue from the least and the superscript $k$ corresponds to the superscript of $\hat P^{(k)}$.  Obviously, we have $i\leq 2k+1$ in $\lambda_i^{(k)}$.  Thus, once the $i$th eigenvalue of $\hat P^{(k)}\overline{\hpm}\hat P^{(k)}$ for some $k$ is mentioned, $k$ should satisfy $k\geq k_o(i)$ with $k_o(i):=(i-1)/2$.
Moreover, because $\overline{\hpm}$ is semibounded, it has the minimal eigenvalue. Thus its eigenvalues can be denoted by $\omega_i$ with $i=1,2,\cdots$ such that $\omega_1\leq \omega_2\leq \cdots\leq \omega_n\leq\cdots$.  Then by using the Rayleigh-Reitz technique (see, e.g., Theorem XIII.3 in \cite{reed2003methodsiv} and Appendix \ref{app:proof1}), one can obtain 
\begin{equation}\label{eq:orderlambda}
\lambda_i^{(k_o(i))}\geq \lambda_i^{(k_o(i)+1)}\geq \cdots\geq \lambda_i^{(k_o(i)+n)}\geq \cdots\geq\omega_i.
\end{equation}
As a consequence, the limit of $\lambda_i^{(k)}$ as $k\to \infty$ exists, i.e.,
\begin{equation}\label{eq:lambdai}
\lambda_i:=\lim_{k\to\infty}\lambda_i^{(k)}<\infty.
\end{equation}

Consider the sequence $\{\psi_i^{(k_o(i))},\psi_i^{(k_o(i)+1)},\cdots,\psi_i^{(k_o(i)+n)},\cdots\}$, where, for each $k$, $\psi_i^{(k)}$ is an eigenvector of $\hat P^{(k)}\hpm\hat P^{(k)}$ with the eigenvalue $\lambda_i^{(k)}$ and satisfies $\|\psi_i^{(k)}\|=1$. Because of $\|\psi_k^{(k)}\|=1$, the sequence $\{\psi_i^{(k_o(i))},\psi_i^{(k_o(i)+1)},\cdots,\psi_i^{(k_o(i)+n)},\cdots\}$ contains a subsequence by the Banach-Alaoglu theorem, i.e., there exists $\psi_i\in\hmu^{(\varepsilon_b)}$ such that
\begin{equation}\label{eq:weak}
\lim_{l\to \infty}\langle \psi_i^{(n_l)}|\varphi\rangle=\langle \psi_i|\varphi\rangle,\ \forall \varphi\in\hmu^{(\varepsilon_b)}.
\end{equation}
Consider all weakly convergent subsequences of $\{\psi_i^{(k_o(i))},\psi_i^{(k_o(i)+1)},\cdots,\psi_i^{(k_o(i)+n)},\cdots\}$ and collect their limits defined by \eqref{eq:weak}.  Denote the space spanned by these limits as $\Lambda_i$. It turns out that each $\lambda_i$ is an eigenvalue of $\overline{\hpm}$ and all the elements in $\Lambda_i$ are eigenvectors corresponding to the eigenvalue $\lambda_i$ (see Theorem \ref{thm:limiteigen} in Appendix \ref{app:finitecutoff} for proof). Therefore, the $i$th eigenvalue and its corresponding eigenvectors of $\hat P^{(k)}\overline{\hpm}\hat P^{(k)}$ with $k\gg i$ approximate some eigenvalue and eigenvectors of $\overline{\hpm}$ respectively. To check whether all eigenvalues of $\hpm$ can be approximated by the finite-dimensional cut-off approximation,  we can show that
\begin{equation}
\sigma(\overline{\hpm})\cap(\lambda_i,\lambda_{i+1})=\emptyset,
\end{equation}
 if $\lambda_i\neq\lambda_{i+1}$ (see Theorem \ref{thm:inverse} in Appendix \ref{app:finitecutoff} for more details). In other words, each eigenvalue of $\overline{\hpm}$ is a limit point of the sequence $\{\lambda_j^{k_o(j)},\lambda_j^{k_o(j)+1},\cdots,\lambda_j^{k_o(j)+n},\cdots\}$ for some $j$. 

The accuracy of the above approximation can be discussed by the following procedure as in \cite{kato1949upper,harrell1978generalizations}.  Let $|\psi_i^{(k)}\rangle$ be a normalized eigenvector of $\hat P^{(k)}\overline{\hpm}\hat P^{(k)}$ corresponding to the eigenvalue $\lambda_i^{(k)}$. One has
\begin{equation}
\langle \psi_i^{(k)}|\overline{\hpm}|\psi_i^{(k)}\rangle=\lambda_k^{(k)}.
\end{equation} 
Defining
\begin{equation}
\epsilon_i^{(k)}:=\|(\overline{\hpm}-\lambda_i^{(k)})\psi_i^{(k)}\|,
\end{equation}
we will show 
\begin{equation}\label{eq:lambdainaninterval}
\lambda_i\in(\lambda_i^{(k)}-\epsilon_i^{(k)},\lambda_i^{(k)}+\epsilon_i^{(k)}).
\end{equation}  
Let $\{|\omega,\delta\rangle\}$ be an orthonormal basis of the Hilbert space, consisting of eigenvectors of $\overline{\hpm}$, where $\omega$ denotes the eigenvalue and $\delta$ represents other quantum numbers. Given $\lambda_i^{(k)}$, for arbitrary real numbers $\alpha$ and $\beta$ with  $\alpha<\lambda_i^{(k)}<\beta$, we have
\begin{equation}\label{eq:accuracy1}
\langle \psi_i^{(k)}|(\overline{\hpm}-\alpha)(\overline{\hpm}-\beta)|\psi_i^{(k)}\rangle=\sum_\omega\sum_\delta(\omega-\alpha)(\omega-\beta)|\langle \psi_i^{(k)}|\omega,\delta\rangle|^2.
\end{equation}
Then supposing $(\alpha,\beta)\cap\sigma(\overline{\hpm})=\emptyset$, one will get
\begin{equation}\label{eq:inequality}
\langle \psi_i^{(k)}|(\overline{\hpm}-\alpha)(\overline{\hpm}-\beta)|\psi_i^{(k)}\rangle\geq 0.
\end{equation}
Because of $\langle\psi_i^{(k)}|(\hpm-\lambda_i^{(k)})|\psi_i^{(k)}\rangle=0$, $\langle \psi_i^{(k)}|(\overline{\hpm}-\alpha)(\overline{\hpm}-\beta)|\psi_i^{(k)}\rangle$ can be expanded as
\begin{equation}\label{eq:halphahbeta}
\langle \psi_i^{(k)}|(\overline{\hpm}-\alpha)(\overline{\hpm}-\beta)|\psi_i^{(k)}\rangle=(\epsilon_i^{(k)})^2+(\lambda_i^{(k)}-\alpha)(\lambda_i^{(k)}-\beta).
\end{equation}
Substituting \eqref{eq:halphahbeta} into \eqref{eq:inequality}, we get
\begin{equation}\label{eq:betaleq}
\beta\leq \lambda_i^{(k)}+\frac{(\epsilon_i^{(k)})^2}{\lambda_i^{(k)}-\alpha}.
\end{equation}
Note that \eqref{eq:betaleq} holds under the assumption $(\alpha,\beta)\cap\sigma(\overline{\hpm})=\emptyset$. Thus, if $\beta> \lambda_i^{(k)}+(\epsilon_i^{(k)})^2/(\lambda_i^{(k)}-\alpha)$, one will get $(\alpha,\beta)\cap\sigma(\overline{\hpm})\neq\emptyset$, which, together with the fact that $\sigma(\overline{\hpm})$ is closed, ensures
\begin{equation}
\left(\alpha,\lambda_i^{(k)}+\frac{(\epsilon_i^{(k)})^2}{\lambda_i^{(k)}-\alpha}\right]\cap \sigma(\overline{\hpm})\neq \emptyset. 
\end{equation}
Choosing $\alpha$ as $\alpha=\lambda_i^{(k)}-\epsilon_i^{(k)}$, we get
\begin{equation}\label{eq:four24}
\left(\lambda_i^{(k)}-\epsilon_i^{(k)},\lambda_i^{(k)}+\epsilon_i^{(k)}\right]\cap \sigma(\overline{\hpm})\neq \emptyset.
\end{equation}
Noting that \eqref{eq:four24} is true for all $k\geq k_o(i)$ and $\lim_{k\to\infty}\lambda_i^{(k)}\pm\epsilon_i^{(k)}=\lambda_i$, one finally has
 \begin{equation}\label{eq:intevaloc}
 \lambda_i\in (\lambda_i^{(k)}-\epsilon_i^{(k)},\lambda_i^{(k)}+\epsilon_i^{(k)}].
 \end{equation}
 Similar to the derivation of \eqref{eq:betaleq}, one can also obtain  
 \begin{equation}\label{eq:alphageq}
 \alpha\geq \lambda_i^{(k)}+\frac{(\epsilon_i^{(k)})^2}{\lambda_i^{(k)}-\beta},
 \end{equation}
 By choosing $\beta=\lambda_i^{(k)}+\epsilon_i^{(k)}$, \eqref{eq:alphageq} will finally lead to
\begin{equation}\label{eq:intevalco}
 \lambda_i\in [\lambda_i^{(k)}-\epsilon_i^{(k)},\lambda_i^{(k)}+\epsilon_i^{(k)}).
\end{equation}
Then \eqref{eq:lambdainaninterval} is obtained by combining \eqref{eq:intevaloc} and \eqref{eq:intevalco}.

The above discussion can also be applied to estimate the accuracy of approximating $|\psi_i\rangle$ by  $|\psi_i^{(k)}\rangle$. Given an interval $(\alpha_o,\beta_o)\ni\lambda_i$ such that $\lambda_i$ is the only eigenvalue contained in it, we have
\begin{equation}\label{eq:four28}
\begin{aligned}
&\langle \psi_i^{(k)}|(\hpm-\alpha_o)(\hpm-\beta_o)|\psi_i^{(k)}\rangle-(\lambda_i-\alpha_o)(\lambda_i-\beta_o)\sum_{\delta}|\langle\psi_i^{(n)}|\lambda_i,\delta\rangle|^2\\
=&\sum_{\lambda\neq \lambda_i}\sum_\delta(\lambda-\alpha)(\lambda-\beta)\|\langle \psi_k^{(n)}|\lambda,\delta\rangle\|\geq 0.
\end{aligned}
\end{equation}
Substituting \eqref{eq:halphahbeta} into \eqref{eq:four28}, one obtains
\begin{equation}
(\epsilon_i^{(k)})^2+(\lambda_i^{(k)}-\alpha)(\lambda_i^{(k)}-\beta)\geq (\lambda_i-\alpha_o)(\lambda_i-\beta_o)\sum_{\delta}|\langle\psi_i^{(n)}|\lambda_i,\delta\rangle|^2
\end{equation}
which leads to
\begin{equation}\label{eq:eigenstate}
1-\sum_{\delta}\left|\langle\psi_i^{(k)}|\lambda_i,\delta\rangle\right|^2\leq \frac{(\varepsilon_i^{(k)})^2}{(\lambda_i^{(k)}-\alpha_o)(\beta_o-\lambda_i^{(k)})}.
\end{equation} 
Now let us assume  $\lambda_{i-1}^{(k)}\neq \lambda_i^{(k)}\neq \lambda_{i+1}^{(k)}$ without loss of generality. Then one can choose a sufficiently large $k$ such that 
\begin{equation}\label{eq:relationlambdaepsilon}
\lambda_{i-1}^{(k)}+\epsilon_{i-1}^{(k)}<\lambda_i^{(k)}-\epsilon_i^{(k)},\ \lambda_i^{(k)}+\epsilon_i^{(k)}<\lambda_{i+1}^{(k)}-\epsilon_{i+1}^{(k)},
\end{equation}
which, together with \eqref{eq:lambdainaninterval}, implies that the interval $(\lambda_{i-1}^{(k)}+\epsilon_{i-1}^{(k)},\lambda_{i+1}^{(k)}-\epsilon_{i+1}^{(k)})$ contains the single eigenvalue $\lambda_i$. Then, according to \eqref{eq:eigenstate}, we get
\begin{equation}\label{eq:four32}
1-\sum_{\delta}\left|\langle\psi_i^{(k)}|\lambda_i,\delta\rangle\right|^2\leq \frac{(\epsilon_i^{(k)})^2}{(\lambda_i^{(k)}-\lambda_{i-1}^{(k)}-\epsilon_{i-1}^{(k)})(\lambda_{i+1}^{(k)}-\lambda_i^{(k)}-\epsilon_{i+1}^{(k)})}.
\end{equation} 
The left hand side of \eqref{eq:four32} measures the accuracy of approximating $|\psi_i\rangle$ by  $|\psi_i^{(k)}\rangle$ because $|\psi_i^{(k)}\rangle$ is normalized.
 
\subsection{Finite-dimensional cut-off approximation for the numerical computation}
In our model, there are two free parameters  $\varepsilon_b$ and $\b$. By mimicking the derivation of Theorem \ref{thm:analyticfamily} in Appendix \ref{app:analyticity}, one can show that $\overline{\hpm}$, as an operator-valued function of $\varepsilon_b$ and $\b$, form an analytic family of type (B) in the sense of Kato. Therefore, we can set $\varepsilon_b=0=\b$ in our computation, and the results for case with either $\varepsilon_b\neq 0$ or $\b\neq 0$ can be obtained by perturbing that for the case of $\varepsilon_b=0=\b$. 
In other words, the former can be expanded as some power series  of $\varepsilon_b$ and $\b$, and the convergences of the series are ensured by the analyticity of $\hpm$ on $\varepsilon_b$ and $\b$.

By setting $\varepsilon_b=0$, we work in the specific Hilbert space define by \eqref{eq:hbc} with $\varepsilon_b=0$. This Hilbert space is denoted by $\mathcal H^{(0)}$ where a state is given by a wave function $\psi(2\eta\delta_b)\equiv \psi(\eta)$ with $\eta\in\mathbb Z$. Then the action of $\hpm$ reads
\begin{equation}\label{eq:hexpand}
\begin{aligned}
&(\overline{\hpm}\psi)(\eta)=-\frac{1}{4}\gamma^2\Pl^4(\eta+2)(\eta+1)\psi(\eta+2)-i\gamma^2\Pl^2L_0m(\eta+1)\psi(\eta+1)\\
&+\left(\frac{1}{4}\gamma^2\Pl^4(\eta+1)^2+\frac{1}{4}\gamma^2\Pl^4(1+4\delta_b^2\gamma^2)\eta^2\right)\psi(\eta)+i\gamma^2\Pl^2L_0m\eta\psi(\eta-1)-\frac{1}{4}\gamma^2\Pl^4\eta(\eta-1)\psi(\eta-2).
\end{aligned}
\end{equation}
Thus the eigen-equation $(\overline{\hpm}\psi)(\eta)=\omega\psi(\eta)$ results in
\begin{equation}\label{eq:recurrence}
\begin{aligned}
&\psi(\eta+2)=\frac{4}{\gamma^2\Pl^4(\eta+2)(\eta+1)}\Big(-\omega\psi(\eta)-i\gamma^2\Pl^2L_0m(\eta+1)\psi(\eta+1)\\
&+\left(\frac{1}{4}\gamma^2\Pl^4(\eta+1)^2+\frac{1}{4}\gamma^2\Pl^4(1+4\delta_b^2\gamma^2)\eta^2\right)\psi(\eta)+i\gamma^2\Pl^2L_0m\eta\psi(\eta-1)-\frac{1}{4}\gamma^2\Pl^4\eta(\eta-1)\psi(\eta-2)\Big).
\end{aligned}
\end{equation}
For $\eta=0$, \eqref{eq:recurrence} becomes 
\begin{equation}\label{eq:psi201}
\psi(2)=\left(-\frac{2\omega}{\gamma^2\Pl^4}+\frac{1}{2}\right)\psi(0)-i\frac{2L_0m}{\Pl^2}\psi(1),
\end{equation}
which implies that the value of $\psi(2)$ depends only on $\psi(0)$ and $\psi(1)$. Once $\psi(2)$ is obtained by \eqref{eq:psi201}, the values of $\psi(\eta)$ for all $\eta\geq 3$ can be obtained by \eqref{eq:recurrence}. Thus, the values of $\psi(\eta)$ for all $\eta\geq 2$ are determined completely by $\psi(0)$ and $\psi(1)$ by \eqref{eq:recurrence}\footnote{Note that the values of $\psi(0)$ and $\psi(1)$ have to be chosen suitably so that the resulting $\psi(\eta)$ is normalizable. }.  Similarly, one can rewrite \eqref{eq:recurrence} to express $\psi(\eta-2)$ in terms of $\psi(\eta-1)$, $\psi(\eta)$, $\psi(\eta+1)$ and $\psi(\eta+2)$. Then, by setting $\eta=-1$, it is easy to see that the values of $\psi(\eta)$ for all $\eta\leq -3$ are completely determined by $\psi(-1)$ and $\psi(-2)$. 
Thus, for a given eigenvector of $\overline{\hpm}$,  its values for $\eta<0$ decouple from its values for $\eta\geq 0$. Hence, the eigenvectors of $\overline{\hpm}$ can be classified into two supper-selected sectors. The first sector consists of those $\psi$ vanishing for $\eta<0$, while the second sector consists of the ones vanishing for  $\eta\geq 0$. Consider a transformation $T$ defined as
\begin{equation}
(T\psi)(\eta)=\psi(-\eta-1).
\end{equation}
Then $T$ relates the eigenvectors in the two sectors. 
It should be noted that this classification of the eigenvectors is valid only for the case of $\varepsilon_b=0=\b$. However, for the case of $\varepsilon_b\neq 0$ or $\b\neq 0$, the eigenvectors can be divided into two sectors, satisfying 
$
\sum_{\eta\geq 0}|\psi(\eta)|^2\ll 1
$
or 
$
\sum_{\eta<0}|\psi(\eta)|^2\ll 1
$
  respectively.

Given an eigenvector $\psi_-$ in the second sector with the eigenvalue $\omega_-$, a straight-forward calculation gives that 
\begin{equation}\label{eq:hplusepsilon}
(\overline{\hpm} +\hat\epsilon)T\psi_-=\omega_-T\psi_-,
\end{equation}
where $\hat\epsilon$ satisfies
\begin{equation}
(\hat\epsilon\psi)(\eta)=-\gamma^4\Pl^4\delta_b^2(2\eta+1)\psi(\eta). 
\end{equation}
Thus $T\psi_-$ is an eigenvector of $\overline{\hpm}+\hat\epsilon$ with respect to the eigenvalue $\omega_-$.

Define $\hat H(\lambda)=(\overline{\hpm}+\hat\epsilon)+\lambda\hat\epsilon$ on $\overline{\dmu(m)}$. Similar to the discussion on $\overline{\hpm}$, we can show that $\hat H(\lambda)$ is self-adjoint with the domain $\overline{\dmu(m)}$ independent of $\lambda$. Moreover, $\hat H(\lambda)$ forms an analytic family of type (A) in the sense of Kato. This can be verified easily by showing that
$\hat H(\lambda)|\psi\rangle$ for all $|\psi\rangle\in\overline{\dmu(m)}$ is a vector-valued analytic function of $\lambda$ (see e.g. Chapter XII.1 in \cite{reed2003methodsiv}). Therefore, the  Rayleigh-Schr\"odinger perturbation theory can be applied to expand the eigenvalues $\Omega^{(\lambda)}$ and the eigenvectors $\Psi^{(\lambda)}$  of $\hat H(\lambda)$ as the Rayleigh-Schr\"odinger series
\begin{equation}\label{eq:serieseigensys}
\begin{aligned}
\Omega^{(\lambda)}&=\omega_-+\sum_{n=1}^\infty c_n(\lambda\gamma^4\Pl^4\delta_b^2)^n,\\
\Psi^{(\lambda)}&=T\psi_-+\sum_{k=1}^\infty(\lambda\gamma^4\Pl^4\delta_b^2)^n \phi_n,
\end{aligned}
\end{equation}
where $c_n$ and $\phi_n$ denotes the coefficients (see e.g.\cite{reed2003methodsiv}) and the convergence of these series for all $\lambda\in\mathbb R$ is ensured by the analyticity of $\hat H(\lambda)$ on $\lambda$. By setting $\lambda=-1$, one gets the eigenvalue $\Omega^{(-1)}=:\omega_+$ and the corresponding eigenvector $\Psi^{(-1)}=:\psi_+$ of the operator $\hat H(-1)=\hpm$. By substituting the explicit expression of $\phi_n$ into \eqref{eq:serieseigensys}, one can get $\psi_+(\eta)=0$ for all $\eta<0$. Thus $\psi_+$ are in the first sector. Therefore, for each eigenvector of $\hpm$ in the second sector with eigenvalue $\omega_-$, there always exist an adjoint eigenvector $\psi_+$ of $\hpm$  in the first section with eigenvalue $\omega_+$ nearby $\omega_-$. 

By \eqref{eq:serieseigensys}, for  a very small value of $\gamma^4\Pl^4\delta_b^2$, the difference between $\omega_+$ and $\omega_-$ would be very tiny
 so that very high computational cost is needed to separate their values numerically. To overcome this difficulty we consider the Hilbert space $\hmu^{(0+)}\subset\hmu^{(0)}$ defined by
\begin{equation}
\hmu^{(0+)}=\overline{\{\psi\in\hmu^{(0)},\psi(\eta)=0\ \forall \eta<0 \}}
\end{equation}
where the symbol $\overline{\{ \cdot\} }$ represents the completion with respect to the inner product of $\hmu^{(0)}$. 
Then, we diagonalize the operators $\hpm\restriction \hmu^{(0+)} $ and $(\hpm+\hat\epsilon)\restriction \hmu^{(0+)}$, denoting the restrictions of $\hpm$ and $\hpm+\hat\epsilon$ on $\hmu^{(0+)}$ respectively, by the finite-dimensional cut-off approximation method. Let $\psi_+$ be an eigenvector of  $\hpm\restriction \hmu^{(0+)} $ with eigenvalue $\omega_+$, and $\tilde \psi_-$, the eigenvector of $(\hpm+\hat\epsilon)\restriction \hmu^{(0+)}$ with eigenvalue $\omega_-$. Then the vectors $\psi_+$ and $T^{-1}\tilde\psi_-=:\psi_-$ in $\hmu^{(0)}$ are the eigenvectors of $\hpm$ with eigenvalues $\omega_+$ and $\omega_-$ respectively.  Moreover, \eqref{eq:serieseigensys} implies 
\begin{equation}\label{eq:sym}
\psi_+(\eta)\cong (T\psi_-)(\eta). 
\end{equation}

\section{The quantum dynamics}\label{section6}
We now study the dynamics of the model for the cases of $\hat p_\varphi\neq 0$ and $\hat p_\varphi=0$ respectively.  For $\hat p_\varphi\neq 0$, the corresponding classical solution is an extension of the Schwarzschild interior with an extra minimally coupled massless scalar field. This extension is referred to as the Janis-Newman-Winicour (JNW) interior  which differs from the usual JNW spacetime as an extension of Schwarzschild exterior \cite{janis1968reality}. In the classical JNW interior  \cite{zhang2020quantum}, characterized by a parameter $B=2\sqrt{m^2+G p_\varphi^2/4\pi}$, there are two singularities at $r=0$ and $r=B$ respectively, where $r$ is the radial coordinate. Once $p_\varphi$ vanishes, the singularity at $r=B$ will disappear and is replaced by the Schwarzschild horizon, so that the JNW interior becomes the Schwarzschild interior. 

In Section \ref{sec:deparametrization}, gravity is deparametrized by the scalar field which provides a material reference frame of time. 
However, for the case of $\hat p_\varphi=0$, the reference frame of time will disappear. The physical Hamiltonian becomes the Hamiltonian constraint. However, it is still necessary to understand the dynamics of the system as the relational evolution with respect to certain gravitational degree of freedom. The information of the dynamics is encoded in the solutions to $\hat \h\psi=0$. Thus we need to solve this equation.  

\subsection{Dynamics for $\hat p_\varphi\neq 0$}
By \eqref{eq:identity}, the Hilbert space $L^2(\sigma_\c,\dd \mu_\c;\hmu(\cdot))$ of the model consists of functions $\psi:m\mapsto \psi(m)\in\hmu(m)$. 
As shown in Section \ref{sec:betaoperator}, $\hmu(m)$ can be chosen as the one defined by  \eqref{eq:hbc} with $\varepsilon_b=0$ and some $m$-dependent $\delta_b$, denoted by $\delta^{(m)}_b$ whose explicit expression depends on the schemes to quantize $\h$. With this convention, a state $\psi\in L^2(\sigma_\c,\dd \mu_\c;\hmu(\cdot))$ can be represented by a family of functions $$\psi(m,\cdot):\eta\to \psi(m,2\delta_b^{(m)}\eta)$$ with $\eta\in\mathbb Z$. Given a state $\psi$, according to \eqref{eq:phystate} and \eqref{eq:hgr}, an associated dynamical state reads 
\begin{equation}\label{eq:evolutionpsim}
\psi(\varphi,m)=e^{\mp i\frac{\sqrt{4\pi G}}{L_0\gamma\Pl^2}\varphi\sqrt{\hpm} }\hat P_{[0,\infty)}\psi(m).
\end{equation} 
Let $|\omega(m)\rangle\in\hmu(m)$ be the normalized eigenvector of $\hpm$ with respect to the eigenvalue $\omega(m)$.  Then \eqref{eq:evolutionpsim} is simplified as 
\begin{equation}
\psi(\varphi,m)=\sum_{\omega(m)\geq 0}e^{\mp i\frac{\sqrt{4\pi G\omega(m)}}{L_0\gamma\Pl^2}\varphi}\langle\omega(m)|\psi(m)\rangle|\omega(m)\rangle. 
\end{equation}
We choose $\psi(m)$ as
\begin{equation}\label{eq:semiini}
\psi(m,\eta)=e^{-\frac{(m-m_0)^2}{2\sigma_m^2}+i \lambda m}e^{-\frac{(\eta-\eta_0)^2}{2\sigma_\eta^2}-i\beta \eta}
\end{equation}
which carries some semiclassical features. According to \eqref{eq:evolutionpsim}, the initial state evolved by the Hamiltonian is $\hat P_{[0,\infty)}\psi(m)$. Then, it is possible that $\hat P_{[0,\infty)}\psi(m)$ is no longer a semiclassical wave packet even though $\psi(m)$ is. To see how to avoid this possibility, we introduce the expectation value of $\hpm$, 
$$\omega_o(m):=\langle\psi(m)|\hpm|\psi(m)\rangle$$
and its uncertainty 
$$\Delta\omega(m):=\sqrt{\langle\psi(m)|(\hpm)^2|\psi(m)\rangle-\langle\psi(m)|\hpm|\psi(m)\rangle^2}.$$ 
For each $m$, we think of $\psi(m)$ as a wave function of  the eigenvalues $\omega$ of $\hpm$, so it is some wave packet peaked at $\omega_0(m)$ with fluctuation $\Delta\omega(m)$. 
The projection $\hat P_{[0,\infty)}$ cuts off $\psi(m)$ at $m=0$ and vanishes it for all $\omega(m)\leq 0$. Therefore, $\hat P_{[0,\infty)}\psi(m)$ can keep the wave-packet feature of $\psi(m)$ only if $|\omega_o(m)|\gg \Delta\omega(m)$. This condition is the criterion to choose the parameters in \eqref{eq:semiini}.

For a properly chosen $\psi(m)$, its evolution reads
\begin{equation}\label{eq:dyna}
\psi(m,\eta,\varphi)= e^{-\frac{(m-m_0)^2}{2\sigma_m^2}+i \lambda m}\sum_{\omega(m)\geq 0}\langle \mu_\eta^{(m)} |\omega(m)\rangle e^{i\frac{\sqrt{4\pi G\omega(m)}}{L_0\gamma \Pl^2}\varphi}\sum_{\eta'}e^{-\frac{(\eta'-\eta_0)^2}{2\sigma_\eta^2}-i\beta \eta'}\langle \omega(m)|\mu_{\eta'}^{(m)}\rangle.
\end{equation}
To check the consistence between the quantum dynamics and effective dynamics, we calculate the expectation value of the operator $\hat p_b$ as
\begin{equation}\label{eq:expted}
\begin{aligned}
&\frac{\langle \hat p_b\rangle}{\gamma\delta_b \Pl^2}=\mathcal N^2\int_{-\infty}^\infty\dd m e^{-\frac{(m-m_0)^2}{\sigma_m^2}}\sum_\eta\eta\left| \sum_{\omega(m)\geq 0}\langle \mu_\eta^{(m)} |\omega(m)\rangle e^{i\frac{\sqrt{4\pi G\omega(m)}}{\Pl^2\gamma L_0}\varphi} \sum_{\eta'}e^{-\frac{(\eta'-\eta_0)^2}{2\sigma_\eta^2}-i\beta \eta'}\langle \omega(m)|\mu_{\eta'}^{(m)}\rangle\right|^2\\
=:&\frac{1}{\sqrt{\pi}\,\sigma_m}\sum_\eta\eta \int_{-\infty}^\infty \dd m e^{-\frac{(m-m_0)^2}{\sigma_m^2}} \mathcal P(\eta,m,\varphi)
\end{aligned}
\end{equation}
where the normalization factor $\mathcal N$ is given by
\begin{equation}\label{eq:norm}
\mathcal N^{-2}=\int_{-\infty}^\infty \dd m e^{-\frac{(m-m_0)^2}{\sigma_m^2}}\sum_{\omega(m)\geq 0} \left| \sum_{\eta}e^{-\frac{(\eta-\eta_0)^2}{2\sigma_\eta^2}-i\beta \eta}\langle \omega(m)|\mu_\eta^{(m)}\rangle\right|^2=:\int_{-\infty}^\infty \dd m e^{-\frac{(m-m_0)^2}{\sigma_m^2}} \mathfrak{n}(m)
\end{equation}
The integral in \eqref{eq:expted} can be calculated by using the saddle point approximation as
\begin{equation}\label{eq:saddlepoint}
\begin{aligned}
\frac{\langle \hat p_b\rangle}{\gamma\delta_b \Pl^2}=&\sum_{\eta}\eta\,\mathcal P(\eta,m_0,\varphi)\left(1+\frac{\sigma_{m_0}^2}{4}\frac{\mathcal P''(\eta,m_0,\varphi)}{\mathcal P(\eta,m_0,\varphi)}+\cdots\right).
\end{aligned}
\end{equation}
Therefore, as far as the leading order of the evolution is concerned, we only need to compute $\mathcal P(\eta,m_0,\varphi)$. The result of  \eqref{eq:norm} reads
\begin{equation}
\mathcal N^{-2}=\sqrt{\pi}\,\sigma_{m_0}\mathfrak{n}(m_0)\left(1+\frac{\sigma_m^2}{4}\frac{\mathfrak n''(m_0)}{\mathfrak n(m_0)}+\cdots\right).
\end{equation}
Hence one has
\begin{equation}\label{eq:p}
\begin{aligned}
\mathcal P(\eta,m_0,\varphi)\cong \frac{\left| \sum_{\omega(m_0)\geq 0}\langle\mu_\eta^{(m_0)} |\omega(m_0)\rangle e^{i\frac{\sqrt{4\pi G\omega(m_0)}}{\Pl^2\gamma L_0}\varphi} \sum_{\eta'}e^{-\frac{(\eta'-\eta_0)^2}{2\sigma_\eta^2}-i\beta \eta'}\langle \omega(m_0)|\mu_{\eta'}^{(m_0)}\rangle\right|^2}{\sum_{\omega(m_0)\geq 0} \left| \sum_{\eta}e^{-\frac{(\eta-\eta_0)^2}{2\sigma_\eta^2}-i\beta \eta}\langle \omega(m_0)|\mu_\eta^{(m_0)}\rangle\right|^2}.
\end{aligned}
\end{equation}
By this formula, $\mathcal P(\eta,m_0,\varphi)$ can be computed numerically easily. The numerical results of
$\mathcal P(\eta,m_0,\varphi)^{1/2}$ and $\langle \hat p_b\rangle/(\gamma\delta_b\Pl^2)\cong \sum_\eta\eta\mathcal P(\eta,m_0,\varphi)$ are shown in Fig. \ref{fig:nonvcmevlt}, where we choose $\delta_b=\sqrt{\Delta}$ with $\Delta$ being the area gap in LQG.
Moreover, one can also calculate the evolution of $p_b$ with respect to the effective Hamiltonian
\begin{equation}
H=\frac{4\pi}{L_0G\gamma}\sqrt{\frac{2}{\delta_b\delta_c}p_b\sin(\delta_b b)p_c\sin(\delta_c c)+\frac{1}{\delta_b^2}p_b^2\sin(\delta_b b)^2+\gamma^2 p_b^2}.
\end{equation}
As a comparison, the results of the effective dynamics and the classical dynamics are also plotted in Fig. \ref{fig:nonvcmevlt}. 

As shown in Fig. \ref{fig:nonvcmevlt},  both of the classical singularities could be resolved by the effective dynamics where $|\eta|$ is prevented from reaching $0$ by the bounces at the local minimums. Then the classical spacetime is extended periodically.  The quantum evolution matches very well with the effective dynamics for several periods  around $\varphi=0$ when the semiclassical feature of the coherent state is well kept.  The effective evolution and thus the quantum evolution match well with the classical dynamics in the classical regime.
Thus the current quantum theory has a correct semiclassical limit and its semiclassical features can be responded properly by the effective dynamics. However, the coherent 
property of the state
cannot be kept along the whole evolution, since the width of the wave packet grows as the time $\varphi$ runs far away from the initial value $\varphi=0$. This leads to a significant difference between the quantum and effective dynamics in late time. 
\begin{figure}[!t]
\centering
\includegraphics[width=0.6\textwidth]{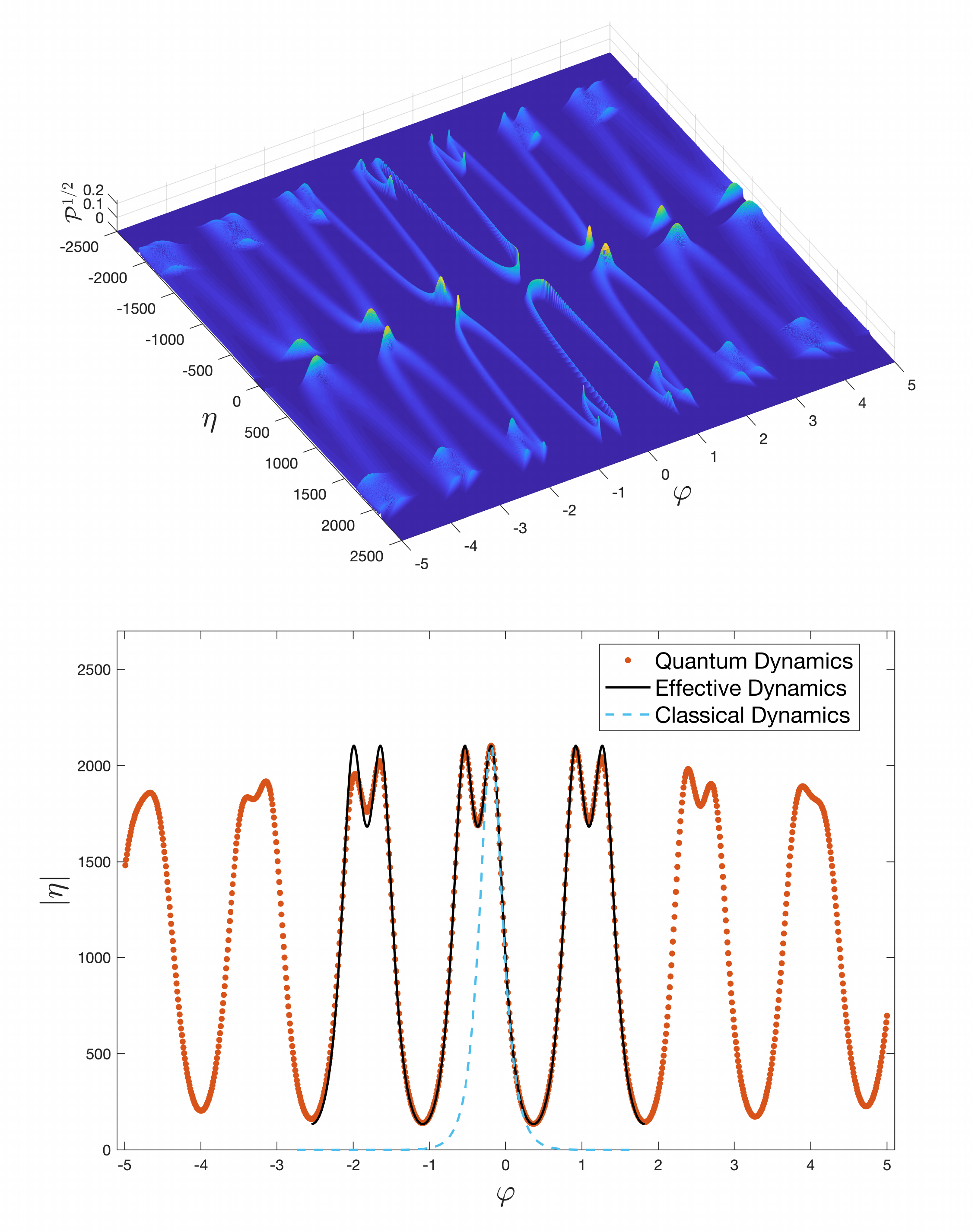}
\caption{Plots of the evolution of the wave packet (top panel) and the evolution of $|\eta|=p_b/(\gamma\delta_b\Pl^2)$ derived by the quantum, the effective dynamics and the classical dynamics (bottom panel). According to the results,   the quantum dynamics as well as the effective dynamics matches well with the classical dynamics in the classical regime. Thus the quantum model admits a correct classical limit. Moreover,
the quantum evolution of $p_b$ matches very well with the effective dynamics in the domain where the wave packet is sharply peaked. However, the width of the wave packet grows as $\varphi$ goes beyond the domain. Then the effective dynamics is no longer valid. }\label{fig:nonvcmevlt}
\end{figure}

\subsection{Dynamics for $\hat p_\varphi=0$}
In the case of $\hat p_\varphi=0$, the dynamics  is encoded in the equation 
\begin{equation}\label{eq:constraint}
\hp\psi=0.
\end{equation}
Alternatively, one could also consider the Hamiltonian constraint operator corresponding to the vacuum Hamiltonian constraint multiplied by volume as a lapse function \cite{zhang2020loop}.
To solve \eqref{eq:constraint},  it is convenient to represent $\psi(m)$ for each $m$ by $\psi(m,\cdot):\omega(m)\mapsto \psi(m,\omega(m))\in\mathbb C$ with $\omega(m)\in\sigma(\overline{\hpm})$. Then the action of $\overline{\hp}$ on $\psi$ reads
\begin{equation}\label{eq:actionhmulti}
(\hp\psi)(m,\omega(m))=\omega(m)\psi(m,\omega(m)). 
\end{equation}
By \eqref{eq:innerH} the inner product of two states $\psi_1$ and $\psi_2$ reads
\begin{equation}\label{eq:innerproduct}
\left(\psi_1,\psi_2\right)=\int_{\sigma_\c}\dd\mu_\c\sum_{\omega(m)\in\sigma(\overline{\hpm})}\psi_1(m,\omega(m))^*\psi_2(m,\omega(m))
\end{equation}
where ${}^*$ means the complex conjugate. 

Given a solution $\psi$ to \eqref{eq:constraint}, Eq. \eqref{eq:actionhmulti} implies 
\begin{equation}
\psi(m,\omega(m))=0,\forall\omega(m)\neq 0.
\end{equation}
Thus the support of the solution $\psi$  is contained in the set 
\begin{equation}
\mathcal S_0=\{(m,\omega(m))|m\in\sigma_\c,\omega(m)\in\sigma(\hpm),\omega(m)=0\},
\end{equation}
which is identified naturally to the set
\begin{equation}
\sigma_\xi:=\{m\in\sigma_\c, 0\in \sigma(\hpm)\}. 
\end{equation}
Given two functions $\psi_i$ with $i=1,2$ on $\sigma_\xi$, \eqref{eq:innerproduct} can be expressed by
\begin{equation}\label{eq:innerproductphy}
(\psi_1,\psi_2)=\int_{\sigma_\xi}\dd\mu_\c\psi_1(m)^*\psi_2(m). 
\end{equation}
Thus, one could naively deem that the functions on $\sigma_\xi$ with the inner product \eqref{eq:innerproductphy} would constitute the physical Hilbert space. However, depending on the explicit expression of $\mu_\c$, it could occur that the right hand side of \eqref{eq:innerproductphy} vanishes for regular functions $\psi_1$ and $\psi_2$. To see how this happens,
let us assume $\sigma_\c=\mathbb R$ at first. 
Because the eigenvalues of $\hpm$ can be expanded at some $m_o$ closed to $m$ by a power series of $m-m_o$ and $\sigma(m_o)$ is discrete, in general they are not $0$. 
Hence it is reasonable to expect that there are only countably many $m\in \sigma_\c$ such that $0\in\sigma(\hpm)$. Then, both $\sigma_\xi$ and $\mathcal S_0$ are countable sets.
This speculation is confirmed by our numerical computation in the $\mu_o$-scheme as we as the scheme with $\delta_c=\sqrt{\Delta}$ and $\delta_b=\sqrt{\Delta}/(2|m|)$, as shown in Fig. \ref{fig:dism}.
For the case of $\sigma_\c\neq \mathbb R$, one has $\sigma_\c\subset\mathbb R$ because $\widehat{p_c\sin(\delta_c c)/\delta_c}$ was assumed to be self-adjoint. Then the resulting $\sigma_\xi$ is just a subset of that for the case of $\sigma_\xi=\mathbb R$. Therefore, $\sigma_\xi$ is always countable. Since $\sigma_\xi$ is countable, it could occur that $\sigma_\xi\subset\sigma_\c$ is of vanishing measure, i.e. $\mu_\c(\sigma_\xi)=0$. For this case, the right hand side of \eqref{eq:innerproductphy} will vanish and thus it cannot define an inner product for the physical Hilbert space. 
Hence we introduce the following procedure to define the inner product in the solution space, which is valid not only for the case of $\mu_\c(\sigma_\xi)=0$ but also for the case of $\mu_\c(\sigma_\xi)\neq 0$. Let $\tilde\delta(m_o,\cdot)$ for each  $m_o\in\sigma_\c$ be the function on $\sigma_\c$ such that (i) $\tilde\delta(m_o,m)=0$ for all $m\neq m_o$ and (ii) $\int_{\sigma_\c}\dd\mu_\c\tilde\delta(m_o,m)=1$. Thus $\tilde\delta(m_o,\cdot)$ is the Dirac-$\delta$ distribution for $\mu_\c(\sigma_\xi)=0$ and proportional to the  Kronecker $\delta$ function otherwise. 
Given a regular function $\psi(m,\omega(m))$, a solution to \eqref{eq:constraint} can be generated as
 \begin{equation}\label{eq:projection}
 \Psi(m)=\sum_{n}\tilde\delta(m_o^{(n)},m) \delta_{0,\omega_m}\psi(m,\omega(m)),
 \end{equation}
 where $m_o^{(n)}\in\sigma_\xi\subset\sigma_\c$.
 By choosing an appropriate dense subspace $\mathcal S\subset\hil$, these $\Psi$ of \eqref{eq:projection} are indeed anti-linear functionals on
  $\mathcal S$ as
  \begin{equation}
  \Psi:\phi\mapsto\Psi[\phi]:= \int_{\sigma_\c}\dd\mu_\c\sum_{\omega_m}\Psi(m)^*\phi(m,\omega_m)=\sum_{m_o^{(n)}}\psi(m_o^{(n)},0)^*\phi(m_o^{(n)},0)
  \end{equation}
  for all $\phi\in\mathcal S$. Hence $\Psi$ is the solution to \eqref{eq:constraint} in the sense that $\Psi[\hp\phi]=0$ for all $\phi\in\mathcal S$. Thus \eqref{eq:projection} defines a rigging map
$\mathbb P:\psi\mapsto\mathbb P\psi$ on $\mathcal{S}$. Therefore, by the refined algebraic quantization procedure \cite{thiemann2008modern}, the physical inner product of two solutions $\Psi_i=\mathbb P\psi_i$ with $i=1,2$ reads
  \begin{equation}
  (\Psi_1|\Psi_2)=\Psi_1[\psi_2]=\sum_{n\in\mathbb Z}\psi_1(m_o^{(n)})^*\psi_2(m_o^{(n)}),
  \end{equation}
  which coincides with \eqref{eq:innerproductphy} if $\mu_\c(\sigma_\xi)\neq 0$. 
Hence the physical Hilbert space of the solutions is given by
\begin{equation}
\hbh:=\overline{\{f:\sigma_\xi\to \mathbb C,\ \sum_n|f(m_o^{(n)})|^2<\infty\}}.
\end{equation} 
The Dirac observable $\hat\xi_{\delta_c}:=\widehat{p_c\sin(\delta_c c)/\delta_c}$ in $\hil$ can be promoted to an operator $\hat \xi'_{\delta_c}$ in $\hbh$ by the dual action such that
\begin{equation}\label{eq:five22}
(\hat \xi_{\delta_c}'f)(m_o^{(n)})=L_0\gamma \delta_c m_o^{(n)}f(m_o^{(n)}),\ \forall\ m_o^{(n)}.
\end{equation}
Eq. \eqref{eq:five22} implies that each $m_o^{(n)}$ is an eigenvalue of $\xi_{\delta_c}'$, and $\xi_{\delta_c}'$ is self-adjoint in $\hbh$ with the spectrum $\overline{\sigma_\xi}$ as the closure of $\sigma_\xi$. 

Given $\psi\in\hmu(m)$, Eq. \eqref{eq:hexpand} indicates
\begin{equation}\label{eq:symmetryminussign}
(\overline{\hp^{(-m)}}\psi)(\eta)^*=(\overline{\hpm}\psi^*)(\eta)
\end{equation}
where $\psi^*\in \hmu(m)$ is defined by $\psi^*(\eta)=\psi(\eta)^*$. For a given $m_o^{(n)}\in \sigma_\xi$, let $\psi_0$ be the eigenvector of $\overline{\hp^{(m_o^{(n)})}}$ with the eigenvalue $0$. Then \eqref{eq:symmetryminussign} ensures that
\begin{equation}
\overline{\hp^{(-m_o^{(n)})}}\psi_0^*=\overline{\hp^{(m_o^{(n)})}}\psi_0=0. 
\end{equation}
Thus, $\psi_0^*$ is an eigenvector of $\hp^{(-m_o^{(n)})}$ with the eigenvalue $0$. Therefore, one has $-m_o^{(n)}\in \overline{\sigma_\xi}$ provided $m_o^{(n)}\in \overline{\sigma_\xi}$. 

For $m=0$, one has the operator $$\hp^{(0)}=\frac{1}{\delta_b^2}\hat\beta_{\delta_b}^2+\gamma^2\hat p_b^2.$$ 
Assume that there is an eigenvector $\phi$ of $\overline{\hp^{(0)}}$ with the eigenvalue $0$. Because of $\hat{\beta}_{\delta_b}^2\geq 0$ and $\hat p_b^2\geq 0$, $\phi$ would satisfy 
\begin{equation}\label{eq:five52}
\hat\beta_{\delta_b}^2\phi=0=\hat p_b^2\phi.
\end{equation}
By the definitions of $\hat\beta_{\delta_b}$ and $\hat p_b^2$, one can easily check that there is no nontrivial $\phi$ satisfying \eqref{eq:five52}. Hence, $0$ is not an eigenvalue of $\overline{\hp^{(0)}}$. 
Taking account of  Theorem \ref{thm:analyticeigen}, one gets the conclusion that $0\notin\sigma(\hpm)$ for sufficiently small $|m|$. Therefore, there exists a gap between the spectrum $\overline{\sigma_\xi}$
and $0$, i.e. $0\notin \overline{\sigma_\xi}$.  Note that $\overline{\sigma_\xi}$ is the spectrum of $\widehat{p_c\sin(\delta_c c)/\delta_c}$ whose classical limit is proportional to the ADM mass of the Schwarzschild BH. Thus, this analysis shows the existence of discrete mass spectrum of the BH. If certain mechanism of BH evaporation could be introduced into our quantum model consistently, the BH would have to evaporate its masses discretely, and the evaporation would have to eventually halt at some  state corresponding to the non-vanishing minimal mass, which is referred to as the BH remnant.

\begin{figure}[!t]
\centering
\includegraphics[width=0.4\textwidth]{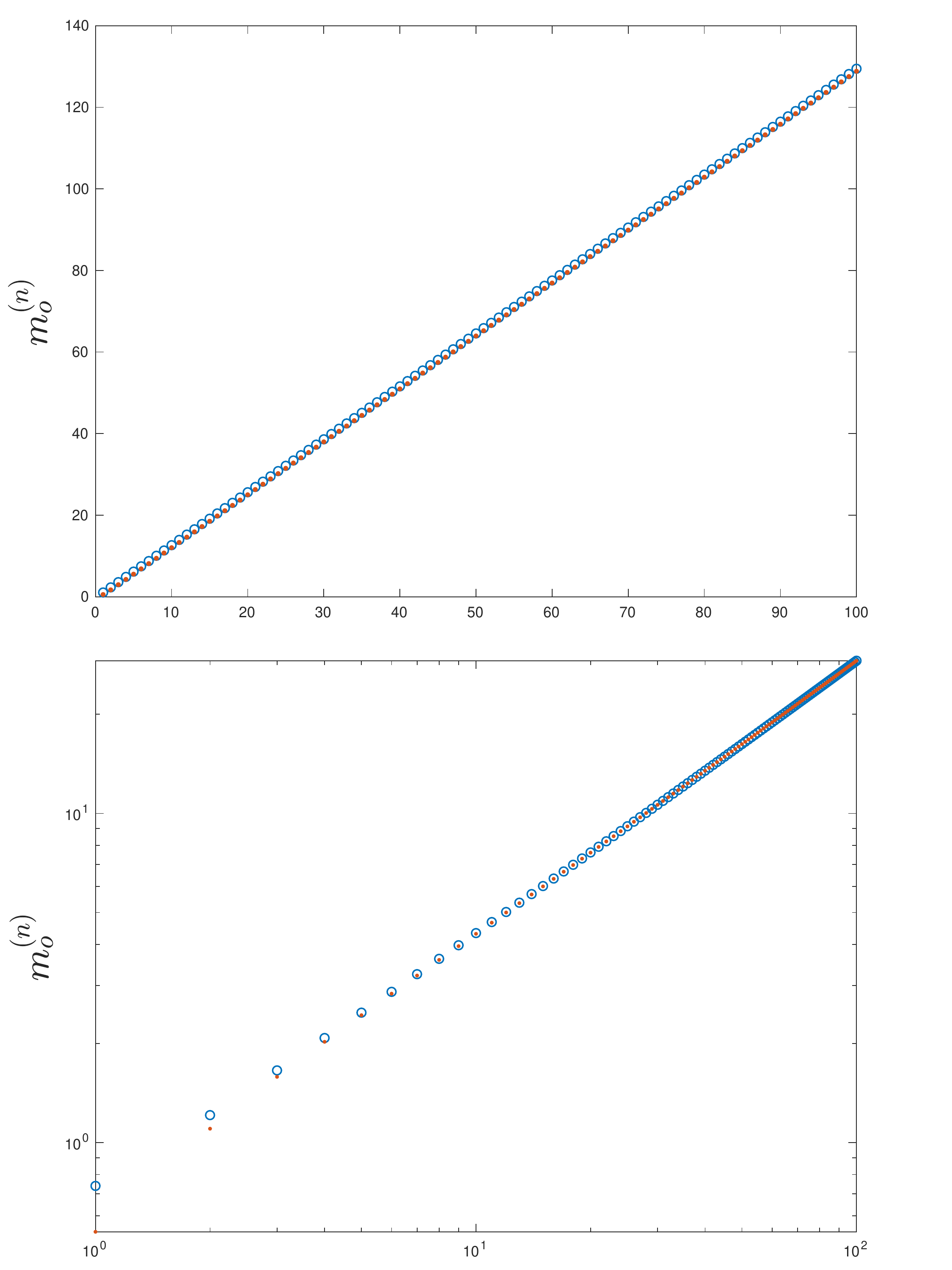}
\caption{Plots of $\sigma_\xi$ for the $\mu_o$ scheme with $\delta_b=\sqrt{\Delta}=\delta_c$ (top panel) and the scheme with $\delta_b=\sqrt{\Delta}/(2\,|m|)$ and $\delta_c=\sqrt{\Delta}$ (bottom panel). As shown in the figure, nearby each value of $m_o^{(n)}$ there exists an adjoint value of $m_o^{(n')}$. $0$ does not belong to $\sigma_\xi$. The parameters are chosen as $\gamma=0.2374$, $\Delta=4\sqrt{3} \pi\gamma\Pl^2$ and $\Pl=1$.}\label{fig:dism} 
\end{figure}

The above analysis is compatible with the numerical results in Fig. \ref{fig:dism}, which shows that the values of $m_o^{(n)}$ are discrete and have the following characters. First, for each $m_o^{(n)}$, there exists an adjoint $m_o^{(n')}$ nearby it. This property comes from the symmetric property \eqref{eq:sym} of the eigenvectors of $\hpm$. Second, the lowest value of $|m_o^{(n)}|$ in $\sigma_\xi$ can be obtained as  $|m_o^{(\rm lwt)}|=0.5499$ for the $\mu_o$ scheme and  $|m_o^{(\rm lwt)}|=0.5362$ for the other scheme, where the parameters are chosen as $\gamma=0.2374$, $\Delta=4\sqrt{3} \pi\gamma \Pl^2$ and $\Pl=1$.

\section{Concluding Remark}\label{sec:conclusion}

The loop quantization of the model of Schwarzschild interior coupled to a massless  scalar field has been studied in the previous sections. By applying the deparametrization procedure, we get the physical Hamiltonian $\h$ of this model with respect to the scalar field. Since $p_c c$ is a Dirac observable in the classical theory, $\h$ is promoted to an operator $\hp$ commutating with the operator $\widehat{p_c\sin(\delta_c c)/\delta_c}$ which corresponds to $p_c c$. Replacing  $\widehat{p_c\sin(\delta_c c)/\delta_c}$ in $\hp$ with its spectrum $L_o\gamma m\in\sigma_\c$, we obtain a family of operators $\hpm$. It is shown that both $\hpm$ and $\hp$ are self-adjoint. The spectrum of $\hpm$ and its analyticity with respect to $m$ are studied. Moreover, we develop a numerical method to diagonalize $\hpm$. Based on these results, the dynamics for the cases of $\hat p_\varphi\neq 0$ and $\hat p_\varphi=0$ are studied respectively. 

For the case of $\hat p_\varphi\neq 0$, the evolution of a wave packet is considered and the results are compared with the effective dynamics governed by the effective Hamiltonian. It turns out that the quantum evolution matches well with the effective dynamics in the domain where the wave packet is sharply peaked. However, the width of the wave packet would increase as the relational time $\phi$ evolves. Thus an inconsistence between the quantum dynamic and the effective dynamics  would occur at late time.  The numerical codes to compute the evolution can be found in \cite{github}. 

For the case of $\hat p_\varphi= 0$, the constraint $\hp=0$ is imposed to get the physical states of the loop quantum Schwarzschild interior model. Its physical Hilbert space $\hbh$ is obtained. The spectrum $\overline{\sigma_\xi}$ of the Dirac observable $\widehat{p_c\sin(\delta_c c)/\delta_c}$ in $\hbh$ is analyzed by both analytical and numerical methods. It turns out that the $\overline{\sigma_\xi}$ is discrete and it does not contain $0$, provided that the parameters $\delta_b^{(m)}$ and $\varepsilon_b(m)$ satisfy the suitable analyticity. Thus, there exists a gap between $\overline{\sigma_\xi}$ and $0$.  Moreover, by the numerical method to diagonalize $\hp$, we can also compute $\overline{\sigma_\xi}$ numerically \cite{github}. Some interesting  results would be obtained if certain mechanism of BH evaporation could be introduced into our quantum model consistently. The only possibility is that the BH evaporates its mass discretely since $\overline{\sigma_\xi}$ is discrete. Moreover, because of $0\notin \overline{\sigma_\xi}$, the evaporation would eventually halt at the state corresponding to the non-vanishing minimal mass. Hence, our result supports the existence of a black hole remnant after Hawking evaporation.

It should be noted that the analyses in the current paper are valid for a quite general class of schemes such that (i) a separable Hilbert subspace $\htau\subset\tilde{\mathcal H}_{\c}$ can be chosen to define an operator $\widehat{p_c\sin(\delta_c c)/\delta_c}$ corresponding to $p_c\sin(\delta_c c)/\delta_c$, which is self-adjoint and commutates with the physical Hamiltonian $\hat \h$; and (ii) 
 the quantum parameter $\delta_b$ is a constant or any function of $p_c\sin(\delta_c c)/\delta_c$.
With the numerical method developed in this paper, it becomes possible to further study the Hawking radiation with the matter backreaction and the distortion of the Hawking spectrum in details. We leave this open issue for our future works. Thus the discrete mass spectrum  predicted by our LQG model provides a solid starting point to study the possibilities of considering the BH remnants as dark matter candidates, as well as solving the puzzle of information loss in BH
evaporation.
\section{acknowledgements}
This work is supported by NSFC with Grants No. 11961131013, No. 11875006 and No. 11775082.  CZ acknowledges the support by the Polish Narodowe Centrum Nauki, Grant No. 2018/30/Q/ST2/00811.
\appendix

\section{Asymptotic behavior of $\hat{\beta}_{\delta_b}$}\label{ap:assyptoticbeta}
By definition, the eigen-equation of $\hat{\beta}_{\delta_b}$ implies that its eigen-functions satisfy
\begin{equation}
\psi(\mu+2\delta_b)=\frac{4i\omega}{\gamma \Pl^2}\frac{1}{\mu+2\delta_b}\psi(\mu)+\frac{\mu}{\mu+2\delta_b}\psi(\mu-2\delta_b),
\end{equation}
which can be rewritten as 
\begin{equation}\label{eq:fd}
\vec\psi(\mu+2\delta_b)=A(\mu)\vec\psi(\mu)
\end{equation}
with
$$
\vec\psi(\mu+2\delta_b):=\left(
\begin{array}{c}
\psi(\mu+2\delta_b)\\
\psi(\mu)
\end{array}
\right) \quad\text{  and   }\quad
A(\mu):=\left(
\begin{array}{cc}
\frac{4i\omega}{\gamma \Pl^2}\frac{1}{\mu+2\delta_b}&\frac{\mu}{\mu+2\delta_b}\\
1&0
\end{array}
\right).
$$
Define
$$
B(\mu)=:\left(
\begin{array}{cc}
\psi_0^+(\mu+2\delta_b)&\psi_0^-(\mu+2\delta_b)\\
\psi_0^+(\mu)&-\psi_0^-(\mu)
\end{array}
\right)
$$
where
$$
\psi_0^\pm(\mu)=\frac{1}{\sqrt{|\mu|}}e^{\pm i k\ln(|\mu|)}.
$$
We obtain that the vector-valued function $
\vec\chi(\mu)=\left(
\begin{array}{c}
\chi_1(\mu)\\
\chi_2(\mu)
\end{array}
\right)
$,
defined by
\begin{equation}\label{eq:chi}
\vec\psi(\mu+2\delta_b)=:B(\mu)\vec\chi(\mu+2\delta_b),
\end{equation}
satisfies 
\begin{equation}\label{eq:A4}
\vec\chi(\mu+2\delta_b)=B(\mu)^{-1}A(\mu)B(\mu-2\delta_b)\vec\chi(\mu):=M(\mu)\vec\chi(\mu).
\end{equation}
which is obtained by applying \eqref{eq:fd}. Our purpose is to derive the condition with which the vector-valued function $\vec\chi(\mu)$ in right-hand-side of \eqref{eq:chi} can be approximated by some constant up to some $O(\mu^{-1})$ term. This can be achieved if the matrix $M(\mu)$ satisfies 
\begin{equation}\label{eq:Mcond}
M(\mu)=M+O(|\mu|^{-2}),
\end{equation}
with some constant $M$.  Substituting \eqref{eq:Mcond} into \eqref{eq:A4}, we obtain
\begin{equation}
k=\frac{\omega}{\gamma\Pl^2\delta_b}
\end{equation}
with
\begin{equation}
M(\mu)=\left(
\begin{array}{cc}
1&0\\
0&-1
\end{array}
\right)+O(|\mu|^{-2}).
\end{equation}
Therefore, one gets
\begin{equation}
\psi(\mu)=\left\{
\begin{aligned}
&\chi_1\psi_0^+(\mu)+\chi_2\psi_0^-(\mu),\ \mu=\varepsilon_b+4 n\delta_n\\
&\chi_1\psi_0^+(\mu)-\chi_2\psi_0^-(\mu),\ \mu=\varepsilon_b+4 n\delta_n+2\delta_b
\end{aligned}
\right.
\end{equation}
with $n\in\mathbb Z$.

\section{The self-adjointness of $\hpm$ and $\hp$}\label{app:selfadjoint}
We now prove that $\hpm$ and $\hp$ are essentially self-adjoint . 

Define an operator $\hat A$ as
\begin{equation}
\hat A |\mu\rangle =\left(\frac{1}{\delta_b^2}\frac{\gamma^2\Pl^4}{16} \big((\mu+2\delta_b)^2+\mu^2\big)+\frac{1}{4}\gamma^4\Pl^4(\mu-2\b)^2\right)|\mu\rangle=:A(\mu)|\mu\rangle,
\end{equation} 
whose domain reads
\begin{equation}
D(\hat A)=\left\{|\psi\rangle,\, \sum_{\mu}\left|A(\mu)\psi(\mu)\right|^2<\infty\right\}.
\end{equation}
Then $\hat A$ is self-adjoint because $A(\mu)$ is real.
Let $\hat B$ be the operator defined on the domain $\dmu$ as
\begin{equation}
\hat B=\frac{2m L_0\gamma}{\delta_b}\hat \beta_{\delta_b}+\frac{1}{\delta_b^2}\left(\hat\beta_{\delta_b}^2-\frac{1}{4}\left(\hat p_b^2+(\hat p_b+\delta_b\gamma\Pl^2)^2\right)\right)
\end{equation}
 Then $\hat B$ can be expressed as
 \begin{equation}
 \hat B=\frac{2mL_0\gamma}{\delta_b}\frac{1}{2i}(\alpha^\dagger-\alpha)-\frac{1}{4\delta_b^2}((\alpha^\dagger)^2+\alpha^2)
 \end{equation}
where
\begin{equation}
\begin{aligned}
\alpha^\dagger\,|\mu\rangle&=\frac{\gamma\Pl^2}{2}(\mu+2\delta_b)\,|\mu+2\delta_b\rangle\\
\alpha\,|\mu\rangle&=\frac{\gamma\Pl^2}{2}\mu\,|\mu+2\delta_b\rangle.
\end{aligned}
\end{equation}
Given $\psi\in\dmu$, we have
\begin{equation}\label{eq:com1}
\|\hat B\psi\|\leq \left\|\left(\frac{2mL_0\gamma}{\delta_b}\frac{1}{2i}\alpha^\dagger-\frac{1}{4\delta_b^2}(\alpha^\dagger)^2\right)\psi\right\|+ \left\|\left(\frac{2mL_0\gamma}{\delta_b}\frac{1}{2i}\alpha+\frac{1}{4\delta_b^2}\alpha^2\right)\psi\right\|.
\end{equation}
Moreover, a straight-forward calculation gives
\begin{equation}\label{eq:com2}
\begin{aligned}
&\left\|\left(\frac{2mL_0\gamma}{\delta_b}\frac{1}{2i}\alpha^\dagger-\frac{1}{4\delta_b^2}(\alpha^\dagger)^2\right)\psi\right\|^2\leq \sum_\mu \Big(\frac{4m^2L_0^2\gamma^4\Pl^4}{16\delta_b^2}\mu^2+\frac{\gamma^4\Pl^8}{256\delta_b^4}\mu^2(\mu-2\delta_b)^2|\\
&\quad\quad +\frac{\gamma^2\Pl^4}{16\delta_b^2}\frac{2mL_0\gamma}{\delta_b}\frac{\gamma\Pl^2}{4}\left(\mu^2(\mu+2\delta_b)+(\mu-2\delta_b)^2\mu\right)\Big)|\psi(\mu)|^2\\
=:&\sum_\mu B^+(\mu)|\psi(\mu)|^2,\\
\end{aligned}
\end{equation}
and 
\begin{equation}\label{eq:com3}
\begin{aligned}
&\left\|\left(\frac{2mL_0\gamma}{\delta_b}\frac{1}{2i}\alpha-\frac{1}{4\delta_b^2}\alpha^2\right)\psi\right\|^2\leq \sum_\mu \Big(\frac{4m^2L_0^2\gamma^4\Pl^4}{16\delta_b^2}(\mu+2\delta_b)^2+\frac{\gamma^4\Pl^8}{256\delta_b^4}(\mu+4\delta_b)^2(\mu+2\delta_b)^2\\
&\quad\quad+\frac{1}{64\delta_b^2}\frac{2mL_0\gamma^4\Pl^6}{\delta_b} \left((\mu+2\delta_b)^2\mu+(\mu+4\delta_b)^2(\mu+2\delta_b)\right)\Big)|\psi(\mu)|^2\\
=:&\sum_\mu B^-(\mu)|\psi(\mu)|^2.
\end{aligned}
\end{equation}
By Eqs. \eqref{eq:com1}, \eqref{eq:com2} and \eqref{eq:com3} we obtain
\begin{equation}
\begin{aligned}
\|\hat B\psi\|\leq \sqrt{\sum_\mu B^+(\mu)|\psi(\mu)|^2}+ \sqrt{\sum_\mu B^-(\mu)|\psi(\mu)|^2}\leq \sqrt{ \sum_\mu 4 B^-(\mu)|\psi(\mu)|^2}
\end{aligned}
\end{equation}
Moreover, one has
\begin{equation}
\begin{aligned}
\|\hat A\psi\|=\sqrt{\sum_\mu A(\mu)^2|\psi(\mu)|^2}.
\end{aligned}
\end{equation}
By the expressions of $B^-(\mu)$ and $A(\mu)$,  there exists some real number $b\geq 0$ such that 
\begin{equation}
4B^-(\mu)\leq \frac{1}{1+2\gamma^4\delta_b^4} A(\mu)^4+b
\end{equation}
which implies
\begin{equation}
\|\hat B\psi\|^2\leq \frac{1}{1+2\gamma^4\delta_b^4}\|\hat A\psi\|^2+b\|\psi\|^2
\end{equation}
Therefore,
$\hpm=\hat B+\hat A$ is self-adjoint on $D(A)$ and thus essentially self-adjoint on $\dmu$ according to the Kato-Rellich Theorem (see Theorem X.12 in \cite{reed2003methodsii}).
Thanks to the self-adjointness of $\hp^{(m)}$, the self-adjointness of $\hp$  can be proven as follows. 
Let $g^\pm$ be elements in the orthogonal complement of the range of $\hp \pm i$ respectively, i.e., $g^\pm\in {\rm Ran}(\hp \pm i)^\perp$. Then we have
\begin{equation}\label{eq:intvanish}
0=\langle (\hat A \pm i)\psi,g^\pm\rangle=\int_{\sigma_\c} \dd \mu_{\c} \left\langle(\hp^{(m)}\pm i)\psi(m),g^\pm(m)\right\rangle,\ \forall \psi\in D(\hp).
\end{equation}
Let $f$ be an square-integrable function $f$ on $\sigma_\c$ with respect to $\mu_\c$, i.e., $f\in L^2(\sigma_\c,\dd\mu_\c)$. Given $\psi\in D(\hp)$, one has 
\begin{equation}
\left|\int_{\sigma_\c}\dd\mu_\c\left\langle f(m)\hp^{(m)}\psi(m),f(m)\hp^{(m)}\psi(m)\right\rangle\right|^2\leq\int_{\sigma_\c}\dd\mu_c|f(m)|^2\int_{\sigma_\c}\|\hp^{(m)}\psi(m)\|^2<\infty 
\end{equation}
which implies that $f\psi:m\mapsto f(m)\psi(m)$ is also in $D(\hp)$. Thus substituting $f\psi$ into \eqref{eq:intvanish}, we conclude
\begin{equation}
\int_{\sigma_\c} \dd \mu_{\c} \overline{f(m)}\left\langle(\hp^{(m)}\pm i)\psi(m),g^\pm(m)\right\rangle=0,\ \forall f\in L^2(\sigma_\c,\dd\mu_\c).
\end{equation}
Therefore, it holds almost everywhere for $m$ that
\begin{equation}\label{eq:conclusionweneed}
\langle (\hp^{(m)}\pm i)\psi(m),g^\pm(m)\rangle=0.
\end{equation}
Since this conclusion is true for all $\psi(m)\in \dmu(m)$, one has that $g^\pm(m)\in  {\rm Ran}(\hp^{(m)}\pm i)^\perp$ almost everywhere for $m$. However, $\hp^{(m)}$ is essentially self-adjoint for all $m$. Hence one gets
\begin{equation}
{\rm Ran}(\hp^{(m)}\pm i)^\perp=\{0\}, \forall m.
\end{equation}
This ensures that $g^\pm (m)=0$ for all $m$. The self-adjointness of $\hat A$ is thus obtained because of the basic criterion for self-adjointness (see, e.g., Theorem VIII.3 in \cite{reed2003methods}).

\section{The analiticity of $\overline{T_m}$}\label{app:analyticity}

\begin{thm}\label{thm:analyticfamily}
$\overline{T_m}$ is an analytic family of forms of type (a) in the sense of Kato. 
\end{thm}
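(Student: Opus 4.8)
The plan is to verify the two conditions in Kato's definition of an analytic family of forms of type (a) (Section VII.4 of \cite{kato2013perturbation}): (i) that the form domain $Q(m)$ coincides with $Q(m_o)$ for all $m$ in a complex neighborhood of $m_o$, and (ii) that for each fixed $\psi$ in this common domain the map $m\mapsto\overline{T_m}(\psi,\psi)$ is analytic. It is convenient to work throughout in the single Hilbert space $\hmu^{(\varepsilon_b^o)}$, transporting each form by the unitary $\mathfrak i_m$ so that all the forms $\mathfrak i_m T_m\mathfrak i_m^{-1}$ carry the common core $\dmu(m_o)$ and the entire $m$-dependence is pushed into the coefficients.

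For condition (i) I would establish, as Lemma \ref{lmm:independentofm}, that the form norm $\|\cdot\|_m$ is equivalent, \emph{uniformly} for $m$ near $m_o$, to the fixed reference norm $\|\psi\|_+:=\left(\langle\psi|\hpb^2|\psi\rangle+\langle\psi|\psi\rangle\right)^{1/2}$. Using the completed-square form $\hpm=-L_0^2m^2\gamma^2+\left(L_0m\gamma+\tfrac{1}{\delta_b^{(m)}}\hat\beta_{\delta_b}\right)^2+\gamma^2(\hpb-\gamma\Pl^2\b)^2$, one finds $\|\psi\|_m^2=\big\|(L_0m\gamma+\tfrac{1}{\delta_b^{(m)}}\hat\beta_{\delta_b})\psi\big\|^2+\gamma^2\|(\hpb-\gamma\Pl^2\b)\psi\|^2+\|\psi\|^2$ (its real part, for complex $m$). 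The lower bound $\|\psi\|_m^2\geq c\,\|\psi\|_+^2$ is immediate and $m$-independent, since the diagonal coefficient $\gamma^2(\tfrac{\gamma\Pl^2}{2}\mu-\gamma\Pl^2\b)^2+1$ is bounded below by a positive multiple of $(\tfrac{\gamma\Pl^2}{2}\mu)^2+1$. The upper bound $\|\psi\|_m^2\leq C\,\|\psi\|_+^2$ follows from $\|\hat\beta_{\delta_b}\psi\|^2\leq C'\|\hpb\psi\|^2+C''\|\psi\|^2$ together with a Young estimate on the cross term; here the hypotheses $\delta_b^{(m_o)}\neq 0$ and the continuity of $\delta_b^{(m)}$ are what keep $1/(\delta_b^{(m)})^2$ bounded on a neighborhood, so that $C$ may be chosen uniform in $m$. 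Since $Q(m)$ is by definition the completion of $\dmu(m_o)$ in $\|\cdot\|_m$ and equivalent norms yield identical completions, this gives $Q(m)=Q(m_o)$.

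For condition (ii) I would first take $\psi\in\dmu(m_o)$, a finite combination of the basis vectors $|\varepsilon_b^o+2n\delta_b^o\rangle$. Writing $\mathfrak i_m\hpm\mathfrak i_m^{-1}$ in this fixed basis through \eqref{eq:psin} and the definition of $\mathfrak i_m$, every matrix element is a polynomial in the index $n$ whose coefficients are polynomials in $m$, $\delta_b^{(m)}$ and $\varepsilon_b(m)$; since $\delta_b^{(m)}$ and $\varepsilon_b(m)$ are analytic at $m_o$ by hypothesis, $T_m(\psi,\psi)$ is a finite sum of analytic functions and hence analytic. For general $\psi\in Q(m_o)$ I would approximate $\psi$ by $\psi_k\in\dmu(m_o)$ in $\|\cdot\|_+$; the uniform form-norm equivalence of step (i) then forces $T_m(\psi_k,\psi_k)\to\overline{T_m}(\psi,\psi)$ locally uniformly in $m$, so $m\mapsto\overline{T_m}(\psi,\psi)$ is a locally uniform limit of analytic functions and therefore analytic by Weierstrass' theorem. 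Conditions (i) and (ii) together are exactly Kato's definition, completing the proof.

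The step I expect to be the main obstacle is the \emph{uniform} upper bound underlying condition (i): one must absorb the indefinite cross term $\tfrac{2L_o\gamma m}{\delta_b^{(m)}}\hat\beta_{\delta_b}$ into the leading positive terms without spoiling the comparison, and, more importantly, ensure that the resulting constants hold throughout a full complex neighborhood of $m_o$ rather than only at $m_o$. This is precisely where $\delta_b^{(m_o)}\neq 0$ and the analyticity of $\delta_b^{(m)}$ are indispensable, and it is the technical core that renders the form domain rigid under variation of $m$.
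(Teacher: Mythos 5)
Your proposal is correct and follows essentially the same route as the paper's Appendix C: the same verification of Kato's two conditions, the same key lemma that $\|\cdot\|_m$ is equivalent on $\dmu(m_o)$ to $\|\psi\|_+=\sqrt{\langle\psi|\hat p_b^2|\psi\rangle+\langle\psi|\psi\rangle}$ (whence $Q(m)=Q(m_o)$), and the same mechanism for analyticity, namely expanding $\mathfrak i_{m}\hp^{(m)}\mathfrak i_{m}^{-1}$ in the fixed basis so that the entire $m$-dependence sits in coefficients built from the analytic functions $\delta_b^{(m)}$ and $\varepsilon_b(m)$. The only implementation-level differences are that the paper gets the upper bound of the norm comparison from the boundedness of the sandwiched operator $(1+\hat p_b^2)^{-1/2}\bigl(\mathfrak i_{m}\hp^{(m)}\mathfrak i_{m}^{-1}+L_0^2\gamma^2m^2+1\bigr)(1+\hat p_b^2)^{-1/2}$ rather than your Young-type estimate, and it evaluates the form directly on $Q(m_o)$ (where $\langle\psi|\hat p_b^2|\psi\rangle$ and $\langle\psi|\hat p_b|\psi\rangle$ are finite) instead of your core-plus-Weierstrass density step.
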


By definition, this theorem can be obtained directly from the following lemmas. 
\begin{lmm}\label{lmm:independentofm}
The norm $\|\cdot\|_m$ defined on $\dmu(m_o)$ for each $m$ is equivalent to the norm 
\begin{equation}\label{eq:normnew}
\|\psi\|_+=\sqrt{\langle \psi|\hat p_b^2|\psi\rangle+\langle\psi|\psi\rangle},
\end{equation}
i.e., there exist constants $c,C>0$ such that
\begin{equation}\label{eq:equivalentnorm}
c\|\psi\|_+\leq \|\psi\|_m\leq C\|\psi\|_+, \forall \psi\in \dmu(m_o).
\end{equation}
\end{lmm}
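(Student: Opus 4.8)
The plan is to transport every operator onto the fixed subspace $\hmu^{(\varepsilon_b^o)}$ through the unitary $\mathfrak i_m$ and then reduce the claim to an elementary comparison of two weight sequences acting on the coefficients $\psi(n):=\langle\varepsilon_b^o+2n\delta_b^o|\psi\rangle$. The guiding observation is that \emph{both} norms are dominated by the quadratic growth $\sum_n n^2|\psi(n)|^2$, with leading coefficients proportional to $(\delta_b^{(m)})^2$ and $(\delta_b^o)^2$ respectively; under the standing hypothesis that $\delta_b^{(m)}$ and $\varepsilon_b(m)$ are analytic at $m_o$ with $\delta_b^{(m_o)}\neq 0$ (so $\delta_b^{(m)}\neq 0$ for $m$ near $m_o$), these coefficients are nonzero, which forces the two norms to be comparable.

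First I would use the completed-square form
\begin{equation*}
\hpm=-L_0^2m^2\gamma^2+\left(L_0m\gamma+\frac{1}{\delta_b}\hat\beta_{\delta_b}\right)^2+\gamma^2(\hat p_b-\gamma\Pl^2\b)^2,
\end{equation*}
so that, because $\mathfrak i_m$ is unitary,
\begin{equation*}
\|\psi\|_m^2=\left\|\left(L_0m\gamma+\frac{1}{\delta_b}\hat\beta_{\delta_b}\right)\mathfrak i_m^{-1}\psi\right\|^2+\gamma^2\left\|(\hat p_b-\gamma\Pl^2\b)\mathfrak i_m^{-1}\psi\right\|^2+\|\psi\|^2.
\end{equation*}
Since $\mathfrak i_m$ carries $|\varepsilon_b(m)+2n\delta_b^{(m)}\rangle$ to $|\varepsilon_b^o+2n\delta_b^o\rangle$, the middle term is a diagonal sum $\gamma^2\sum_n a_n^{(m)}|\psi(n)|^2$ with $a_n^{(m)}=\big(\tfrac{\gamma\Pl^2}{2}(\varepsilon_b(m)+2n\delta_b^{(m)})-\gamma\Pl^2\b\big)^2$, whereas $\|\psi\|_+^2=\sum_n(b_n+1)|\psi(n)|^2$ with $b_n=\big(\tfrac{\gamma\Pl^2}{2}(\varepsilon_b^o+2n\delta_b^o)\big)^2$.

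For the lower bound I would discard the nonnegative first term and note that $a_n^{(m)}$ and $b_n$ are quadratic polynomials in $n$ with positive leading coefficients, so $a_n^{(m)}/b_n\to(\delta_b^{(m)}/\delta_b^o)^2>0$; after adding $1$ to absorb the at most finitely many $n$ at which either sequence vanishes, one gets $\gamma^2a_n^{(m)}+1\ge c(b_n+1)$ uniformly in $n$, giving $c\|\psi\|_+\le\|\psi\|_m$. For the upper bound the middle and last terms are bounded by $\|\psi\|_+^2$ by the same comparison, so the only real work is the first term. Splitting it as $2L_0^2m^2\gamma^2\|\psi\|^2+\tfrac{2}{(\delta_b^{(m)})^2}\|\hat\beta_{\delta_b}\mathfrak i_m^{-1}\psi\|^2$ reduces matters to estimating $\hat\beta_{\delta_b}$; from its explicit shift action, whose matrix elements grow linearly in $\mu$, one obtains $\|\hat\beta_{\delta_b}\phi\|^2\le C\sum_\mu(1+\mu^2)|\phi(\mu)|^2$, and since $\mu=\varepsilon_b(m)+2n\delta_b^{(m)}$ grows like $n$ this is again $\le C'\|\psi\|_+^2$.

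The main obstacle is purely the bookkeeping: the operators $\hpm$ for different $m$ live on different subspaces $\hmu^{(\varepsilon_b(m))}$, so one must push everything through $\mathfrak i_m$ before any comparison is legitimate, and one must keep track of the two distinct families of $\hat p_b$-eigenvalues. Once the statement is recast as a comparison of two $n^2$-type weights on $\ell^2(\mathbb Z)$, the equivalence is elementary, and the single place where the nondegeneracy $\delta_b^{(m)}\neq 0$ is indispensable is in ensuring that the leading coefficient of $a_n^{(m)}$ does not collapse.
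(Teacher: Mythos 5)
Your proposal is correct and is essentially the paper's own argument: the lower bound is obtained exactly as in the paper (complete the square in $\hpm$, discard the nonnegative $\hat\beta_{\delta_b}$ term, and compare the resulting diagonal quadratic weight against that of $1+\hat p_b^2$), and your upper bound --- term-by-term estimates showing the transported form is dominated by $C\langle\psi|(1+\hat p_b^2)|\psi\rangle$ --- is the same form bound the paper establishes by showing that $(1+\hat p_b^2)^{-1/2}\,(\mathfrak i_{m}\hp^{(m)}\mathfrak i_{m}^{-1}+L_0^2\gamma^2m^2+1)\,(1+\hat p_b^2)^{-1/2}$ is bounded. If anything, your explicit bookkeeping of the two lattices (eigenvalues built from $\varepsilon_b(m),\delta_b^{(m)}$ for the transported operators versus $\varepsilon_b^o,\delta_b^o$ in $\|\cdot\|_+$) is more careful than the paper's proof, which writes $\hat p_b$ for both and leaves the band-matrix boundedness check implicit.
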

\begin{proof}
Given $|\psi\rangle=\sum_\mu\psi_\mu|\mu\rangle\in \dmu(m_o)$, by \eqref{eq:hpmg} one gets 
\begin{equation}\label{eq:lowbonded}
\begin{aligned}
T_m(\psi,\psi)+(L_0^2m^2\gamma^2+1)\langle\psi|\psi\rangle
&\geq \gamma^2\langle\psi|(\hat p_b-\gamma\Pl^2\b)^2|\psi\rangle+\langle\psi|\psi\rangle\geq \gamma^2 c\langle\psi|\hat p_b^2|\psi\rangle+\langle\psi|\psi\rangle
\end{aligned}
\end{equation}
where
\begin{equation}
c=1 + \frac{(\gamma\Pl^2\b)^2}{2} - \frac{\gamma\Pl^2\b \sqrt{4 + (\gamma\Pl^2\b)^2}}{2}>0.
\end{equation}
Moreover, consider the operator $(1+\hat p_b^2)^{-\frac12}\,(\mathfrak i_{m}\hp^{(m)}\mathfrak i_{m}^{-1}+L_0^2\gamma^2m^2+1)(1+\hat p_b^2)^{-\frac12}$ on $\dmu(m_o)$. It can be verified straight-forwardly that 
\begin{equation}
\left|\langle \mu| (1+\hat p_b^2)^{-\frac12}\,(\mathfrak i_{m}\hp^{(m)}\mathfrak i_{m}^{-1}+L_0^2\gamma^2m^2+1)(1+\hat p_b^2)^{-\frac12}\,|\mu'\rangle\right|<\infty.
\end{equation} 
Thus the operator $(1+\hat p_b^2)^{-\frac12}\,(\mathfrak i_{m}\hp^{(m)}\mathfrak i_{m}^{-1}+L_0^2\gamma^2m^2+1)(1+\hat p_b^2)^{-\frac12}$ is bounded, i.e., there exist $C>0$ such that 
\begin{equation}\label{eq:boundedmap}
\left|\langle \psi| (1+\hat p_b^2)^{-\frac12}\,(\mathfrak i_{m}\hp^{(m)}\mathfrak i_{m}^{-1}+L_0^2\gamma^2m^2+1)(1+\hat p_b^2)^{-\frac12}\,|\psi\rangle\right|\leq C\langle\psi|\psi\rangle,\forall \psi\in \dmu(m_o). 
\end{equation}
By definition, $(1+\hat p_b^2)^{-\frac12}:\dmu(m_o)\to\dmu(m_o)$ is surjective. Hence \eqref{eq:boundedmap} implies
\begin{equation}
\left|\langle \psi|(\mathfrak i_{m}\hp^{(m)}\mathfrak i_{m}^{-1}+L_0^2\gamma^2m^2+1)|\psi\rangle\right|\leq C\langle\psi|(1+\hat p_b^2)|\psi\rangle,\forall \psi\in \dmu(m_o). 
\end{equation}
Thus
\begin{equation}\label{eq:upbonded}
T_m(\psi,\psi)+(L_0^2m^2\gamma^2+1)\langle\psi|\psi\rangle\leq C\langle\psi|(1+\hat p_b^2)|\psi\rangle,\forall \psi\in \dmu(m_o).
\end{equation}
Then \eqref{eq:equivalentnorm} is proven by \eqref{eq:lowbonded} and \eqref{eq:upbonded}.
\end{proof}
By Lemma \ref{lmm:independentofm}, for each $m$, $Q(m)$ is indeed the closure of $\dmu(m_o)$ with respect to the norm $\|\cdot\|_+$ given in \eqref{eq:normnew}. Thus one gets $Q(m)=Q(m_o)$ for all $m$. 

\begin{lmm}\label{lmm:analyticity}
Suppose $\delta_b^{(m)}$ and $\varepsilon_b(m)$ to be analytic functions of $m$ at $m_o$ and $\delta_b^{(m_o)}\neq 0$. Then $\overline{T_m}(\psi,\psi)$,
 for each $\psi\in Q(m_o)$,  is an analytic function of $m$ at $m_o$. 
\end{lmm}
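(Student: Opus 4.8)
The plan is to prove analyticity first on the dense core $\dmu(m_o)$, where $\overline{T_m}$ reduces to a finite sum of matrix elements, and then to propagate it to all of $Q(m_o)$ by controlling the approximation uniformly in the (complexified) parameter $m$; the uniformity is the crux, and is where Lemma \ref{lmm:independentofm}, or rather the band structure behind it, enters.

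First I would compute the matrix elements of the transported operator $\mathfrak i_m\hp^{(m)}\mathfrak i_m^{-1}$ in the fixed basis $\{|\varepsilon_b(m_o)+2n\delta_b^{(m_o)}\rangle\}$ of $\hmu^{(\varepsilon_b(m_o))}$. Writing $\mu_n^{(m)}=\varepsilon_b(m)+2n\delta_b^{(m)}$, the definition of $\mathfrak i_m$ gives
\begin{equation*}
\langle\varepsilon_b(m_o)+2n'\delta_b^{(m_o)}|\,\mathfrak i_m\hp^{(m)}\mathfrak i_m^{-1}\,|\varepsilon_b(m_o)+2n\delta_b^{(m_o)}\rangle=\langle\mu_{n'}^{(m)}|\hp^{(m)}|\mu_n^{(m)}\rangle.
\end{equation*}
Using the explicit actions \eqref{eq:psin} of $\hat\beta_{\delta_b}$ and of $\hat p_b$ in the expression \eqref{eq:hpmp} of $\hp^{(m)}$, each such element is a polynomial in $\mu_n^{(m)},\delta_b^{(m)}$ multiplied by $m/\delta_b^{(m)}$ or $1/(\delta_b^{(m)})^2$, and $\hp^{(m)}$ is a band operator so that only the entries with $|n'-n|\le2$ are nonzero. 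Since $\delta_b^{(m)}$ and $\varepsilon_b(m)$ are analytic at $m_o$ and $\delta_b^{(m_o)}\neq0$ makes $1/\delta_b^{(m)}$ analytic there, every matrix element extends to a holomorphic function of $m$ on a complex neighborhood of $m_o$. Hence for $\psi\in\dmu(m_o)$, a finite combination of basis vectors, the quantity $T_m(\psi,\psi)$, and likewise the shifted sesquilinear form $\tilde T_m:=T_m+(L_0^2m^2\gamma^2+1)\langle\cdot|\cdot\rangle$ evaluated at $\psi$, is a finite sum of holomorphic functions, hence holomorphic at $m_o$.

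The main task is to remove the restriction to the core. For this I would first upgrade the boundedness step in the proof of Lemma \ref{lmm:independentofm}: the conjugated operator appearing in \eqref{eq:boundedmap} is a band operator whose entries are the holomorphic functions just described, damped by $(1+\hat p_b^2)^{-1/2}$ on each side; since holomorphic functions are locally bounded and the damping keeps the entries bounded in $n$, a Schur (row-sum) estimate yields a single constant $C$ with
\begin{equation*}
|\tilde T_m(\psi,\phi)|\le C\,\|\psi\|_+\|\phi\|_+,\qquad\forall\,\psi,\phi\in\dmu(m_o),
\end{equation*}
uniformly for $m$ in a complex neighborhood $U$ of $m_o$. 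Given $\psi\in Q(m_o)$, pick $\psi_k\in\dmu(m_o)$ with $\psi_k\to\psi$ in $\|\cdot\|_+$; bilinearity and the uniform bound (extended to $Q(m_o)$ by $\|\cdot\|_+$-continuity) give
\begin{equation*}
|\tilde T_m(\psi,\psi)-\tilde T_m(\psi_k,\psi_k)|\le C\,\|\psi-\psi_k\|_+\big(\|\psi\|_++\|\psi_k\|_+\big),
\end{equation*}
so $\tilde T_m(\psi_k,\psi_k)\to\tilde T_m(\psi,\psi)$ uniformly on $U$. A uniform limit of holomorphic functions is holomorphic, so $\tilde T_m(\psi,\psi)$ is analytic at $m_o$; subtracting the manifestly analytic term $(L_0^2m^2\gamma^2+1)\langle\psi|\psi\rangle$ then gives the analyticity of $\overline{T_m}(\psi,\psi)$ and finishes the proof.

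I expect the principal obstacle to be exactly this uniformity on a complex neighborhood. The variational estimate of Lemma \ref{lmm:independentofm} uses semiboundedness and is therefore available only for real $m$, whereas the notion of analyticity forces one to work with complex $m$, for which $\tilde T_m$ is no longer a positive form and the Cauchy--Schwarz route is unavailable. The remedy is to bound the relevant band operator directly by a Schur-type argument, which holds verbatim for complex $m$ and supplies the single constant $C$ on $U$ needed to pass analyticity from the core $\dmu(m_o)$ to the whole form domain $Q(m_o)$.
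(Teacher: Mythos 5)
Your proposal is correct, and its first half is the same as the paper's: both start from the explicit action of $\mathfrak i_m\hp^{(m)}\mathfrak i_m^{-1}$ on the fixed basis of $\hmu^{(\varepsilon_b^o)}$ (Eq.~\eqref{eq:ihiaction}), the band structure with $|n'-n|\le 2$, and the analyticity of the entries coming from the analyticity of $\delta_b^{(m)}$, $\varepsilon_b(m)$ and from $\delta_b^{(m_o)}\neq 0$. Where you genuinely diverge is the passage from the core $\dmu(m_o)$ to the form domain $Q(m_o)$. The paper takes no limit at all: it observes that in \eqref{eq:ihiaction} the $m$-dependence factors entirely into finitely many scalar analytic coefficients (through $\f(m)$, $\delta_b^{(m)}$ and $m$ itself) multiplying $m$-independent sesquilinear forms built from $\hat p_b$, the identity and the shifts $|\mu_n\rangle\mapsto|\mu_{n\pm1}\rangle,|\mu_{n\pm2}\rangle$; each of these fixed forms is finite and $\|\cdot\|_+$-continuous on $Q(m_o)$, so for every fixed $\psi\in Q(m_o)$ the quantity $\overline{T_m}(\psi,\psi)$ is a finite sum of analytic functions times $m$-independent constants, and analyticity is immediate --- no uniform estimates and no bounds for complex $m$ are ever needed. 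Your route --- analyticity on the core, a Schur (band-matrix) bound giving $|\tilde T_m(\psi,\phi)|\le C\|\psi\|_+\|\phi\|_+$ uniformly on a complex neighborhood, then the Weierstrass theorem on uniform limits of holomorphic functions --- is the standard ``analytic family of forms'' argument; it is longer, but more robust (it would survive $m$-dependence that does not factor into finitely many scalar coefficients), and your Schur estimate incidentally makes rigorous a point the paper leaves implicit in its own boundedness claim \eqref{eq:boundedmap}: finiteness of the matrix entries alone does not imply boundedness of the conjugated operator; it is the band structure together with uniformly bounded entries that does. One small correction to your closing remark: only the lower-bound half of Lemma~\ref{lmm:independentofm} uses semiboundedness; the upper bound \eqref{eq:boundedmap} uses no positivity and complexifies exactly as your Schur argument does --- though the point stands that the paper's proof of the present lemma needs neither.
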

\begin{proof}
The action of $\mathfrak i_{m}\hp^{(m)}\mathfrak i_{m}^{-1}$ on $|\mu_n\rangle$ with $\mu_n=\varepsilon_b^o+2n\delta_b^o$ reads
\begin{equation}\label{eq:ihiaction}
\begin{aligned}
\mathfrak i_{m}\hp^{(m)}\mathfrak i_{m}^{-1}| \mu_n\rangle=&-\frac{1}{4(\delta_b^o)^2}(\f(m)+\hat p_b)(\f(m)+\hat p_b-\gamma\Pl^2\delta_b^o)|\mu_{n+2}\rangle+\frac{2 L_0m\gamma}{2\delta_b^oi}(\f(m)+\hat p_b)|\mu_{n+1}\rangle\\
&+\left(\frac{1}{4(\delta_b^o)^2)}((\f(m)+\hat p_b+\gamma\Pl^2\delta_b^o)^2+((\f(m)+\hat p_b)^2)+\gamma^2(\delta_b^{(m)})^2((\f(m)+\hat p_b-\gamma\Pl^2\b/\delta_b^{(m)})^2\right)|\mu_n\rangle\\
&-\frac{2 L_0 m\gamma}{2\delta_b^oi}(\f(m)+\hat p_b+\gamma\Pl^2\delta_b^o)|\mu_{n-1}\rangle-\frac{1}{4(\delta_b^o)^2}(\f(m)+\hat p_b+2\gamma\Pl^2\delta_b^o)(\f(m)+\hat p_b+\gamma\Pl^2\delta_b^o)|\mu_{n-2}\rangle
\end{aligned}
\end{equation}
where $\f(m):=\varepsilon_b(m)\delta_b^o/\delta_b^{(m)}-\varepsilon_b^o$ is analytic at $m_o$. Given $|\psi\rangle\in Q(m_o)$,  Eq. \eqref{eq:ihiaction} implies that $\langle \psi|\mathfrak i_{m}\hp^{(m)}\mathfrak i_{m}^{-1}|\psi\rangle$ is a finite linear combination of $\langle \psi|\hat p_b^2|\psi\rangle$, $\langle \psi|\hat p_b|\psi\rangle$ and $\langle \psi|\psi\rangle$ with coefficients depending on $m$ analytically. Since $Q(m)$ is the closure of $\dmu(m_o)$ with respect to the norm $\|\cdot\|_+$, both $\langle \psi|\hat p_b^2|\psi\rangle$ and $\langle \psi|\hat p_b|\psi\rangle$ are well-defined for all $\psi\in Q(m)$. Then the analyticity of $\overline{T_m}(\psi,\psi)$ is proven.  
\end{proof}

\section{Proof of equations \eqref{eq:orderlambda} and \eqref{eq:lambdai}}\label{app:proof1}
We notice that
\begin{equation}
\begin{aligned}
\lambda_i^{(k)}=&\sup_{\phi_1,\phi_2,\cdots,\phi_{i-1}\in\hmu^{(\varepsilon_b,k)}}\inf_{\substack{\psi\in\hmu^{(\varepsilon_b,k)};\|\psi\|=1\\ \langle\psi,\phi_n\rangle=0,\forall n=1,2,\cdots,i-1}}\langle\psi,\hat P^{(k)}\overline{\hpm}\hat P^{(k)}\psi\rangle\\
=&\sup_{\phi_1,\phi_2,\cdots,\phi_{i-1}\in\hmu^{(\varepsilon_b,k)}}\inf_{\substack{\psi\in\hmu^{(\varepsilon_b,k)};\|\psi\|=1\\ \langle\psi,\phi_n\rangle=0,\forall n=1,2,\cdots,i-1}}\langle\psi,\hpm\psi\rangle
\end{aligned}
\end{equation}
because of $\hat P^{(k)}\overline{\hpm}\hat P^{(k)}=\hat P^{(k)}\hpm\hat P^{(k)}$, and $\hat P^{(k)}\psi=\psi$ for all $\psi\in\hmu^{(\varepsilon_b,k)}$. 

Given $k'>k$, one has $\hmu^{(\varepsilon_b,k)}\subset \hmu^{(\varepsilon_b,k')}$. Let $\tilde\phi_1,\ \tilde\phi_2,\ \cdots \tilde\phi_{i-1}\in\hmu^{(\varepsilon_b,k')}$ be some vectors such that $\tilde\phi_{i-n},\tilde\phi_{i-n+1},\cdots,\tilde\phi_{i-1}\notin \hmu^{(\varepsilon_b,k)}$. Then we have
\begin{equation}
\begin{aligned}
\inf_{\substack{\psi\in\hmu^{(\varepsilon_b,k)};\|\psi\|=1\\ \langle\psi,\tilde \phi_l\rangle=0,\forall l=1,2,\cdots,i-1}}\langle\psi,\hpm\psi\rangle=\inf_{\substack{\psi\in\hmu^{(\varepsilon_b,k)};\|\psi\|=1\\ \langle\psi,\tilde \phi_l\rangle=0,\forall l=1,2,\cdots,i-n-1}}\langle\psi,\hpm\psi\rangle\leq \lambda_{i-n}^{(k)}\leq \lambda_i^{(k)}.
\end{aligned}
\end{equation}
Therefore, one obtains
\begin{equation}
\begin{aligned}
\lambda_i^{(k)}=&\sup_{\phi_1,\phi_2,\cdots,\phi_{i-1}\in\hmu^{(\varepsilon_b,k)}}\inf_{\substack{\psi\in\hmu^{(\varepsilon_b,k)};\|\psi\|=1\\ \langle\psi,\phi_n\rangle=0,\forall n=1,2,\cdots,i-1}}\langle\psi,\hpm\psi\rangle\\
=&\sup_{\phi_1,\phi_2,\cdots,\phi_{i-1}\in\hmu^{(\varepsilon_b,k')}}\inf_{\substack{\psi\in\hmu^{(\varepsilon_b,k)};\|\psi\|=1\\ \langle\psi,\phi_n\rangle=0,\forall n=1,2,\cdots,i-1}}\langle\psi,\hpm\psi\rangle.
\end{aligned}
\end{equation}
Furthermore, for given $\phi_1,\phi_2,\cdots,\phi_{i-1}\in\hmu^{(\varepsilon_b,k')}$, we have
\begin{equation}
\begin{aligned}
&\{\psi\in\hmu^{(\varepsilon_b,k)}\ \big|\ \|\psi\|=1;\langle\psi,\phi_n\rangle=0,\forall n=1,2,\cdots,i-1\}\\
\subset&\{ \psi\in\hmu^{(\varepsilon_b,k')} \ \big|\  \|\psi\|=1;\langle\psi,\phi_n\rangle=0,\forall n=1,2,\cdots,i-1\}.
\end{aligned}
\end{equation}
Thus, one has
\begin{equation}
\begin{aligned}
&\inf_{\substack{\psi\in\hmu^{(\varepsilon_b,k)};\|\psi\|=1\\ \langle\psi,\phi_n\rangle=0,\forall n=1,2,\cdots,i-1}}\langle\psi,\hpm\psi\rangle\geq \inf_{\substack{\psi\in\hmu^{(\varepsilon_b,k')};\|\psi\|=1\\ \langle\psi,\phi_n\rangle=0,\forall n=1,2,\cdots,i-1}}\langle\psi,\hpm\psi\rangle,
\end{aligned}
\end{equation}
which implies
\begin{equation}
\begin{aligned}
\lambda_i^{(k)}=&\sup_{\phi_1,\phi_2,\cdots,\phi_{i-1}\in\hmu^{(\varepsilon_b,k')}}\inf_{\substack{\psi\in\hmu^{(\varepsilon_b,k)};\|\psi\|=1\\ \langle\psi,\phi_n\rangle=0,\forall n=1,2,\cdots,i-1}}\langle\psi,\hpm\psi\rangle\\
\geq&\sup_{\phi_1,\phi_2,\cdots,\phi_{i-1}\in\hmu^{(\varepsilon_b,k')}}\inf_{\substack{\psi\in\hmu^{(\varepsilon_b,k')};\|\psi\|=1\\ \langle\psi,\phi_n\rangle=0,\forall n=1,2,\cdots,i-1}}\langle\psi,\hpm\psi\rangle=\lambda_i^{(k')}.
\end{aligned}
\end{equation}
Similarly, we have
\begin{equation}
\begin{aligned}
\lambda_i^{(k)}=&\sup_{\phi_1,\phi_2,\cdots,\phi_{i-1}\in\hmu^{(\varepsilon_b,k)}}\inf_{\substack{\psi\in\hmu^{(\varepsilon_b,k)};\|\psi\|=1\\ \langle\psi,\phi_n\rangle=0,\forall n=1,2,\cdots,i-1}}\langle\psi,\hpm\psi\rangle\\
=&\sup_{\phi_1,\phi_2,\cdots,\phi_{i-1}\in\hmu^{(\varepsilon_b)}}\inf_{\substack{\psi\in\hmu^{(\varepsilon_b,k)};\|\psi\|=1\\ \langle\psi,\phi_n\rangle=0,\forall n=1,2,\cdots,i-1}}\langle\psi,\hpm\psi\rangle\\
\geq&\sup_{\phi_1,\phi_2,\cdots,\phi_{i-1}\in\hmu^{(\varepsilon_b)}}\inf_{\substack{\psi\in \overline{\dmu(m)};\|\psi\|=1\\ \langle\psi,\phi_n\rangle=0,\forall n=1,2,\cdots,i-1}}\langle\psi,\hpm\psi\rangle=\omega_i.
\end{aligned}
\end{equation} 
Thus the proof of \eqref{eq:orderlambda}  is completed, and the existence of the limit $\lim_{k\to\infty}\lambda_i^{(k)}$ can be obtained directly from \eqref{eq:orderlambda}. 

\section{The finite cut-off approximation}\label{app:finitecutoff}
\begin{thm}\label{thm:limiteigen}
Each $\lambda_i$ given in \eqref{eq:lambdai} is an eigenvalue of $\overline{\hpm}$. The space $\Lambda_i$  is an eigenspace corresponding to the eigenvalue $\lambda_i$. 
\end{thm}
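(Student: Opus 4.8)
The plan is to show that \emph{every} vector arising as a weak limit in \eqref{eq:weak} is in fact a nonzero eigenvector of $\overline{\hpm}$ with eigenvalue $\lambda_i$; both assertions then follow at once, since $\Lambda_i$ is by definition spanned by such limits and a span of eigenvectors sharing a single eigenvalue is contained in the corresponding eigenspace. Throughout I abbreviate $\overline{\hpm}$ by $\hat H$ and write $\hat H_k:=\hat P^{(k)}\hat H\hat P^{(k)}$. The decisive structural input is that $\hat H$ has purely discrete spectrum, is bounded below, and satisfies $\mu_n(\hat H)\to\infty$; by the standard criterion this means $\hat H$ has compact resolvent. Fixing $c_0$ with $\hat H+c_0\geq 1$ (available from \eqref{eq:hpmg}), the form domain equipped with $\norm{\psi}_+^2:=\langle\psi|(\hat H+c_0)|\psi\rangle$ then embeds compactly into $\hmu^{(\varepsilon_b)}$, since the embedding factors through the compact operator $(\hat H+c_0)^{-1/2}$.

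First I would record a uniform form bound on the truncated eigenvectors. Because $\psi_i^{(k)}\in\hmu^{(\varepsilon_b,k)}\subset\dmu(m)$ is a normalized eigenvector of $\hat H_k$ with eigenvalue $\lambda_i^{(k)}$, one has $\langle\psi_i^{(k)}|\hat H|\psi_i^{(k)}\rangle=\lambda_i^{(k)}$, so $\norm{\psi_i^{(k)}}_+^2=\lambda_i^{(k)}+c_0$. By the monotonicity \eqref{eq:orderlambda} this is bounded above by $\lambda_i^{(k_o(i))}+c_0$, uniformly in $k$. Thus $\{\psi_i^{(k)}\}$ is bounded in the form norm, and the compact embedding makes it precompact in $\hmu^{(\varepsilon_b)}$. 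Consequently any weakly convergent subsequence $\{\psi_i^{(n_l)}\}$ in \eqref{eq:weak} has a norm-convergent subsequence, necessarily with the same weak limit $\psi_i$, so that $\norm{\psi_i}=\lim_l\norm{\psi_i^{(n_l)}}=1$. This nonvanishing of the limit is the step I expect to be the main obstacle: weak limits of unit vectors can vanish in general, and it is precisely the compactness of the resolvent that upgrades weak to norm convergence and rules this out.

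It remains to pass the eigenvalue equation to the limit. Fix $\varphi\in\dmu(m)$ and choose $K$ with $\mathrm{supp}(\varphi)\subset\hmu^{(\varepsilon_b,K)}$. For $n_l\geq K$ one has $\hat P^{(n_l)}\varphi=\varphi$, and since $\psi_i^{(n_l)}\in\dmu(m)\subset D(\hat H)$, a direct computation gives $\langle\psi_i^{(n_l)}|\hat H\varphi\rangle=\langle\hat P^{(n_l)}\hat H\psi_i^{(n_l)}|\varphi\rangle=\lambda_i^{(n_l)}\langle\psi_i^{(n_l)}|\varphi\rangle$. Letting $l\to\infty$, using $\psi_i^{(n_l)}\rightharpoonup\psi_i$, $\lambda_i^{(n_l)}\to\lambda_i$, and that $\hat H\varphi$ is a fixed vector, yields $\langle\psi_i|\hat H\varphi\rangle=\lambda_i\langle\psi_i|\varphi\rangle$ for every $\varphi\in\dmu(m)$, i.e. $\langle\psi_i|\hpm\varphi\rangle=\langle\lambda_i\psi_i|\varphi\rangle$ on $\dmu(m)$.

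Finally I would invoke the essential self-adjointness of $\hpm$ on $\dmu(m)$ established in Appendix \ref{app:selfadjoint}: the identity above says exactly that $\psi_i\in D((\hpm)^*)=D(\overline{\hpm})$ with $\overline{\hpm}\psi_i=\lambda_i\psi_i$. Hence $\psi_i$ is a genuine eigenvector and $\lambda_i\in\sigma(\overline{\hpm})$. Since every generator of $\Lambda_i$ is such an eigenvector and linear combinations of eigenvectors with a common eigenvalue remain eigenvectors, every element of $\Lambda_i$ is an eigenvector with eigenvalue $\lambda_i$, i.e. $\Lambda_i$ is contained in the $\lambda_i$-eigenspace of $\overline{\hpm}$, which completes the proof.
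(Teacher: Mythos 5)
Your proof is correct, and it reaches the theorem by a genuinely different route at the one step that actually matters. The first half of your argument --- passing the truncated eigenvalue relation to the weak limit by testing against finite-support vectors $\varphi\in\dmu(m)$ (using $\hat P^{(k)}\varphi=\varphi$ for large $k$), and then invoking the essential self-adjointness of $\hpm$ to conclude $\psi_i\in D(\overline{\hpm})$ with $\overline{\hpm}\psi_i=\lambda_i\psi_i$ --- is the same as the paper's, cf.\ its Eqs.~\eqref{eq:limktoinfty} and \eqref{eq:psikeigen}. The divergence is in ruling out $\psi_i=0$, which you correctly identify as the main obstacle. The paper does this by hand: it introduces the finite-rank spectral projection $\hat P$ onto the eigenvalues in $(-\infty,\lambda_i]$, shows $\lim_{l\to\infty}\|(1-\hat P)\psi_i^{(n_l)}\|=0$ by exploiting the spectral gap above $\lambda_i$, and then derives a contradiction with $\|\psi_i^{(n_l)}\|=1$ if $\psi_i$ were zero. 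You instead note that $\mu_n(\overline{\hpm})\to\infty$ (already established via \eqref{eq:lowerbounded}) is equivalent to $\overline{\hpm}$ having compact resolvent, so the form domain embeds compactly into $\hmu^{(\varepsilon_b)}$; the uniform form bound $\|\psi_i^{(k)}\|_+^2=\lambda_i^{(k)}+c_0\leq\lambda_i^{(k_o(i))}+c_0$, which follows from the monotonicity \eqref{eq:orderlambda}, then upgrades the weak convergence to norm convergence and gives $\|\psi_i\|=1$ outright. Both arguments rest on the same structural facts (semiboundedness, purely discrete spectrum, finite multiplicities), so there is no circularity in your use of compactness. The trade-off: the paper's argument is more elementary and self-contained, never invoking the compact-resolvent characterization; yours is shorter, more conceptual, and delivers a strictly stronger conclusion as a byproduct --- the truncated eigenvectors $\psi_i^{(n_l)}$ converge to $\psi_i$ in norm, not merely weakly --- which is precisely the kind of convergence statement one wants when justifying the finite-dimensional cut-off scheme used for the numerics.
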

\begin{proof}
By definition, one has
\begin{equation}
\|\hat P^{(k)}(\hpm -\lambda_i)\psi_i^{(k)}\|=\|(\lambda_i^{(k)}-\lambda_i)\psi_i^{(k)}\|\leq |\lambda_i^{(k)}-\lambda_i|,
\end{equation}
which implies
\begin{equation}
\lim_{k\to \infty}\|\hat P^{(k)}(\hpm -\lambda_i)\psi_i^{(k)}\|=0.
\end{equation}
Given an arbitrary $\varphi\in \dmu(m)$, by the definition \eqref{eq:Dbm} of $\dmu(m)$, there exists an integer $N_\varphi$ such that $\varphi\in \hmu^{(\varepsilon_b,k)},\ \forall k \geq N_\varphi$. Thus, for each $\varphi\in \dmu(m)$, one gets
\begin{equation}
\langle\varphi| \hat P^{(k)}(\hpm -\lambda_i)|\psi_i^{(k)}\rangle=\langle\varphi|(\hpm -\lambda_i)|\psi_i^{(k)}\rangle,\forall k\geq N_\varphi
\end{equation}
where we used $\hat P^{(k)}\varphi=\varphi$ for all $k\geq N_\varphi$ and that $\hpm$ is symmetric.  
Hence, we have
\begin{equation}\label{eq:limktoinfty}
0=\lim_{k\to \infty}\langle\varphi| \hat P^{(k)}(\hpm -\lambda_i)|\psi_i^{(k)}\rangle=\langle \varphi|(\hpm -\lambda_i)|\psi_i\rangle,\ \forall \varphi\in \dmu(m).
\end{equation}
By this equation, $\psi_i$ is in the domain of the adjoint of $\hpm$. Since $\hpm$ is essentially self-adjoint, its adjoint is equal to its closure $\overline{\hpm}$. Thus $\psi_i$ is in $\overline{\dmu(m)}$. Moreover, \eqref{eq:limktoinfty} also implies
\begin{equation}\label{eq:psikeigen}
(\overline{\hpm}-\lambda_i)\psi_i=0,
\end{equation}
which ensures either $\psi_i=0$ or that $\psi_i$ is an eigenvector of $\overline{\hpm}$ with the eigenvalue $\lambda_i$. We now show that $\psi_i\neq 0$ and hence  $\psi_i$ can only be an eigenvector of $\overline{\hpm}$. 
Given an eigenvalue $\omega$ of $\overline{\hpm}$, let $|\omega,\alpha\rangle$ be an orthonormal basis of the eigenspace with respect to $\omega$. Define a projection $\hat P$ as
\begin{equation}
\hat P|\psi\rangle:=\sum_{\omega\in(-\infty,\lambda_i]}\sum_\alpha|\omega,\alpha\rangle\langle \omega,\alpha|\psi\rangle .
\end{equation}
Because $\overline{\hpm}$ is bounded from below and each eigenvalue $\omega$ is of finite multiplicity, 
the summation in the RHS consists of only finite terms. Then for each vectors $\psi_i^{(n_l)}$ in \eqref{eq:weak}, we have
\begin{equation}\label{eq:E7}
\begin{aligned}
&\left\|(\overline{\hpm}-\lambda_i)\hat P\psi_i^{(n_l)}-\sum_{\omega\in(-\infty,\lambda_k]}\sum_\alpha|\omega,\alpha\rangle\langle \omega,\alpha|\psi_i\rangle(\omega-\lambda_i) \right \|\\
\leq&\sum_{\omega\in(-\infty,\lambda_i]}\sum_\alpha\left|\left(\langle \omega,\alpha|\psi_i^{(n_l)}\rangle-\langle\omega,\alpha|\psi_i\rangle\right)\right|\left|\omega-\lambda_i \right |.
\end{aligned}
\end{equation}
Taking account of \eqref{eq:E7},  \eqref{eq:weak} and the fact that the summation contains finitely many terms, we obtain
\begin{equation}
\lim_{l\to\infty}\left\|(\overline{\hpm}-\lambda_i)\hat P\psi_i^{(n_l)}-\sum_{\omega\in(-\infty,\lambda_i]}\sum_\alpha|\omega,\alpha\rangle\langle \omega,\alpha|\psi_i\rangle(\omega-\lambda_i) \right \|=0,
\end{equation}
i.e.
\begin{equation}
\lim_{l\to \infty}(\overline{\hpm}-\lambda_i)\hat P\psi_i^{(n_l)}=\sum_{\omega\in(-\infty,\lambda_i]}\sum_\alpha|\omega,\alpha\rangle\langle \omega,\alpha|\psi_i\rangle(\omega-\lambda_i) .
\end{equation}
Moreover, because of \eqref{eq:psikeigen} and
\begin{equation}
\begin{aligned}
\sum_{\omega\in(-\infty,\lambda_i]}\sum_\alpha|\omega,\alpha\rangle\langle \omega,\alpha|\psi_i\rangle(\omega-\lambda_i)
=\sum_{\omega\in(-\infty,\lambda_i]}\sum_\alpha|\omega,\alpha\rangle\langle \omega,\alpha|(\overline{\hpm}-\lambda_i)\psi_i\rangle , 
\end{aligned}
\end{equation}
we finally have
\begin{equation}
\lim_{l\to \infty}(\overline{\hpm}-\lambda_i)\hat P\psi_i^{(n_l)}=0.
\end{equation}
Furthermore, by using
\begin{equation}
\|\hat P^{(n_l)}(\overline{\hpm}-\lambda_i)\psi_i^{(n_l)}\|=\|(\lambda_i^{(n_l)}-\lambda_i)\psi_i^{(n_l)}\|\leq |\lambda_i^{(n_l)}-\lambda_i|,
\end{equation}
one has 
\begin{equation}\label{eq:E13}
\lim_{l\to\infty}\|\hat P^{(n_l)}(\overline{\hpm}-\lambda_i)\psi_i^{(n_l)}\|=0.
\end{equation}
Combining \eqref{eq:E13} with the inequality  
\begin{equation}
\begin{aligned}
&\||\hat P^{(n_l)}(\overline{\hpm}-\lambda_i)(1-\hat P)\psi_i^{(n_l)}\|\leq  \|\hat P^{(n_l)}(\overline{\hpm}-\lambda_i)\psi_i^{(n_l)}\|+\|(\overline{\hpm}-\lambda_i)\hat P\psi_i^{(n_l)}\|,
\end{aligned}
\end{equation}
we finally obtain
\begin{equation}
\lim_{l\to\infty}\|\hat P^{(n_l)}(\overline{\hpm}-\lambda_i)(1-\hat P)\psi_i^{(n_l)}\|\leq\lim_{l\to \infty}\|(\overline{\hpm}-\lambda_i)\hat P\psi_i^{(n_l)}\|=0,
\end{equation}
which implies
\begin{equation}\label{eq:someeq}
\lim_{l\to\infty}\hat P^{(n_l)}(\overline{\hpm}-\lambda_i)(1-\hat P)\psi_i^{(n_l)}=0.
\end{equation}
Defining $\tilde\omega:=\inf\{\omega\in\sigma(\hpm),\omega>\lambda_i\} $, we have 
\begin{equation}\label{eq:lastequation0}
\begin{aligned}
&\langle \hat P^{(n_l)}(\overline{\hpm}-\lambda_i)(1-\hat P)\psi_i^{(n_l)}|\psi_i^{(n_l)}\rangle=\langle(1-\hat P)(\overline{\hpm}-\lambda_i)(1-\hat P)\psi_i^{(n_l)}|\psi_i^{(n_l)}\rangle\\
\geq &(\tilde\omega-\lambda_i)\|(1-\hat P)\psi_i^{(n_l)}\|,
\end{aligned}
\end{equation}
where the last inequality is resulted from
$
\langle (\overline{\hpm}-\lambda_k)(1-\hat P)\varphi,\varphi\rangle\geq (\tilde\omega-\lambda_i) \langle (1-\hat P)\varphi,\varphi\rangle
$ for all $\varphi\in \overline{\dmu(m)}$.
The combination of \eqref{eq:lastequation0} and \eqref{eq:someeq} leads to
\begin{equation}
\lim_{l\to\infty}\| (1-\hat P)\psi_i^{(n_l)}\|=0.
\end{equation}
Furthermore, because of 
\begin{equation}
\|\psi_i^{(n_l)}-\psi_i\|\leq \|\hat P\psi_i^{(n_l)}-\hat P\psi_i\|+\|(1-\hat P)\psi_i^{(n_l)}\|+\|(1-\hat P)\psi_i\|
\end{equation}
we obtain
\begin{equation}\label{eq:E20}
\lim_{l\to\infty}\|\psi_i^{(n_l)}-\psi_i\|\leq \lim_{l\to\infty}\|\hat P\psi_i^{(n_l)}-\hat P\psi_i\|+\|(1-\hat P)\psi_i\|.
\end{equation}
The first term in the RHS of \eqref{eq:E20} satisfies 
\begin{equation}\label{eq:E21}
\begin{aligned}
\left\|\hat P\psi_i^{(n_l)}-\hat P\psi_i\right \|\leq\sum_{\omega\in(-\infty,\lambda_i]}\sum_\alpha\left|\left(\langle \omega,\alpha|\psi_k^{(n_l)}\rangle-\langle \omega,\alpha|\psi_k\rangle\right)\right|.
\end{aligned}
\end{equation}
Taking account of \eqref{eq:weak} and the fact that the summation in the RHS contains finite terms, Eq. \eqref{eq:E21} implies
\begin{equation}
\lim_{l\to \infty} \left\|\hat P\psi_i^{(n_l)}-\hat P\psi_i\right \|=0,
\end{equation}
Therefore, we have
\begin{equation}\label{eq:last}
\lim_{l\to\infty}\|\psi_i^{(n_l)}-\psi_i\|\leq\|(1-\hat P)\psi_i\|\leq \|\psi_i\|.
\end{equation}
which implies that $|\psi_i\rangle\neq 0$. Otherwise, one would get
\begin{equation}
0=\lim_{l\to\infty}\|\psi_i^{(n_l)}-\psi_i\|=\lim_{l\to\infty}\|\psi_i^{(n_l)}\|,
\end{equation} 
which is contradictory to $\|\psi_i^{(n)}\|=1$. Therefore, according to \eqref{eq:psikeigen}, $\lambda_i$ is an eigenvalue of $\overline{\hpm}$ and $|\psi_i\rangle$ is a corresponding eigenvector.
\end{proof}
The above proof is inspired by the works \cite{levitin2004spectral,lewin2009spectral} and \cite[Section VIII.7]{reed2003methods}. 
 \begin{thm}\label{thm:inverse}
Given $\lambda_i$ and $\lambda_{i+1}$ as defined in \eqref{eq:lambdai},  if $\lambda_i\neq\lambda_{i+1}$, i.e. $\lambda_i<\lambda_{i+1}$, one has
\begin{equation}
\sigma(\overline{\hpm})\cap(\lambda_i,\lambda_{i+1})=\emptyset.
\end{equation}
 \end{thm}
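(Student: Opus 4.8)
The plan is to argue by contradiction, exploiting the variational (min--max) characterisation of the cut-off eigenvalues $\lambda_{i+1}^{(k)}$ together with the two facts recorded in \eqref{eq:orderlambda}, namely $\lambda_j^{(k)}\geq\omega_j$ and $\lambda_j^{(k)}\downarrow\lambda_j$ as $k\to\infty$. Suppose there were a point $\omega\in\sigma(\overline{\hpm})\cap(\lambda_i,\lambda_{i+1})$. Since $\overline{\hpm}$ has purely discrete spectrum, $\omega$ is an eigenvalue of finite multiplicity. First I would show that this forces at least $i+1$ eigenvalues of $\overline{\hpm}$ (counted with multiplicity) to lie at or below $\omega$: the bounds $\omega_j\leq\lambda_j\leq\lambda_i<\omega$ for $j\leq i$ (using that the $\lambda_j$ are non-decreasing in $j$) already supply $i$ of them, and $\omega$ itself supplies one more. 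Hence I can choose orthonormal eigenvectors $\phi_1,\dots,\phi_{i+1}\in\overline{\dmu(m)}$ whose eigenvalues are all $\leq\omega$, so that on $W:=\mathrm{span}\{\phi_1,\dots,\phi_{i+1}\}$ the Rayleigh quotient obeys $\langle\psi|\overline{\hpm}|\psi\rangle\leq\omega\|\psi\|^2$.

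The next step is to transport $W$ into one of the finite-dimensional spaces $\hmu^{(\varepsilon_b,k)}$. Because $\hpm$ is essentially self-adjoint on $\dmu(m)$, the set $\dmu(m)=\bigcup_k\hmu^{(\varepsilon_b,k)}$ is a core for $\overline{\hpm}$, so each $\phi_j$ can be approximated in the graph norm by a vector $\tilde\phi_j\in\hmu^{(\varepsilon_b,k)}$, with $k$ large enough (the maximum of the finitely many cut-offs needed for the individual $\phi_j$). For $k$ large the $\tilde\phi_j$ are nearly orthonormal, hence linearly independent, and span an $(i+1)$-dimensional subspace $V_0\subset\hmu^{(\varepsilon_b,k)}$. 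Using the continuity of the sesquilinear form $(\psi,\phi)\mapsto\langle\psi|\overline{\hpm}|\phi\rangle$ with respect to the graph norm, together with the near-orthonormality of the $\tilde\phi_j$, I would establish $\max_{\psi\in V_0,\,\|\psi\|=1}\langle\psi|\overline{\hpm}|\psi\rangle\leq\omega+\varepsilon$, where $\varepsilon$ can be made arbitrarily small by enlarging $k$.

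The contradiction then follows from Courant--Fischer applied to the cut-off matrix. Since $\hat P^{(k)}\psi=\psi$ for $\psi\in\hmu^{(\varepsilon_b,k)}$, the Rayleigh quotient of $\hat P^{(k)}\overline{\hpm}\hat P^{(k)}$ on $V_0$ coincides with $\langle\psi|\overline{\hpm}|\psi\rangle$, so the min--max formula for its $(i+1)$-th eigenvalue gives $\lambda_{i+1}^{(k)}\leq\max_{\psi\in V_0,\,\|\psi\|=1}\langle\psi|\overline{\hpm}|\psi\rangle\leq\omega+\varepsilon$. Choosing $\varepsilon<\lambda_{i+1}-\omega$ yields $\lambda_{i+1}^{(k)}<\lambda_{i+1}$, contradicting $\lambda_{i+1}^{(k)}\geq\lambda_{i+1}$ from \eqref{eq:orderlambda}. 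Hence no such $\omega$ exists and $\sigma(\overline{\hpm})\cap(\lambda_i,\lambda_{i+1})=\emptyset$. Equivalently, the argument shows $\lambda_{i+1}=\omega_{i+1}$, i.e. the cut-off scheme does not overshoot the true $(i+1)$-th eigenvalue, so the limits $\lambda_j$ exhaust the spectrum.

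I expect the main obstacle to be the second step: controlling the Rayleigh quotient on the perturbed subspace $V_0$. Approximating several orthonormal eigenvectors simultaneously inside a single $\hmu^{(\varepsilon_b,k)}$, while keeping the Gram matrix $\langle\tilde\phi_j|\tilde\phi_l\rangle$ close to the identity and the off-diagonal form entries $\langle\tilde\phi_j|\overline{\hpm}|\tilde\phi_l\rangle$ small, requires a careful quantitative perturbation estimate. By contrast, the remaining steps are routine applications of the min--max principle and of the monotonicity \eqref{eq:orderlambda} already proven in Appendix \ref{app:proof1}.
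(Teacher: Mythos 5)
Your argument is correct, but it takes a genuinely different route from the paper's. You proceed variationally: assuming $\omega\in\sigma(\overline{\hpm})\cap(\lambda_i,\lambda_{i+1})$, you combine the discreteness of the spectrum with $\omega_j\le\lambda_j\le\lambda_i<\omega$ for $j\le i$ to produce $i+1$ orthonormal eigenvectors whose Rayleigh quotients are $\le\omega$, approximate them in graph norm inside a single $\hmu^{(\varepsilon_b,k)}$ (valid because $\dmu(m)=\bigcup_k \hmu^{(\varepsilon_b,k)}$ is a core for $\overline{\hpm}$), and then invoke Courant--Fischer for the cut-off matrix to force $\lambda_{i+1}^{(k)}\le\omega+\varepsilon<\lambda_{i+1}$, contradicting the monotonicity $\lambda_{i+1}^{(k)}\ge\lambda_{i+1}$ from \eqref{eq:orderlambda}. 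The paper instead proves a resolvent bound: for any $(a,b)$ with $\lambda_i<a<b<\lambda_{i+1}$ it sets $z=\frac{a+b}{2}+i\,\frac{a-b}{2}$, notes that for $\varphi\in\dmu(m)$ both $\varphi$ and $(\hpm-z)\varphi$ lie in some $\hmu^{(\varepsilon_b,n)}$, so the resolvent of $\overline{\hpm}$ agrees on such vectors with that of the cut-off matrix, bounds the latter by $\sqrt{2}/(b-a)$ using the absence of cut-off eigenvalues in $(a,b)$, extends the bound by density (essential self-adjointness makes $(\hpm-z)\dmu(m)$ dense for $z\notin\mathbb{R}$), and concludes from the spectral theorem that $(a,b)\cap\sigma(\overline{\hpm})=\emptyset$. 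The trade-offs: the paper's route needs no eigenvector perturbation theory and never uses that $\sigma(\overline{\hpm})$ is purely discrete, whereas your route is more elementary (no complex resolvent or spectral-radius argument) and yields the sharper byproduct $\lambda_{i+1}=\omega_{i+1}$, i.e.\ the cut-off eigenvalues converge to the true eigenvalues in exact order. The obstacle you flag---keeping the Gram matrix near the identity while keeping the form matrix near $\mathrm{diag}(\nu_1,\dots,\nu_{i+1})$---is genuine but routine for finitely many vectors, since graph-norm closeness controls both $\langle\tilde\phi_j|\tilde\phi_l\rangle$ and $\langle\tilde\phi_j|\overline{\hpm}|\tilde\phi_l\rangle$; and if you wished to avoid relying on discreteness altogether, you could replace the eigenvector at $\omega$ by a unit vector in the range of the spectral projection $E_{(\omega-\epsilon,\omega+\epsilon)}(\overline{\hpm})$.
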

 \begin{proof}
Consider an interval $(a,b)$ with $\lambda_i<a<b<\lambda_{i+1}$. By definition of $\lambda_i$ and \eqref{eq:orderlambda}, there exists an integer $\tilde N$ such that 
 \begin{equation}
 \lambda_i^{(k)}\leq a,\ \forall k\geq \tilde N.
 \end{equation}
Then for an eigenvalue $\lambda_{i'}^{(k)}$ of $\hat P^{(k)}\hpm\hat P^{(k)}$ with $k\geq \tilde N$,
\begin{itemize}
\item[(i)] if $i'\leq i$, one has
 \begin{equation}
 \lambda_{i'}^{(k)}\leq \lambda_i^{(k)}\leq a.
 \end{equation}
 \item[(ii)] if $i'>i$, or equivalently $i'\geq i+1$, one has
 \begin{equation}
 \lambda_{i'}^{(k)}\geq \lambda_{i+1}^{(k)}\geq \lambda_i\geq b.
 \end{equation}
\end{itemize}
The above analysis indicates
\begin{equation}\label{eq:emptyinter}
\sigma(\hat P^{(k)}\hpm\hat P^{(k)})\cap(a,b)=\emptyset,\ \forall k\geq \tilde N
\end{equation} 
 where $\sigma(\hat P^{(k)}\hpm\hat P^{(k)})$ denote, as usual, the set of eigenvalues of  $\hat P^{(k)}\hpm\hat P^{(k)}$. Let $z$ be the complex number 
 \begin{equation}
 z=\frac{a+b}{2}+i\frac{a-b}{2}.
 \end{equation}
 Given $|\varphi\rangle\in\dmu(m)$, one has $|\psi\rangle:=(\hpm-z) |\varphi\rangle\in \dmu(m)$ by the expression of $\hpm$. Hence there exists a large integer $n\geq\tilde N$ such that 
 \begin{equation}\label{eq:projectinvariant}
 \hat P^{(n)}|\varphi\rangle=|\varphi\rangle,\ \hat P^{(n)}|\psi\rangle=|\psi\rangle,\ \hat P^{(n)}\hpm|\varphi\rangle=\hpm|\varphi\rangle.
 \end{equation}
 Then one has
 \begin{equation}
|\psi\rangle=(\hpm-z)|\varphi\rangle=(\hat P^{(n)}\hpm\hat P^{(n)}-z)|\varphi\rangle,
 \end{equation}
 which leads to 
 \begin{equation}
 \begin{aligned}
 (\hpm-z)^{-1}|\psi\rangle=(\hat P^{(n)}\hpm\hat P^{(n)}-z)^{-1}|\psi\rangle.
 \end{aligned}
 \end{equation}
 Because of $\hat P^{(n)} \psi=\psi$, i.e. $\psi\in\hmu^{(n)}$, we have
 \begin{equation}
 \|(\hat P^{(n)}\hpm\hat P^{(n)}-z)^{-1}\psi\|^2=\sum_{i'=1}^{2n+1}|(\lambda_{i'}^{(n)}-z)^{-1}|^2|\langle\psi_{i'}^{(n)}|\psi\rangle|^2
 \end{equation}
 where $\psi_{i'}^{(n)}$ is the normalized eigenvector of $\hat P^{(n)}\hpm\hat P^{(n)}$ corresponding to the eigenvalue $\lambda_{i'}^{(n)}$.  
 According to the inequality $\lambda_1^{(n)}\leq \lambda_2^{(n)}\leq\cdots\leq \lambda_k^{(n)}\leq a\leq b\leq \lambda_{k+1}^{(n)}\leq\cdots\leq\lambda_n^{(n)}$, we have that
 \begin{equation}
 |(\lambda_{i'}^{(n)}-z)^{-1}|^2\leq \frac{2}{(a-b)^2}
 \end{equation}
 which implies
 \begin{equation}
 \|(\hpm-z)^{-1}\psi\|^2\leq \frac{2}{(a-b)^2}\sum_{i'=1}^{2n+1}|\langle\psi_{i'}^{(n)}|\psi\rangle|^2= \frac{2}{(a-b)^2}\|\psi\|^2,\ \forall\, |\psi\rangle \in\dmu(m). 
 \end{equation}
 Thus, one has
 \begin{equation}
 \|(\overline{\hpm}-z)^{-1}\psi\|^2\leq \frac{2}{(a-b)^2}\|\psi\|^2,\ \forall |\psi\rangle\in\overline{\dmu(m)}. 
 \end{equation}
As a consequence, 
\begin{equation}\label{eq:spectralradiu}
\rho((\overline{\hpm}-z)^{-1})\leq  \frac{\sqrt{2}}{b-a}
\end{equation}
where $\rho((\overline{\hpm}-z)^{-1})$ is the spectral radius of $(\overline{\hpm}-z)^{-1}$.  Because of the self-adjointness of $\overline{\hpm}$, the spectrum of $(\overline{\hpm}-z)^{-1}$ is
\begin{equation}
\sigma((\overline{\hpm}-z)^{-1})=\{(\lambda-z)^{-1},\lambda\in\sigma(\hpm)\}.
\end{equation}
Therefore, \eqref{eq:spectralradiu} leads to
\begin{equation}
(a,b)\cap\sigma(\hpm)=\emptyset.
\end{equation}
which is true for any $(a,b)\subset (\lambda_i,\lambda_{i+1})$. Hence, one obtains 
\begin{equation}
(\lambda_i,\lambda_{i+1})\cap\sigma(\hpm)=\emptyset.
\end{equation}
 \end{proof}

\section{The effective dynamics}\label{app:dynmceff}
The effective Hamiltonian constraint reads
\begin{equation}
H=p_\varphi^2-\frac{4\pi}{G L_0^2\gamma^2}\left(\frac{2}{\delta_b\delta_c} p_b\sin(\delta_b b) p_c\sin(\delta_c c) +\frac{1}{\delta_b^2}p_b^2\sin^2(\delta_b b)+\gamma^2p_b^2\right)=:p_\varphi^2-\frac{4\pi}{GL_0^2\gamma^2}\h.
\end{equation}
As $p_c\sin(\delta_c c)=\gamma m L_0\delta_c$ is a constant of motion, it is sufficient to consider the following Hamiltonian constraint for the evolution of $p_b$, 
\begin{equation}\label{eq:F2}
H^{(m)}=p_\varphi^2-\frac{4\pi}{GL_0^2\gamma^2}\left(\frac{2 \gamma m L_0}{\delta_b}p_b\sin(\delta_b b)+\frac{1}{\delta_b^2}p_b^2\sin^2(\delta_b b)+\gamma^2 p_b^2\right)=:p_\varphi^2-\frac{4\pi}{GL_0^2\gamma^2}\h^{(m)}.
\end{equation}
 The
evolution of $y:=p_b\sin(\delta_b b)$ with respect to $\varphi$ is given by
\begin{equation}\label{eq:evoy}
\begin{aligned}
&\frac{\dd y}{\dd \varphi}=\frac{\sqrt{4\pi}}{\sqrt{G} L_0 \gamma} \{y,\sqrt{\h^{(m)}}\}
=\frac{4\pi}{G L_0^2 \gamma^2}\frac{1}{p_\varphi}p_b^2\gamma^3\delta_b\cos(\delta_b b)=\pm\frac{4\pi \delta_b\gamma}{G L_0^2} \frac{1}{p_\varphi}\sqrt{p_b^4-p_b^2y^2}.
\end{aligned}
\end{equation}
The Hamiltonian constraint $H^{(m)}=0$ can also be written as
\begin{equation}\label{eq:pby}
p_b^2=-\frac{y^2}{\gamma^2\delta_b^2}-\frac{2 L_0 m y}{\gamma\delta_b}+\frac{G L_0^2 p_\varphi^2}{4\pi}
\end{equation}
which, together with \eqref{eq:evoy}, gives
\begin{equation}\label{eq:dydphi}
\frac{\dd y}{\dd \varphi}=\pm\frac{4\pi \sqrt{1+\gamma^2\delta_b^2}}{G L_0^2\gamma\delta_b} \frac{ 1}{p_\varphi}\sqrt{(y-y_-)(y-y_-')(y-y_+')(y-y_+)}
\end{equation}
with
\begin{equation}
\begin{aligned}
y_\pm&=L_0\gamma\delta_bm\left(-1\pm \sqrt{1+\frac{G p_\varphi^2}{4\pi m^2}}\right)\\
y_\pm'&=\frac{L_0\gamma\delta_b m}{1+\gamma^2\delta_b^2}\left(-1\pm\sqrt{1+\frac{G p_\varphi^2(1+\gamma^2\delta_b^2)}{4\pi m^2}}\right).
\end{aligned}
\end{equation}
Moreover, according to \eqref{eq:pby}, the maximal value of $p_b$ along its dynamical trajectory is
\begin{equation}
(p_b^{\rm max})^2=L_0^2m^2(1+\frac{G p_\varphi^2}{4\pi m^2})
\end{equation}
with which $y_\pm$ and $y_\pm'$ can be rewritten as
\begin{equation}\label{eq:F8}
\begin{aligned}
y_\pm&=L_0\gamma\delta_bm\left(-1\pm \frac{p_b^{\rm max}}{L_0m}\right)\\
y_\pm'&=\frac{L_0\gamma\delta_b m}{1+\gamma^2\delta_b^2}\left(-1\pm\sqrt{(1+\gamma^2\delta_b^2)\frac{(p_b^{\rm max})^2}{L_0^2m^2}-\gamma^2\delta_b^2}\right).
\end{aligned}
\end{equation}
Eq. \eqref{eq:F8} can be used to fix the dynamical solution by the initial data of $p_b^{\rm max}$.

Solution to \eqref{eq:dydphi} is 
\begin{equation}\label{eq:int}
\varphi=\pm\frac{G L_0^2\gamma\delta_b}{4\pi \sqrt{1+\gamma^2\delta_b^2}} p_\varphi \int_{y_a}^{y_b}\frac{\dd y}{\sqrt{(y-y_-)(y-y_-')(y-y_+')(y-y_+)}}
\end{equation}
As $p_b^2=-(y-y_-)(y-y_+)\geq 0$ and $y_-\leq y_-'\leq y_+'\leq y_+$, we have
\begin{equation}\label{eq:yrangle}
y_a\in [y_-',y_+'],\ y_b\in [y_-',y_+'].
\end{equation}
Moreover,  because \eqref{eq:pby} can be written as
\begin{equation}
p_b^2=-\frac{1}{\gamma^2\delta_b^2}(y-y_-)(y-y_+),
\end{equation}
\eqref{eq:yrangle} implies that $p_b$ cannot reach $0$ and will bounce at
\begin{equation}
p_b^{(\pm)}=-(y_\pm'-y_-)(y_\pm'-y_+).
\end{equation}

In order to calculate the integral \eqref{eq:int}, we define
\begin{equation}
t(y)=\sqrt{\frac{y_+'-y_-}{y_+'-y_-'}}\sqrt{\frac{y-y_-'}{y-y_-}}.
\end{equation}
Then \eqref{eq:int} becomes 
\begin{equation}
\begin{aligned}
\varphi=\pm \frac{G L_0^2\gamma\delta_b}{2\pi \sqrt{1+\gamma^2\delta_b^2}} p_\varphi\sqrt{\frac{1}{(y_+'-y_-)(y_+-y_-')}} \int_{t_a}^{t_b}\frac{\dd t}{\sqrt{(1-t^2)(1-k^2 t^2)}}.
\end{aligned}
\end{equation}
By choosing the initial data $\varphi_0=\varphi(y_-')$, we finally have
\begin{equation}
\varphi(y)-\varphi_0=\pm\frac{G L_0^2\gamma\delta_b}{2\pi \sqrt{1+\gamma^2\delta_b^2}} p_\varphi\sqrt{\frac{1}{(y_+'-y_-)(y_+-y_-')}}  F(\arcsin(t(y))|k),
\end{equation}
where $F(x|k)$ is the elliptic integral	of the first type.

\end{document}